\newtheorem{theorem}{Theorem}
\newtheorem{corollary}{Corollary}
\newtheorem{lemma}{Lemma}
\newtheorem{assumption}{Assumption}
\newcommand{\cmark}{\ding{51}}%
\newcommand{\xmark}{\ding{55}}%
\begin{document}

% \title{Taming Subnet-Drift in D2D-Enabled Fog Learning: A Hierarchical Gradient Tracking Approach}
\title{A Hierarchical Gradient Tracking Algorithm for Mitigating Subnet-Drift in Fog Learning Networks}

\author{Evan Chen,~\IEEEmembership{Student Member,~IEEE},
~Shiqiang Wang,~\IEEEmembership{Fellow,~IEEE},
\\ Christopher G. Brinton,~\IEEEmembership{Senior Member,~IEEE}
\thanks{This work has been accepted for publication in IEEE/ACM Transactions on Networking (ToN).}
\thanks{Evan Chen and Christopher G. Brinton are with the Elmore Family School of Electrical and Computer Engineering, Purdue University, West Lafayette, IN, 47907, USA. E-mail: \{chen4388,cgb\}@purdue.edu.}% <-this % stops a space
\thanks{Shiqiang Wang is with the Department of Computer Science, University of Exeter, EX4 4RN, UK. E-mail: shiqiang.wang@ieee.org.}
\thanks{A preliminary version of this work appeared in the 2024 IEEE Conference on Computer Communications (INFOCOM) \cite{chen2024taming}.}
\thanks{This work was supported in part by the National Science Foundation (NSF) under grants CPS-2313109 and CNS-2212565, by DARPA under grant D22AP00168, by the Office of Naval Research (ONR) under grant N000142212305, and by the Air Force Office of Scientific Research (AFOSR) under grant FA9550-24-1-0083.}}

% The paper headers
% \markboth{Journal of \LaTeX\ Class Files,~Vol.~14, No.~8, August~2021}%
% {Shell \MakeLowercase{\textit{et al.}}: A Sample Article Using IEEEtran.cls for IEEE Journals}

% \IEEEpubid{0000--0000/00\$00.00~\copyright~2021 IEEE}

\maketitle

\begin{abstract}
Federated learning (FL) encounters scalability challenges when implemented over fog networks that do not follow FL's conventional star topology architecture. Semi-decentralized FL (SD-FL) has proposed a solution for device-to-device (D2D) enabled networks that divides model cooperation into two stages: at the lower stage, D2D communications is employed for local model aggregations within subnetworks (subnets), while the upper stage handles device-server (DS) communications for global model aggregations. However, existing SD-FL schemes are based on gradient diversity assumptions that become performance bottlenecks as data distributions become more heterogeneous. In this work, we develop semi-decentralized gradient tracking ({\tt SD-GT}), the first SD-FL methodology that removes the need for such assumptions by incorporating tracking terms into device updates for each communication layer. Our analytical characterization of {\tt SD-GT} reveals  upper bounds on convergence for non-convex, convex, and strongly-convex problems. We show how the bounds enable the development of an optimization algorithm that navigates the performance-efficiency trade-off by tuning subnet sampling rate and D2D rounds for each global training interval.
%We employ the resulting bounds in the development of a co-optimization algorithm for optimizing subnet sampling rates and D2D rounds according to a performance-efficiency trade-off. 
Our subsequent numerical evaluations demonstrate that {\tt SD-GT} obtains substantial improvements in trained model quality and communication cost relative to baselines in SD-FL and gradient tracking on several datasets.
\end{abstract}

\begin{IEEEkeywords}
Fog learning, semi-decentralized FL, device-to-device (D2D) communications, federated learning, gradient tracking, communication efficiency
\end{IEEEkeywords}

\section{Introduction}
\noindent Federated learning (FL) has emerged as a promising technique for distributed machine learning (ML) over networked systems \cite{kairouz2021advances}. FL aims to solve problems of the following form:
\begin{align}
    \min_{x\in \mathbb{R}^d}f(x) &= \frac{1}{n}\sum_{i=1}^n f_i(x)\\
    \textrm{for } f_i(x) &= \mathbb{E}_{\xi_i \sim \mathcal{D}_i}f_i(x;\xi_i),
\end{align}
where $n$ is the total number of clients (typically edge devices) in the system, $f_i(x)$ is the local ML loss function computed at client $i$ for model parameters $x \in \mathbb{R}^d$, $\mathcal{D}_i$ is the local data distribution at client $i$, and $\xi_i$ is a random sample from $\mathcal{D}_i$. 

\begin{figure}
% \centerline{\includegraphics[width=0.5\textwidth]{Images/image3.pdf}}
\centerline{\includegraphics[width=0.5\textwidth]{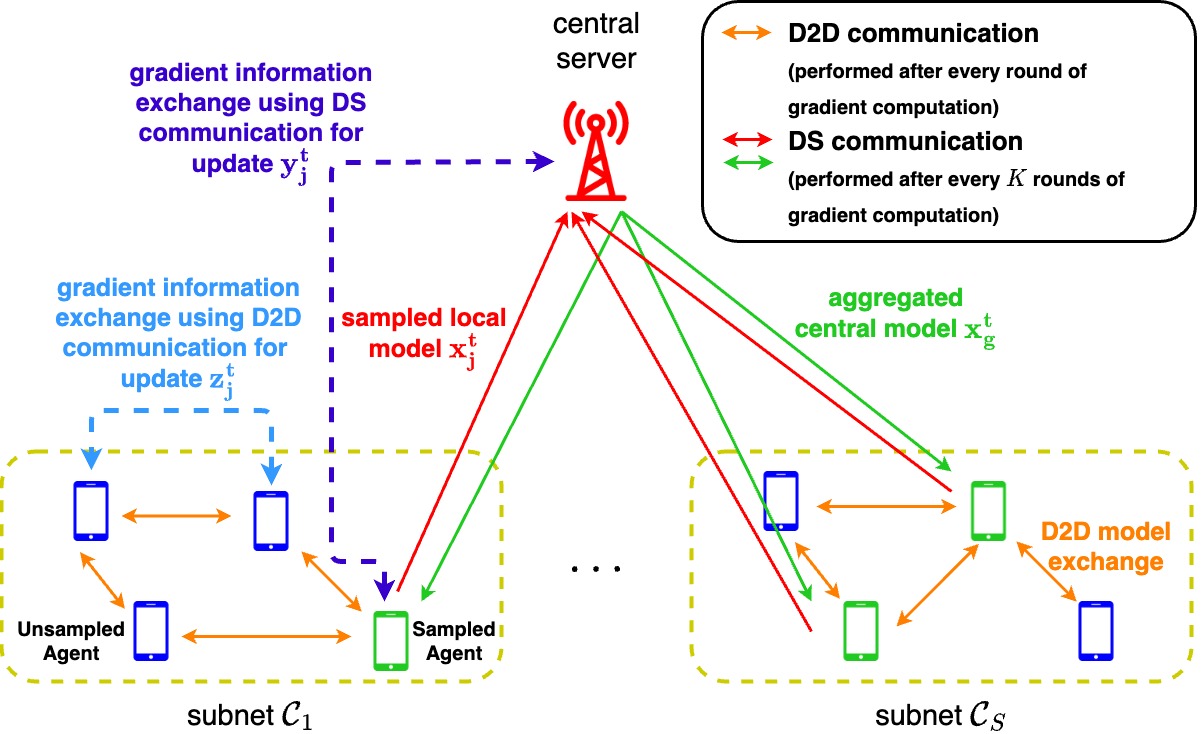}}
\caption{Illustration of semi-decentralized FL with gradient tracking. Clients in each subnet communicate via iterative low-cost D2D communications to conduct local aggregations. Once they have converged towards a consensus within the subnet, the central server conducts a global aggregation across sampled devices using DS communication. Two separate terms related to gradient tracking are maintained, corresponding to within subnet and between subnet gradient information, respectively.}
\vspace{-0.2in}
\label{fig1}
\end{figure}

Conventionally, FL employs a two-step iterative algorithm to solve this optimization: (i) \textit{local model update}, where gradient information computed on the local device dataset is used to update the local model, and (ii) \textit{global model aggregation}, where a central server forms a consensus model across all devices. The server is assumed to be connected to each device over a star topology. In wireless networks, however, device-server (DS) communications for the global aggregation step can be expensive, especially for large ML models over long DS distances. Much research in FL has been devoted to improving the communication efficiency of this step, with typical approaches including model sparsification/quantization \cite{wang2022federated,amiri2020federated}, device sampling~\cite{wang2024device}, and aggregation frequency minimization \cite{karimireddy2020scaffold,mishchenko2022proxskip}.

\IEEEpubidadjcol
Recent research has considered how decentralizing FL's client-server star topology can improve communication overhead, e.g., by introducing more localized communications wherever possible. In the extreme case of severless FL, model aggregations are conducted entirely through short range device-to-device (D2D) communications \cite{koloskova2019decentralized,lian2017can,zehtabi2022decentralized}. More generally, \textit{fog learning} \cite{hosseinalipour2020federated,nguyen2022fedfog,hosseinalipour2022multi,brinton2025key} considers distributing FL over fog computing architectures. Then fog computing, a hierarchy of network elements separates the edge and the cloud, which makes the learning process more complex, as it must consider communication between computing nodes within the same level and spanning multiple levels of the hierarchy.

%complementing DS interactions with more localized communications among edge devices and fog nodes. This encompasses
%For other settings where a main server still exists despite being costly to use, fog learning has emerged to combine the advantages of centralized and decentralized model aggregations \cite{hosseinalipour2020federated}. Fog learning is a multi-layer hybrid learning framework where multiple-hierarchical layers of aggregation is performed to gather information to a central server. In this paper, we will focus on a specific type of fog learning architecture, semi-decentralized FL (SD-FL), which is composed of two hierarchical network layers \cite{lin2021semi}.

%Under this network structure, we are able to mitigate the problem that the central server can not simultaneously communicate with large amount of clients while also using the advantage that the central server is able to link to all clients in the network.

\subsection{Semi-Decentralized FL}
\label{ssec:examples}
\textit{Semi-decentralized FL} (SD-FL) has emerged as a canonical architecture of fog learning~\cite{lin2021semi,yemini2022semi}. Its overall structure is depicted in Figure~\ref{fig1}. Devices are grouped into subnetworks (subnets) of close physical proximity, according to their ability to form D2D connections \cite{suraci2021trusted}. The central server performs global aggregation by sampling partial sets from each subnet. To enhance communication efficiency, the model aggregation in SD-FL is conducted in two stages: (i) \textit{iterative cooperative consensus formation} of local models within subnets, and (ii) \textit{DS communication among sampled devices} for global aggregation. The idea is that frequent, low-cost within-subnet model aggregations should reduce the burden placed on global, cross-subnet aggregations, as they can occur less frequently engaging fewer clients. 

The following are some practical examples of networks and learning tasks applicable to SD-FL:
%The following are some practical applications of where the network setting contains D2D connections, hence are some examples that can perform learning tasks in a fog network setting:

\textbf{Sensor Networks in Smart Manufacturing.}
Consider a set of smart sensors distributed across manufacturing plants. The sensors within each plant may establish D2D communications to perform learning tasks such as power control and predicting production delays across assembly lines \cite{varga20205g}. Moreover, the backbone network providing connectivity to each plant may be able to facilitate global modeling of multiple plants to take advantage of diverse training environments.

\textbf{Mobile Augmented Reality (AR) Devices.} For AR applications based on mobile phones, machine learning tasks like image classification \cite{chatzopoulos2017mobile} can be collaboratively trained by communicating with nearby phones through D2D protocols. Moreover, since devices participating in an AR learning task may be too geographically dispersed to be connected by a single D2D topology, base stations with central servers to aggregate information are also essential. 

\textbf{Connected and Automated Vehicles (CAV).}
Distributed networks of CAVs employ machine learning to optimize various control, planning, and self-driving functions \cite{chellapandi2023federated}. Vehicle-to-Vehicle (V2V) and Vehicle-to-Infrastructure (V2I) communication are important to facilitating any distributed learning task that involves training across vehicles spanning multiple geographical locales (e.g., cities).
% Machine learning tasks are an essential way for CAVs to learn control, planning, and self-driving algorithms in a distributed network. For vehicles that are constantly transporting around a city or even between cities, a hierarchical network topology with the ability to perform Vehicle-to-Vehicle (V2V) and Vehicle-to-Infrastructure (V2I) communication will be very important to trains any distributed learning tasks \cite{chellapandi2023federated,subedi2022synchronizing}.
% There are multiple existing applications that can take advantage of D2D communication and perform learning tasks in a fog network setting:
\subsection{The Subnet-Drift Problem}
%SD-FL, however, encounters a fundamental challenge in terms of managing \textit{gradient diversity} across subnets.
A fundamental challenge in SD-FL is managing the \textit{gradient diversity} across subnets. 
Specifically, the more within-subnet aggregations we perform, the more the global model drifts away from the global optimum, towards a linear combination of local optimums of each subnet. This ``subnet-drift" manifests from the client-drift problem in FL, due to non-i.i.d. local datasets across clients \cite{karimireddy2020scaffold,liu2023decentralized,mishchenko2022proxskip}. For example, in the AR use case above, users in an urban areas may be presented with different distributions of environmental data (e.g. densely populated blocks of pedestrians) compared to users in sub-urban areas. In other words, when users are grouped into subnets based on their physical location, the data distribution between subnets may vary significantly.

In this work, we are interested in addressing the subnet-drift challenge for SD-FL. Although some existing works alleviate client drift by letting clients share a portion of their datasets with their neighbors and/or the server \cite{wang2021network,wang2024device}, such approaches present privacy issues that FL aims to avoid. To rectify this, we turn to concepts in \textit{gradient tracking}, which have been successful in mitigating data heterogeneity challenges in fully decentralized learning, and do not require data sharing \cite{koloskova2019decentralized,lian2017can}. However, the hierarchical nature of SD-FL presents two key research challenges to gradient tracking methods. First, the \textit{differing timescales} of D2D and DS communications necessitate careful consideration on how a client should employ gradient information from the server versus from its neighbors. Second, \textit{randomness in client participation} for DS communication may create biases in aggregated gradient information. We thus pose the following research question:
\begin{center}
\textbf{\textit{How do we alleviate subnet drift in semi-decentralized FL through gradient tracking while ensuring gradient information is well mixed throughout the system?}}
\end{center}
% The main idea of gradient tracking is for each client to track gradient information from its neighbors during the learning communication process. 
% \shiqiang{How is our gradient tracking different from well-known gradient tracking approaches such as SCAFFOLD?}
% \evan{This is mentioned in the next paragraph, where I explain how we address the first research challenge.}
To address this, a key component of our design is to introduce two separate gradient tracking terms, one corresponding to information sharing within each subnet (facilitated by D2D communications), and one for information sharing across subnets (facilitated by the server). Since incoming information to each client from within the subnet (through D2D communication) and across subnets (through DS communication) may have differing statistical properties, maintaining two gradient tracking terms can help adjust the local gradient update direction in a more stable manner compared with only using one.
% each communication stage of SD-FL, which treat the incoming information differently.
 %Another research challenge is that the randomness of client participation
% \shiqiang{Why does the server need to sample clients? Can they decide whether to participate on their own? Also, I think we should be consistent with the term ``client'' instead of using ``client'', unless you mean something different by ``client''.}
%for DS communication will create a bias term on the aggregated gradient information.
Our convergence analysis and subsequent experiments demonstrate how this indeed stabilizes the global learning process, and does so \textit{internally to the algorithm itself}, i.e., without having to increase the communication frequency between clients and the server to limit the drift.
%We address this problem by cancelling out the gradient tracking information when global aggregation is performed, making the convergence more stable.
% We show that we're able to recover several gradient tracking based methods based on SD-GT on specific network structures, showing that our algorithm is a more generalized version. We also present convergence analysis for SD-GT, showing that by an appropriate choice of step size, we're able to converge in a sub-linear rate under non-convex settings and in a linear rate under strong-convexity, while having heterogeneity-independent complexity.
% Base on our convergence rate, we also propose a control algorithm to trade-off between energy cost and convergence speed: The more clients the central server samples during global aggregation, the faster the convergence speed will be, but it will cost more energy to perform one round of global aggregation. Our control algorithm allows the server to balance between the number of sampled clients from each subnet, the number of D2D communication rounds performed between two global aggregation, and the convergence speed.
\subsection{Outline and Summary of Contributions}
\label{ssec:contribution}
\begin{itemize}
    \item We propose Semi-Decentralized Gradient Tracking ({\tt SD-GT}), the first work which integrates gradient tracking into SD-FL, building robustness to data heterogeneity across clients and subnets. 
    %\chris{Through careful design of two gradient tracking terms ...}
    Through careful design of two gradient tracking terms,
    {\tt SD-GT} can tolerate a large number of D2D communications between two global aggregation rounds without risking convergence to a sub-optimal solution (Sec. \ref{sec:III}).
    %\item We showed that under certain network structures, several GT-based algorithms can be viewed as a special case of SD-GT (Sec. \ref{sec:IIIC}).
    
    \item We conduct a Lyapunov-based convergence analysis for {\tt SD-GT}, obtaining upper bounds on convergence for non-convex, weakly convex and strongly convex functions. 
    %\chris{Key takeaways of convergence analysis is }
    We show that our algorithm converges with a sub-linear rate under non-convex and weakly convex problems, while for a strongly-convex problem with deterministic gradients, it obtains a linear convergence rate. Importantly, compared with prior works in SD-FL, our convergence bounds do not depend on data heterogeneity constants (Sec. \ref{sec:IV}).
    
    \item We show how our convergence bounds can be employed in a co-optimization of convergence speed and communication efficiency via adapting D2D communication rounds and subnet sampling rates in {\tt SD-GT}.
    % \chris{This shows how our method enables simpler optimization procedures compared to other control algorithms ...}
    Through the gradient tracking mechanism in our algorithm, this co-optimization is directly solvable via geometric programming techniques, as opposed to prior SD-FL control optimization algorithms which must carefully adapt based on data heterogeneity 
    (Sec. \ref{sec:IV}). {\color{black} We further derived a non-convex convergence rate of our algorithm under the co-optimization control (Appendix \ref{appen:ctrl_convergence}).}
 
    \item Our experiments verify that {\tt SD-GT} obtains substantial improvements in trained model quality and convergence speed relative to baselines in the SD-FL and gradient tracking literature. They also demonstrate consistency in our improvements as factors like subnet composition and data heterogeneity vary. Moreover, we verify the behavior of our co-optimization optimization in adapting to the relative cost of D2D vs. DS communications (Sec. \ref{sec:V}).
    
%    \item We experimentally verify the theoretical results on both on strongly convex and non-convex functions, showing how different number of D2D communication rounds and number of samples the central server gets from each subnet effects convergence speed, and show the effectiveness of our proposed control algorithm (Sec. \ref{sec:VI}).
\end{itemize}
% \subsection{Placeholder}

\textcolor{black}{This paper is an extension of our conference paper version of this work\cite{chen2024taming}. Compared to \cite{chen2024taming}, we make the following additional contributions: (1) A new theorem discussing the case of weakly convex ($\mu = 0$) objectives is included. (2) We develop a new co-optimization control algorithm that adapts the device sampling rate and D2D communication frequencies after each global aggregation. (3) We conduct a more comprehensive set of experiments to discuss the effect of different network structures, data partitions, and other factors. (4) For the mathematical claims (lemmas, theorems, etc.), we include sketch proofs in the main text and full proofs in the supplemental material.}

\vspace{-0.1in}
\section{Related Works}

\label{sec:II}
% \textcolor{red}{TODO:} Add more papers on FL part.
\vspace{-0.1in}
\subsection{Communication Efficient Federated Learning}
Multiple works on FL have considered optimizing communication resource efficiency. One of the earliest ideas was to allow for multiple local gradient updates between two consecutive communications with the central server \cite{haddadpour2019convergence, mcmahan2017communication}. However, when local datasets exhibit significant heterogeneity, the local updates may lead the individual models to converge toward locally optimal solutions. Techniques for mitigating this include adapting the interval between consecutive aggregations according estimates of the degree of heterogeneity \cite{wang2019adaptive} as well as intelligently selecting clients to participate in each aggregation \cite{chen2020optimal,nguyen2020fast,ribero2020communication,luo2022tackling}, to more judiciously make use of available DS communication resources. Additionally, there are also studies on how model quantification and model sparsification techniques can be used to optimize resource efficiency\cite{wang2022federated,amiri2020federated}.

Other works have aimed to address this challenge by directly reducing the impact of data heterogeneity across clients, thereby allowing for less frequent aggregations. For instance, some have proposed quantifying the similarity between each client's local dataset and the global distribution, and adding regularizers to the local training process for reducing the impact of data heterogeneity \cite{li2020federated, sun2022distributed}. 
Others have opted for sharing a subset of local datasets between devices and/or with the central server \cite{wang2021network, wang2024device}. However, direct sharing of raw data over the network naturally raises privacy concerns. Therefore, it is preferable to mitigate this problem by transmitting only model variables and gradient information.

\vspace{-0.1in}
\subsection{Hierarchical Federated Learning}
% Aiming to improve of resource efficiency of communication while also aggregating the global information towards a central server, the idea of a hierarchical network FL was created\cite{liu2020client,hosseinalipour2022multi}.
%\subsection{Federated Learning}
% \textbf{Federated Learning:} Multiple works on federated learning have been trying to optimize the resource efficiency of communication. There are works that performs multiple local gradient updates between two consecutive communications with the central server . There are also studies on how model quantification and model sparsification techniques can be used to optimize resource efficiency\cite{wang2022federated,amiri2020federated,sattler2019robust}.
% There are works that tries to deal with data-heterogeneity among clients.  However, sharing raw data between the server and the clients may cause privacy concerns, hence it's best if we can mitigate this problem using only model variables and gradient information.
%\subsection{Hierarchical Network Federated Learning}
%, employing fog computing nodes to facilitate local model relaying and/or aggregations
Hierarchical FL has received considerable attention for scaling up model training across large numbers of edge devices. Most of the works have considered a multi-stage tree extension of FL \cite{wang2021resource,liu2020client,hosseinalipour2022multi,wang2022infedge, wang2022demystifying}, i.e., with ``parent'' nodes at each stage of the hierarchy responsible for its own local FL star topology comprised of its ``child'' nodes. A commonly considered use case has been the three-tier hierarchy involving devices, base stations, and cloud encountered in cellular networks. Optimization of the aggregation frequencies across the hierarchy stages have demonstrated significant improvements in convergence speed and communication efficiency. In a separate domain, these concepts have been employed for model personalization in cross-silo FL~\cite{zhou2023hierarchical, chu2024multi}.

%each cluster of clients update towards a consensual direction and the central server sparsely sample information from the cluster\cite{lin2021semi,yemini2022semi,zhou2023hierarchical}. Our work focuses specifically on semi-decentralized FL, where 
Our work focuses on semi-decentralized FL (SD-FL), where subnets of edge devices conduct local aggregations via D2D-enabled cooperative consensus formation, and the central server samples models from each subnet \cite{lin2021semi,yemini2022semi,parasnis2024energy}. SD-FL is intended for settings where DS communications are costly, e.g., due to long edge-cloud distances. The convergence behavior of SD-FL was formally studied and a corresponding control algorithm was proposed to maintain convergence based on approximations of data-related parameters \cite{lin2021semi}. The work in \cite{parasnis2024energy} developed SD-FL based on more general models of subnet topologies that may be time-varying and directed. 
A main issue with all current SD-FL papers is that their theoretical bounds assume that either the gradient, gradient diversity, or data-heterogeneity are bounded. Both \cite{lin2021semi} and \cite{parasnis2024energy} even require knowledge on the connectivity of each subnet to run the algorithm. In our work, through gradient tracking, we will not require knowledge on the relationship between different local data distributions for deriving convergence guarantees, and will not need any parameter related to network topology for our algorithm. With less information required to control the communication-related parameters, we are able to develop control algorithms that are easier to implement in practice.
\vspace{-0.1in}
\subsection{Gradient Tracking for Communication Efficiency}
Gradient tracking (GT) methods \cite{di2016next,nedic2017achieving,tian2018asy,koloskova2021improved,sun2022distributed} were proposed to mitigate data heterogeneity in decentralized optimization algorithms. The main idea is to track the gradient information from neighbors every time communication is performed. GT has become particularly popular in centralized \cite{karimireddy2020scaffold,mishchenko2022proxskip} and serverless \cite{liu2023decentralized,alghunaim2024local,ge2023gradient,berahas2023balancing,zhang2021low} FL settings where communication costs are high, as it enables algorithms to reach the optimum point while increasing the interval between synchronization. These works have demonstrated that assumptions on data heterogeneity can be lifted under proper initialization of gradient tracking variables.
% \shiqiang{What is variable initialization? Do you mean choice of learning rates?}.\evan{The GT terms (like the y,z in our paper) needs specific initalization. Like they have to be all zero at the initialization in our paper.}\shiqiang{Suggest changing it to: proper initialization of gradient tracking variables}\evan{ok}

Our work instead considers GT under a semi-decentralized network setting. In this respect, \cite{huang2022tackling} discussed GT under a hierarchical network structure, where they assumed a topology consisting of (i) random edge activation within subgraphs and (ii) all subgraphs being connected by a higher layer graph that communicates after every gradient update. This is different from the SD-FL setting, where D2D communication usually is cheaper than DS communication and thus occurs at a much higher frequency. In this paper, we develop a GT methodology that accounts for the diversity in information mixing speeds between D2D and DS communications, and track this difference by maintaining two separate GT terms.

\vspace{-0.1in}
\section{Proposed Method}
\label{sec:III}
% \subsection{Notations}
% Under the setting of semi-decentralized FL, the network performs $K$ rounds of D2D communication within each cluster, then the central server samples a number of clients $h_{c_i}$ from each cluster to update the global variables.
In this section, we first introduce the overall network structure of SD-FL (Sec. \ref{sec:IIIA}). Then we develop our {\tt SD-GT} algorithm, explaining the usage of each tracking variable and how they solve the subnet-drift problem (Sec. \ref{sec:IIIB}). Finally, we show that our method encapsulates two existing methods under specific network topologies (Sec. \ref{sec:IIIC}). 
% Finally, we provide the convergence analysis, showing that with a constant step size that is determined by the $L$-smoothness constant, the strongly-convex constant, and the number of D2D communication rounds, we are able to achieve sub-linear rate under non-convex and linear-rate convergence under strongly-convex (Sec \ref{sec:IV}).
\vspace{-0.1in}
\subsection{Network Model and Timescales}
\label{sec:IIIA}
We consider a network containing a central server connected upstream from $n$ clients (edge devices), indexed $i = 1,...,n$. As shown in Figure \ref{fig1}, the devices are partitioned into $S$ disjoint subnets 
% \shiqiang{is a subnet the same as a cluster?}\evan{Yes, in this current version we should be using subnet instead of cluster throughout the whole paper}
$\mathcal{C}_1, \ldots, \mathcal{C}_S$. Subnet $s$ contains $m_{s} = |\mathcal{C}_s|$ clients, where $\sum_{s=1}^S m_{s} = n$. Similar to existing works in SD-FL~\cite{lin2021semi,yemini2022semi}, we do not presume any particular mechanism by which clients have been grouped into subnets, except that clients within the same subnet are capable of engaging in D2D communications according to a wireless protocol, e.g., smart sensors communicating through 5G/6G in a manufacturing plant (see Sec.~\ref{ssec:examples}).

For every client $i \in \mathcal{C}_s$, we let $\mathcal{N}_i \subseteq \mathcal{C}_s$
% \shiqiang{does $in$ in the superscript refer to both the variables $i$ and $n$? If yes, add a comma between them, if not use ``$\mathrm{in}$'' (note the font type) if needed, but single letter sub-/superscripts are generally preferred. Also, capital $C_i$ is not defined, do you mean $c_i$ instead?}
be the set of clients that can transmit updates to client $i$ using D2D transmissions. Considering all clients $i, j \in \mathcal{C}_s$, we define $W_{s} = [w_{ij}] \in \mathbb{R}^{m_{s}\times m_{s}}$ to be the D2D communication matrix for subnet $s$, where $0 < w_{ij} \leq 1$ if $j \in \mathcal{N}_i$, and $w_{ij} = 0$ otherwise. As we will see in Sec.~\ref{sec:IIIB}, $w_{ij}$ is the weight that client $i$ will apply to information received from client $j$. \textcolor{black}{Thus, these communication matrices enable distributed average consensus among devices within each subnet through D2D communication. These local model aggregations, together with the gradient tracking procedure introduced in Sec.~\ref{sec:IIIB}, help ensure that the local models in each subnet do not drift too far apart between global aggregations by the main server.}

\textcolor{black}{Similar to existing SD-FL literature~\cite{yemini2022semi,parasnis2024energy}, will primarily assume the topology for each subnet $\mathcal{C}_s$ is fixed, and the corresponding communication matrix $W_s$ is static and pre-defined. With an undirected and strongly connected communication graph for each subnet, standard approaches such as the Metropolis-Hastings algorithm can be employed to compute a doubly stochastic weight matrix $W_s$, assigning appropriate weights to each communication link among devices within the subnet~\cite{parasnis2024energy,zehtabi2022decentralized}. In Sec.~\ref{sec:IV}, we will discuss how our results can be extended to the time-varying case.}

We then can define the network-wide D2D matrix
\begin{equation}
\label{eq:W}
W = \begin{bmatrix}
    W_{1}& \ldots& 0\\
    \vdots & \ddots & \vdots\\
    0 & \ldots & W_{S}
\end{bmatrix} \in \mathbb{R}^{n\times n},
\end{equation}
which is block-diagonal given that the subnets do not directly communicate. %to be a block diagonal matrix and denote $w_{ij}\in[0,1]$ to be the entry of W at the $i$\textsuperscript{th} row and $j$\textsuperscript{th} column. The matrix $W$ is a block diagonal matrix because the graph formed by D2D communications are $l$ separate graphs, forming $l$ blocks.
In Sec.~\ref{sec:IV-A}, we will discuss further assumptions on the subnet matrices $W_s$ for our convergence analysis.

The {\tt SD-GT} training process consists of two timescales. The outer timescale, $t = 1,2,\ldots,T$, indexes global aggregations carried out through DS communications. The inner timescale, $k = 1, \ldots, K$, indexes local training and aggregation rounds carried out via D2D communications. 
% \st{Without loss of generality}\shiqiang{There is a strict mathematical definition of ``without loss of generality'' (see related Wikipedia page), which I don't think applies there},
We assume a constant $K$ local rounds occur between consecutive global aggregations.

\textcolor{black}{Consistent with prior work in semi-decentralized FL~\cite{lin2021semi,hosseinalipour2022multi,wang2021network}, we adopt a synchronous communication model. With an orthogonal resource allocation protocol like OFDMA or FDMA, devices can transmit their local models to neighbors concurrently. Each device then waits to receive updates from its neighbors before performing its local aggregation (described in Sec. III-B), synchronizing the update round index $k$ across the network based on the slowest link. Such synchronization models are common in OFDM-based D2D systems~\cite{kang2014adaptive,brinton2025key}. Future work can investigate the impact of enabling asynchronous operation due to heterogeneity in communication and computation capabilities.
% Consistent with prior work in semi-decentralized FL~\cite{lin2021semi,hosseinalipour2022multi,wang2021network}, we adopt a synchronous communication model to enable tractable analysis and controlled evaluation. In practice, devices can transmit local models to neighbors over dedicated or orthogonal resources (e.g., FDMA/OFDMA), allowing parallel transmissions. Each device then waits to receive updates from its neighbors before performing local aggregation, synchronizing the update round index $k$ across the network based on the slowest link. While this introduces delays tied to worst-case latency, it does not cause protocol-level asynchrony. Such synchronization is practical in OFDM-based D2D systems~\cite{kang2014adaptive,brinton2025key}.
}

\vspace{-0.1in}
\subsection{Learning Model}
\label{sec:IIIB}
Algorithm \ref{alg:1} summarizes the full {\tt SD-GT} procedure. Each client maintains two gradient tracking terms, $y_i^t \in \mathbb{R}^d$ and $z_i^t \in \mathbb{R}^d$, which track (i) the gradient information between different subnets and (ii) the gradient information inside each subnet, respectively. These two variables act as corrections to the local gradients so that the update direction can guarantee convergence towards global optimum, visualized in Figure \ref{fig2}.
 
\begin{algorithm}[t]
{\small
\caption{{\tt SD-GT}: Semi-Decentralized Gradient Tracking}
\label{alg:1}
\KwConstants{step size $\gamma > 0$, initial model parameter $x^0$}
\KwOutput{$x_\mathrm{g} ^t$}
\textbf{Local parameter initialization}\\
$x_i^{1,1} \gets x^0, \quad \forall i$\\
$y_{i}^1 \gets \frac{1}{n}\sum_{j=1}^n \nabla f_i(x^0, \xi_i^0) - \frac{1}{|\mathcal{C}_s|}\sum_{j\in \mathcal{C}_s} \nabla f_i(x^0, \xi_i^0), \quad \forall i \in \mathcal{C}_s, \forall s \in S$,\\
$z_i^1 \gets \frac{1}{|\mathcal{C}_s|}\sum_{j\in \mathcal{C}_s} \nabla f_i(x^0, \xi_i^0) - \nabla f_i(x^0, \xi_i^0), \quad \forall i \in \mathcal{C}_s, \forall s \in S$\\
\textbf{Server parameter initialization}\\
$x_\mathrm{g} ^1 \gets x^0$\\
$\psi_{s}^1 \gets 0, \quad \forall s$\\
\For{$t\leftarrow 1,\ldots,T$}{
\tcc{Step 1: Within-subnet Model Update}
\For{$k\leftarrow 1,\ldots,K$}{
\ClientFor{$i \gets 1,\ldots,n$}{
Perform within-subnet update on the local model $x_i^{t,k}$ based on \eqref{eq3} and \eqref{eq4}.

}
}
\tcc{Step 2: Local Tracking Term Update}
\ClientFor{$i \gets 1,\cdots,n$}{
    Update the within-subnet gradient tracking variable $z_i^t$ based on \eqref{eq5} and \eqref{eq6}.
}
\tcc{Step 3: Global Aggregation}
sample $h_{s}$ random clients $x_{s,j} \sim \mathcal{C}_s$ from every subnet\\
\SubnetFor{$s\gets 1,\ldots,S$}{
\AggrFor{$j \leftarrow 1,\ldots,h_{s}$}{
Use \eqref{eq7} to \eqref{eq10} to update variables $x_\mathrm{g} ^t$ and $\psi_{s}^t$, $\forall s$ on the server, then broadcast them to all sampled clients.
}
}
% \textbf{All clients that are not sampled:}\\
\If{client $i$ is not sampled}{
    $x_i^{t+1,1} \gets x_i^{t,K+1}$\\
    $y_i^{t+1} \gets y_i^t$
}
}

}\end{algorithm}

\textbf{Within-subnet Updates.} We denote $x_i^{t,k} \in \mathbb{R}^d$ as the model parameter vector stored at client $i$ during the $t$\textsuperscript{th} global aggregation round and $k$\textsuperscript{th} D2D communication round. Each device conducts its local model update in two steps: (i) updating its local model using its local gradient and gradient tracking terms, and then (ii) computing a linear combination with models received from its neighbors, also known the Adapt-Then-Combine (ATC) scheme. ATC is known to have a better performance compared to other mixing schemes\cite{tu2012diffusion}. The update direction not only includes the gradient direction computed from the local dataset $\mathcal{D}_i$ but also the gradient tracking terms $z_i^t$ and $y_i^t$. In other words:
% \shiqiang{Equations should be part of a sentence and have a punctuation at the end, see equation (1) as an example}
\begin{align}
    \textstyle x_{i}^{t,k+\frac{1}{2}} &\textstyle= x_{i}^{t,k} - \gamma \left( \nabla f_i(x_{i}^{t,k}, \xi_i^{t,k}) + y_{i}^{t} + z_i^{t}\right), && \forall i
    \label{eq3}\\
    \textstyle  x_{i}^{t,k+1} &\textstyle= \sum_{j \in \mathcal{N}_i\cup \{i\}} w_{ij}x_{j}^{t,k+\frac{1}{2}}, && \forall i,
    \label{eq4}
\end{align}
% \shiqiang{Perhaps mention that $\nabla f_i(x_{i}^{t,k}, \xi_i^{t,k})$ denotes the stochastic gradient}
where $\nabla f_i(x_{i}^{t,k}, \xi_i^{t,k})$ denotes the stochastic gradient of $\nabla f_i(x_{i}^{t,k})$.

\begin{figure}
\centering
\includegraphics[width=0.5\textwidth]{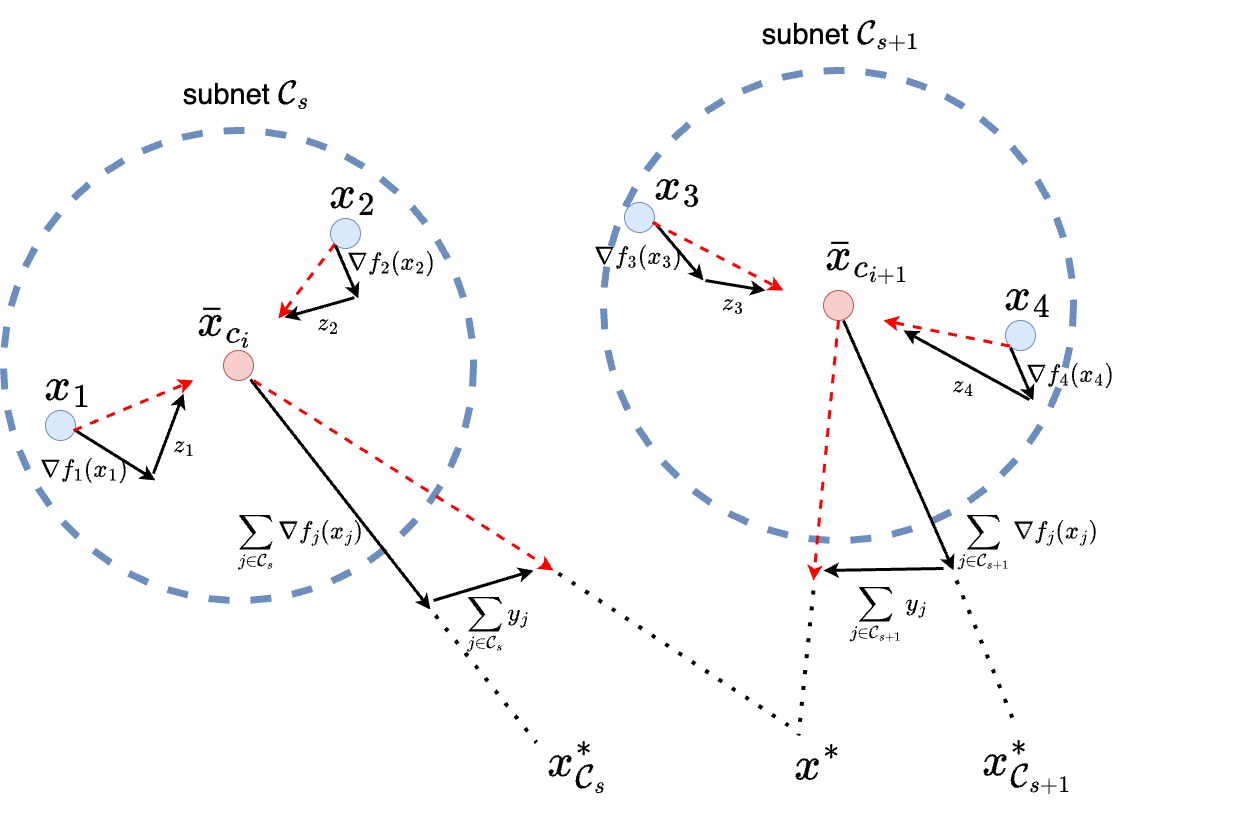}
\caption{\textcolor{black}{An illustration of how {\tt SD-GT} mitigates subnet-drift. By introducing the within-subnet GT term $z_i^t$, all devices within each subnet are able to course-correct towards a consensual location of the subnet, mitigating the gradient difference between local updates and the average subnet direction. Further, the between-subnet GT term $y_i^t$ corrects the drift between subnets and the global gradient direction that arises due to inter-subnet heterogeneity, so that each subnet $\mathcal{C}_s$ no longer converges towards the optimal solution $x_{\mathcal{C}_s}^{\star}$, but to the optimal solution $x^{\star}$ of the whole network. Both gradient tracking terms are added during each round of local update, steering model update directions toward the global optimum.}\vspace{-0.15in}}
\label{fig2}
\end{figure}

Additionally, after every D2D communication, each client computes $\Tilde{z}_i^{t,k}$, a vector which measures the change in its local model $x_i^{t,k}$ excluding the impact of the global tracking term $y_i^t$. Formally,
\begin{align}
    \textstyle \Tilde{z}_i^{t,k}  &\textstyle= x_{i}^{t,k+\frac{1}{2}} - x_{i}^{t,k} + \gamma y_i^t, && \forall i,
    \label{eq5}
\end{align}
where $\Tilde{z}_i^{t,k}$ is needed to update the within-subnet tracking term $z_i^t$. Although $\Tilde{z}_i^{t,k}$ is computed after every D2D communication, we only pass the sum of all $K$ rounds to client $i$'s neighbor set $\mathcal{N}_i$ once every global iteration $t$ to update $z_i^{t+1}$. 
%After the $K$ rounds of D2D communications, each client updates its own in-subnet gradient tracking term $z_i^t$ using $\Tilde{z}_i^{t,k}$, which is the measured change in the variable $x_i^{t,k}$ every time a D2D communication within the subnet is performed.
This update is given by
% \evan{align the equal signs}
% \evan{change $\delta$ z to Tilde z?}
\begin{align}
    \textstyle z_i^{t+1} &\textstyle= z_i^t + \frac{1}{K\gamma}\sum_{k=1}^K\left(\Tilde{z}_i^{t,k} - \sum_{j \in \mathcal{N}_i \cup \{i\}} w_{ij}\Tilde{z}_j^{t,k}\right), && \forall i.\label{eq6}
\end{align}
The global GT term helps correct for the drift between subnets and server gradients due to inter-subnet heterogeneity, while the local GT term compensates for discrepancies caused by local updates and D2D consensus. Together, these components cooperatively guide the local models toward the global optimum by maintaining alignment with the global descent direction throughout the training process.

\textbf{Global Aggregation.} The central server will choose a number of clients $h_{s} \in \{1,...,m_s\}$ to sample from each subnet $s$, e.g., based on the DS communication budget. These will be a set of variables in our optimization considered in Sec.~\ref{sec:IV-E}. Those clients that are not sampled by the central server for round $t$ will not update their parameters, and maintain $x_i^t$ and $y_i^t$ into the next communication round. 
% \shiqiang{What happens to clients that are not sampled? Do they still compute their in-subnet updates as above?}\evan{They will maintain their values into the next communication round (mentioned next page at the ``All the clients that are'' paragraph)}\shiqiang{My question is whether they will still compute or not. It is not clear from the writing. This is an important aspect that should be mentioned before this paragraph when talking about in-subnet updates} 
%The sample number $h_{s} \in \{1,...,m_s\}$ is at least one and at most the total number of clients in the subnet. 

We denote $x_\mathrm{g}^t$ as the global model that is stored at the server. At each global aggregation $t$, the server accumulates the gradient information collected from each subnet, and updates the between-subnet gradient tracking terms that are stored on the server, which we denote by $\psi_{s}^t$ for subnet $s$. Specifically, for each subnet $s$, let the sampled clients for round $t$ be the set $\mathcal{C}'_s \subseteq \mathcal{C}_s$ with size $|\mathcal{C}'_s| = h_{s}$. To update $x_\mathrm{g}^t$ and $\psi_{s}^t$, we first compute two intermediate quantities, $\textstyle\Tilde{x}_{i}^{t+1}$ (managed at the client-side) and $\textstyle\Tilde{x}_\mathrm{g} ^{t+1}$ (managed at server-side):
\begin{align}
    &\textstyle\Tilde{x}_{i}^{t+1} = x_{{i}}^{t, K+1} - x_{{i}}^{t, 1}+ K\gamma y_{{i}}^t,&& \forall i, \label{eq7}\\
    &\textstyle\Tilde{x}_\mathrm{g} ^{t+1} = \frac{1}{S\cdot h_{s}}\sum_{s = 1}^{S}\sum_{j \in \mathcal{C}'_s}  \Tilde{x}_{{j}}^{t+1}. \label{eq8}
\end{align}
In \eqref{eq7}, each client conducts a cancellation of the between-subnet gradient tracking information so that the global model receives unbiased gradient information. In \eqref{eq8}, all the aggregated information is averaged for the server to update the global model. Then, we have:
\begin{align}
        &\textstyle x_\mathrm{g} ^{t+1} = x_\mathrm{g} ^t + \Tilde{x}_\mathrm{g}^{t+1},\label{eq9}\\
        &\textstyle \psi_{{s}}^{t+1} = \frac{1}{K\gamma}(\frac{1}{h_{s}}\sum_{j\in \mathcal{C}'_s}\Tilde{x}_{{j}}^{t+1} - \Tilde{x}_\mathrm{g} ^{t+1}), && \forall s,\label{eq10}
\end{align}
% \shiqiang{You may need $\forall$ before the ranges of $j$ and $s$ on the right of equations. Also, why are the ranges given here but not for the equations in in-subnet updates and not for the following equations either?} \evan{Do you think I should remove them from these equations or add them to the in-subnet updates}\evan{ok}
where  \eqref{eq9} and \eqref{eq10} are both computed at the server. Finally, the server broadcasts the updated global model $x_\mathrm{g}^{t+1}$ and between-subnet gradient tracking terms $y_{i}^{t+1}$ to the sampled clients to complete the synchronization:
\begin{align*}
        \textstyle x_{{i}}^{t+1,1} &\textstyle= x_\mathrm{g} ^{t+1}, \quad \forall i \in \mathcal{C}'_s, \forall s \in S,\\
        \textstyle y_{{i}}^{t+1} &\textstyle= \psi_{s}^{t+1}, \quad \forall i \in \mathcal{C}'_s, \forall s \in S.
\end{align*}

Maintaining two gradient tracking terms is an essential feature of {\tt SD-GT} for stabilizing convergence, and is one of our contributions mentioned in Sec.~\ref{ssec:contribution}. In particular, if we only used the between-subnet measure $y_i^t$ to track gradient information, then the gradient information within each subnet would deviate from the average of the subnet, preventing the system from converging towards the global minimum. On the other hand, if we only used the within-subnet measure $z_i^t$ to track gradient information, then each subnet $s$ will tend to converge towards its local minimum $x_{s}^\star$ instead of the global minimum.
% \evan{explain how the two GT terms controls the convergence. The uniqueness.} \evan{perhaps add the numbers, and in alg 1 refer to the numbers (but need more explanations).}
\vspace{-0.1in}
\subsection{Connection with Existing Methods}
\label{sec:IIIC}
Some existing methods can be shown to be special cases of {\tt SD-GT} under certain network structures.
% In this subsection, we show the connection of our algorithm with existing methods: Under certain network topology, one of the two gradient tracking terms $z_i$ and $y_i$ will be zero, and the update process will align with Proxskip\cite{mishchenko2022proxskip} and gradient tracking\cite{sun2016distributed,di2016next}.

\textbf{Case 1: Conventional FL with gradient tracking ($S = n$).}
Under this setting (when $S = n$), the server always samples the full subnet since each subnet contains only one client, forming the conventional FL star topology. Then we can see that with the initialization $z_1^1 = z_2^1 = \ldots = z_n^1 = 0$, the in-subnet gradient tracking terms are always zero:
\begin{align*}
    \textstyle z_i^{t+1} = z_i^t + \frac{1}{K\gamma}\sum_{k=1}^K(\Tilde{z}_i^{t,k} - \Tilde{z}_i^{t,k}) = z_i^t = 0, \quad \forall i.
\end{align*}
The global gradient tracking term $y_{i}^t$ can be formulated as:
\begin{align*}
    \textstyle y_{i}^{t+1} = \psi_{s}^{t+1} & \textstyle= \frac{1}{K\gamma}(\sum_{j \in \mathcal{C}'_s} \frac{1}{h_{s}}\Tilde{x}_{{j}}^{t+1} - \Tilde{x}_\mathrm{g} ^{t+1})\notag\\
    &\textstyle= y_{i}^{t} + \frac{1}{K\gamma}(x_{s}^{t,K+1} - x_\mathrm{g} ^{t+1}), \quad \forall s.
\end{align*}
With the in-subnet gradient tracking term being zero and the global gradient tracking term in the form above, our algorithm aligns with ProxSkip~\cite{mishchenko2022proxskip} under a deterministic communication frequency.

\textbf{Case 2: Fully decentralized learning with gradient tracking ($S = 1, K = 1$).}
Under this setting (when $S = 1$, $K = 1$), the global gradient tracking term is always zero since $\sum_{j=1}^{h_s} \frac{1}{h_s}\Tilde{x}_{s,j}^{t+1} = \Tilde{x}_\mathrm{g} ^{t+1}$:
\begin{equation*}
    \textstyle \psi_{s}^{t+1} = \psi_{s}^{t} + \frac{1}{\gamma}(\sum_{j\in \mathcal{C}'_s}^{h_{s}} \frac{1}{h_c}\Tilde{x}_{{j}}^{t+1} - \Tilde{x}_\mathrm{g} ^{t+1}) = y_{s}^{t} = 0, \quad \forall s.
\end{equation*}
Since the number of D2D rounds are set to one, the update of $z_i^t$ for every client $i$ can be formulated as:
\begin{align*}
    \textstyle z_i^{t+1} \textstyle=& z_i^t + \frac{1}{K\gamma}\sum_{k=1}^K(\Tilde{z}_i^{t,k} - \sum_{j \in \mathcal{N}_i \cup \{i\}} w_{ij}\Tilde{z}_j^{t,k})\\
    =&\textstyle \sum_{j \in \mathcal{N}_i \cup \{i\}} w_{ij}(z_j^t+ \nabla f_j(x_j^t)) - \nabla f_i(x_i^t),\quad\forall i .
\end{align*}
If we define $\hat{z}_i^t = z_i^t + \nabla f_i(x_i^t)$, we are left with:
\begin{align*}
    \textstyle x_i^{t+1} &=\textstyle \sum_{j \in \mathcal{N}_i \cup \{i\}} w_{ij}(x_j^{t} - \gamma \hat{z}_i^t), &&\forall i,\\
    \textstyle\hat{z}_i^{t+1} &= \sum_{j \in \mathcal{N}_i \cup \{i\}} w_{ij}\hat{z}_j^t + \nabla f_i(x_i^{t+1}) -  \nabla f_i(x_i^t),&&\forall i,
\end{align*}
% \shiqiang{It is not clear why this equation has $\{$ on the left while the others don't. I think you can remove $\{$}
which aligns with the conventional gradient tracking algorithm \cite{sun2016distributed,di2016next}, assuming that {\tt SD-GT} does not conduct a global aggregation.

\vspace{-0.1in}
\section{Analysis and Optimization}
\label{sec:IV}
We first outline the assumptions used for the proofs (Sec. \ref{sec:IV-A}), then we provide the convergence analysis of {\tt SD-GT} under non-convex, weakly-convex, and strongly-convex ML loss functions, without making any assumptions on data heterogeneity (Sec. \ref{sec:IV-B}). Then, we provide a proof sketch of the theorems (Sec. \ref{sec:IV-C}). Finally, we derive a co-optimization algorithm that considers the trade-off between communication cost and performance (Sec. \ref{sec:IV-E}).

Note that we defer the full proofs of all theorems, corollaries, and lemmas to the supplemental material.
\vspace{-0.1in}
\subsection{Convergence Analysis Assumptions}
\label{sec:IV-A}
We first present the general assumptions that are applied to all three theorems in this paper.
% \chris{Give references to related works using these assumptions}

\begin{assumption}\textbf{(L-smooth)} Each local objective function $f_i$:
$\mathbb{R}^d \rightarrow \mathbb{R}$ is L-smooth:
\begin{equation*}
    \|\nabla f_i(x) - \nabla f_i(y)\|_2 \leq L\|y - x\|_2, \quad \forall x,y\in \mathbb{R}^d.
\end{equation*}

\label{asmp1}
\end{assumption}
\begin{assumption}\textbf{(Mixing Rate)} Each subnet $\mathcal{C}_1, \ldots, \mathcal{C}_S$ has a strongly connected graph, with doubly stochastic weight matrix $W_{s} \in \mathbb{R}^{m_{s} \times m_{s}}$. Defining $J_{s} = \frac{\mathbf{11^\top}}{m_{s}}$, there exists a constant $\rho_{s}\in (0,1]$ such that
\begin{equation*}
    \| X (W_{s} - J_{s})\|_F^2 \leq (1 - \rho_{s})\| X(I - J_{s})\|_F^2, \quad \forall X \in \mathbb{R}^{d \times m_{s}}.
\end{equation*}

\label{asmp2}
\end{assumption}
\begin{assumption}\textbf{(Bounded Variance)}
Variances of each client's stochastic gradients are uniformly bounded.
\begin{align*}
    \mathbb{E}_{\xi \sim \mathcal{D}_i}\|\nabla f_i(x;\xi) - \nabla f_i(x)\|_2^2 \leq \sigma^2, \forall i \in [1, n], \forall x \in \mathbb{R}^d.
\end{align*}

\label{assmp3}
\end{assumption}

\textcolor{black}{These assumptions are widely used in the theoretical analysis of FL~\cite{wang2019adaptive,lin2021semi,karimireddy2020scaffold} and of stochastic optimization and gradient-based methods more generally~\cite{nesterov2013introductory,stich2018local,mishchenko2022proxskip}. While computing exact global 
$L$-smoothness constants for deep networks is generally intractable, this issue can be mitigated via smooth activation functions or spectral normalization. More importantly, our algorithm does not require knowledge of the value of smoothness constant in practice.}

\textcolor{black}{
Assumption \ref{asmp2} can be extended to strongly connected time varying-networks by assuming $\rho_s = \min_t \{\rho_s^t\}$, where $\rho_s^t$ is the connectivity constant for subnet $s$ during round $t$. A more detailed discussion regarding time-varying node mobility and link bandwidth limitations is given in Appendix~\ref{appen:discuss_topology}.
}
\vspace{-0.1in}
\subsection{Convergence Analysis Results}
\label{sec:IV-B}
Here we provide the obtained convergence bounds for our algorithm in non-convex, weakly-convex and strongly-convex settings. To facilitate this, we define $\overline{x}_g^t = \mathbb{E}_{\mathcal{C}'_s}[x_g^t]$ to be the expected server model taken over the sample sets $\mathcal{C}'_s$ across subnets $s$. %A sketch of the proof will be presented in Sec. \ref{sec:IV-D}.
% \textcolor{red}{TODO:} Redo the proof and write sketch.

\begin{theorem}\textbf{(Non-convex)}
Under Assumptions \ref{asmp1}, \ref{asmp2}, and \ref{assmp3}, let $\beta_{s} = \frac{m_{s} - h_{s}}{m_{s}}$ be the ratio of unsampled clients from each subnet. Define $p = \min(1 - \beta_{1}^2 , \ldots,1 - \beta_{S}^2)\in (0,1]$, $q = \min(\rho_{1}, \ldots, \rho_{S})\in (0,1]$, and the function value optimality gap for time $t$ as $\mathcal{E}_t = \mathbb{E}f(\overline{x}_\mathrm{g} ^t) - f(x^\star)$. Then, with a constant step size $\gamma < \mathcal{O}(\frac{p^2q}{KL})$, for any $T > K$, we have:
{\begin{align}
    \textstyle \frac{1}{T}\sum_{t=1}^T\mathbb{E}\|\nabla f(\overline{x}_\mathrm{g} ^t)\|^2 
    \leq &\textstyle\mathcal{O}\bigg(\frac{\mathcal{E}_1}{TK\gamma} + \frac{L\gamma\sigma^2}{2n}     +\frac{L^2K\gamma^2\sigma^2}{p^4q^2}\bigg).
\end{align}}
\label{thm1}
\end{theorem}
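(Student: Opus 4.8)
The plan is to derive a one-step descent inequality for a Lyapunov function that augments the optimality gap $\mathcal{E}_t$ with consensus- and tracking-error terms, and then telescope over the outer index $t$. Throughout I would work directly with the expected server model $\overline{x}_\mathrm{g}^t=\mathbb{E}_{\mathcal{C}'_s}[x_\mathrm{g}^t]$ so that the client-sampling bias is absorbed into the analysis rather than appearing as a separate uncontrolled error. The foundational observation comes from the initialization: $\nabla f_i(x^0)+y_i^1+z_i^1$ collapses to the global average gradient $\frac{1}{n}\sum_j\nabla f_j(x^0)$, and moreover the invariant $\sum_{i\in\mathcal{C}_s}z_i^t=0$ within every subnet is preserved by the doubly-stochastic update \eqref{eq6}. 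These invariants are precisely what let the corrected direction $\nabla f_i+y_i^t+z_i^t$ track the true global gradient $\nabla f$ without any data-heterogeneity assumption.

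First I would apply the descent lemma (Assumption~\ref{asmp1}) to $f(\overline{x}_\mathrm{g}^{t+1})$. Using the server update \eqref{eq9}--\eqref{eq10} together with the $y$-cancellation built into \eqref{eq7}--\eqref{eq8}, I would rewrite $\mathbb{E}[\overline{x}_\mathrm{g}^{t+1}-\overline{x}_\mathrm{g}^t]$ as $-K\gamma\,\nabla f(\overline{x}_\mathrm{g}^t)$ plus three perturbations: the within-subnet consensus gap, the $z$-tracking residual, and the stochastic-gradient noise controlled by Assumption~\ref{assmp3}. The $-K\gamma$ factor arises because the accumulated change $x_j^{t,K+1}-x_j^{t,1}$ contains $K$ gradient steps, the added $K\gamma y_j^t$ cancels the between-subnet term, and the $z$-terms cancel on subnet-average. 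This yields a preliminary inequality with a leading $-\Theta(K\gamma)\|\nabla f(\overline{x}_\mathrm{g}^t)\|^2$ descent term offset by the perturbations and an $\frac{L\gamma\sigma^2}{n}$ variance term.

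The bulk of the work lies in the coupled recursions for the auxiliary quantities: the within-subnet consensus error $\Omega_t=\sum_{s}\sum_{i\in\mathcal{C}_s}\mathbb{E}\|x_i^{t,k}-\overline{x}_{\mathcal{C}_s}^{t,k}\|^2$, the within-subnet tracking error of $z_i^t$, and the between-subnet tracking error of $y_i^t$. The mixing step \eqref{eq4} contracts $\Omega_t$ by the factor $(1-\rho_s)$ under Assumption~\ref{asmp2}, which is the source of the $q=\min_s\rho_s$ dependence, while the sampled averaging \eqref{eq8} governs the $y_i^t$ residual through the factor $\beta_s^2$, producing the $p=\min_s(1-\beta_s^2)$ dependence. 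Each recursion must be carried across the $K$ inner D2D rounds, during which gradient steps repeatedly perturb consensus and the stale $y,z$ terms feed back into one another; Assumption~\ref{assmp3} injects the $\sigma^2$ noise at each step.

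Finally I would form $V_t=\mathcal{E}_t+c_1\Omega_t+c_2(\text{$z$-error})+c_3(\text{$y$-error})$ and choose the weights $c_1,c_2,c_3$ so that every cross-perturbation is dominated by the corresponding contraction; the step-size restriction $\gamma<\mathcal{O}(p^2q/(KL))$ is precisely what forces all net error coefficients to be negative. Telescoping $V_{t+1}\le V_t-\Theta(K\gamma)\|\nabla f(\overline{x}_\mathrm{g}^t)\|^2+(\text{noise})$ over $t=1,\dots,T$, dividing by $\Theta(K\gamma T)$, and using $V_{T+1}\ge 0$ with $V_1=\mathcal{O}(\mathcal{E}_1)$ yields the three stated terms. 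The main obstacle is controlling the two tracking recursions simultaneously: $p$ and $q$ enter multiplicatively, and obtaining the sharp $p^4q^2$ denominator---rather than a looser power---requires a delicate balance of how the $y$- and $z$-errors cross-contaminate over the $K$-round inner loop, which is exactly where the two-timescale structure of {\tt SD-GT} makes the analysis harder than single-tracking-term decentralized SGD.
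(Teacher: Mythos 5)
Your overall strategy---work with the expected server model $\overline{x}_\mathrm{g}^t$, exploit the tracking invariants from initialization, set up contraction recursions for the auxiliary errors, assemble a weighted Lyapunov function, and telescope---is the same skeleton the paper uses, and several of your ingredients are correct: the invariant $\sum_{i\in\mathcal{C}_s} z_i^t = 0$ and $\sum_i \mathbb{E}[y_i^t]=0$ do hold (so the expected server update is exactly $-\tfrac{\gamma}{n}\sum_{i,k}\nabla f_i(x_i^{t,k},\xi_i^{t,k})$), the $q$-dependence does come from the D2D mixing contraction, and the $(1-p/2)$-type contraction of the $y$-residual is the paper's Lemma~\ref{lem2}.

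However, there is a genuine gap: your Lyapunov function $V_t=\mathcal{E}_t+c_1\Omega_t+c_2(\text{$z$-error})+c_3(\text{$y$-error})$ is missing the \emph{sampling-gap} term, and no combination of your three auxiliary quantities can substitute for it. After the global step, only the $h_s$ sampled clients in each subnet are reset to $x_\mathrm{g}^{t+1}$; the unsampled ones carry $x_i^{t,K+1}$ into the next outer round. Your claim that this bias is ``absorbed'' by working with $\overline{x}_\mathrm{g}^t$ is where the argument breaks: taking expectation over the sampling does not remove the persistent deviation of unsampled clients' \emph{models} from the global reference, and that deviation is exactly what enters the descent inequality (the perturbation there is $\tfrac{1}{n}\sum_i\mathbb{E}\|x_i^{t,k}-\overline{x}_\mathrm{g}^t\|^2$, deviation from the \emph{global} model). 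Your $\Omega_t$ measures deviation from the \emph{subnet average} only: if an entire subnet of unsampled clients agrees internally but sits far from $\overline{x}_\mathrm{g}^t$, then $\Omega_t=0$ while the descent perturbation is large, and the $y$-/$z$-tracking errors (which concern the correction variables, not the models) cannot see it either. The paper resolves this with a fourth Lyapunov component $\Gamma_t = \tfrac{1}{n}\sum_i\mathbb{E}\|x_i^{t-1,K+1}-\overline{x}_\mathrm{g}^t\|^2$, which has its own $(1-\tfrac{p}{2})$ recursion (Lemma~\ref{lem4}); the drift over the $K$ inner rounds is then bounded within one outer round (Lemma~\ref{lem1}) by $(1-p)\Gamma_t$ plus tracking errors, rather than relying on subnet-consensus contraction. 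This extra channel is also where part of the $p^4$ in the final bound comes from, so attributing the $p^4q^2$ denominator solely to cross-contamination of the $y$- and $z$-recursions misdiagnoses the difficulty. A secondary, more minor imprecision: since the $z$-terms cancel on subnet average and the $y$-terms cancel in the global expectation, no tracking residual appears directly in the expected drift $\mathbb{E}[\overline{x}_\mathrm{g}^{t+1}-\overline{x}_\mathrm{g}^t]$; the tracking errors enter only indirectly, through the bound on the model drift, so your ``three perturbations'' decomposition of the descent step needs to be restructured accordingly.
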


In Theorem \ref{thm1}, $p$ captures the lowest sampling ratio across subnets by the server, and $q$ captures the lowest information mixing ability of D2D communications across subnets. Large $p$ indicates the server samples a large amount of clients, and a large $q$ indicates the connectivity of every subnet is high.
% Theorem \ref{thm1} shows that our algorithm converges to a radius around some stationary
% point and that the radius can be controlled by the step size \shiqiang{It's a bit risky to mention this since it may give the impression that your algorithm may not converge to zero error}.\evan{But isn't this is a common result for stochastic gradient?}\shiqiang{People who are not familiar with details of such analysis may not know about it.} \evan{OK, I'll remove this part.}

\textcolor{black}{The step size condition $\gamma < O\left(\frac{p^2 q}{KL}\right)$ arises naturally from the Lyapunov-based convergence analysis that we will describe in Sec.~\ref{sec:IV-C}. It can be compared with existing literature in federated optimization, where smoothness constants $L$, local update rounds $K$, and network topology terms $q$ appear in convergence bound conditions~\cite{karimireddy2020scaffold,lin2021semi}.}
By carefully choosing the step size, we can further achieve the following result, guaranteeing convergence to a stationary point.
\begin{corollary}
\label{corr:noncon}
Under the same conditions as in Theorem \ref{thm1}, by choosing a constant step size
$$\gamma = \min\left\{\left(\frac{2\mathcal{E}_1 Kn}{L\sigma^2 T}\right)^{1/2}, \left(\frac{2\mathcal{E}_1 p^4q^2 K}{L\sigma^2 T}\right)^{1/3}, \frac{p^2q^2}{945 KL}\right\},$$ {\tt SD-GT} obtains the following rate:
\begin{align}
    &\textstyle\frac{1}{T}\sum_{t=1}^T\mathbb{E}\|\nabla f(\overline{x}_\mathrm{g} ^t)\|^2 
    = \mathcal{O}\bigg(\sqrt{\frac{\mathcal{E}_1\sigma^2L}{nTK}} + (\frac{\mathcal{E}_1L\sigma}{\sqrt{K}Tp^2q})^{\frac{2}{3}} + \frac{\mathcal{E}_1L}{Tp^2q}\bigg).
\label{eq28}
\end{align}
\vspace{-0.15in}

\label{cor1}
\end{corollary}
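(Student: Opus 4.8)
The plan is to treat this as a routine constant-step-size tuning argument applied to the three-term bound of Theorem~\ref{thm1}. First I would recast that bound into the canonical form
\begin{equation*}
\frac{1}{T}\sum_{t=1}^T \mathbb{E}\|\nabla f(\overline{x}_\mathrm{g}^t)\|^2 \;\leq\; \frac{r_0}{\gamma T} + b\gamma + e\gamma^2,
\end{equation*}
valid for all $0 < \gamma \leq \gamma_{\max}$, collecting the constants as $r_0 = \Theta(\mathcal{E}_1/K)$, $b = \Theta(L\sigma^2/n)$, $e = \Theta(L^2K\sigma^2/(p^4q^2))$, and $\gamma_{\max} = \Theta(p^2q^2/(KL))$, where $\gamma_{\max}$ is exactly the step-size ceiling under which Theorem~\ref{thm1} is stated. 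The point to exploit is that the first term decreases in $\gamma$ while the other two increase, so the bound is driven by balancing whichever increasing term dominates.

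Next I would invoke the standard step-size selection lemma (of the Koloskova--Stich type pervasive in the decentralized SGD literature, to be stated and proved in the supplement): for any bound of the above form there is a choice
\begin{equation*}
\gamma = \min\Big\{ \big(\tfrac{r_0}{bT}\big)^{1/2},\; \big(\tfrac{r_0}{eT}\big)^{1/3},\; \gamma_{\max}\Big\}
\end{equation*}
under which the right-hand side is at most a constant multiple of $\frac{r_0}{\gamma_{\max}T} + \big(\tfrac{br_0}{T}\big)^{1/2} + \big(\tfrac{er_0^2}{T^2}\big)^{1/3}$. Substituting $r_0 = \mathcal{E}_1/K$ reproduces the three candidate step sizes listed in the corollary, with $\gamma_{\max}$ as the third entry. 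The mechanism behind the lemma is a short case analysis on which candidate attains the $\min$: when $\gamma_{\max}$ binds (the small-$T$ regime) the decreasing term contributes $\frac{r_0}{\gamma_{\max}T}$; when a balancing choice binds, it equates the first term with $b\gamma$ or $e\gamma^2$, so that pair is of a common order captured by the matching $(\cdot)^{1/2}$ or $(\cdot)^{1/3}$ expression. In every case the non-binding increasing term is no larger, so it is absorbed into the sum.

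Finally I would substitute the three candidates to read off the rate: $\big(\tfrac{br_0}{T}\big)^{1/2}$ yields $\mathcal{O}\big(\sqrt{\mathcal{E}_1\sigma^2 L/(nTK)}\big)$, $\big(\tfrac{er_0^2}{T^2}\big)^{1/3}$ yields $\mathcal{O}\big((\mathcal{E}_1 L\sigma/(\sqrt{K}Tp^2q))^{2/3}\big)$, and $\frac{r_0}{\gamma_{\max}T}$ yields $\mathcal{O}(\mathcal{E}_1 L/(Tp^2q^2))$, recovering the three terms of \eqref{eq28} (the discrepancy between $q$ and $q^2$ in the last term traces to the step-size-ceiling convention between Theorem~\ref{thm1} and the corollary). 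The main obstacle is bookkeeping rather than conceptual: the factors of $K$ sit unevenly across $r_0$, $b$, $e$, and $\gamma_{\max}$, so one must track them carefully through each substitution to land on the exact $K^{-1}$, $K^{-1/2}$, $K^{0}$ dependencies claimed — in particular the $\sqrt{K}$ in the middle term arises from $(p^4q^2)^{1/2}=p^2q$ together with the lone surviving power of $K$ in $er_0^2$. A secondary check is self-consistency of the $\min$: one must confirm that each balancing choice respects $\gamma \leq \gamma_{\max}$ whenever it is selected, which holds precisely because those regimes correspond to $T$ large enough that the ceiling is inactive.
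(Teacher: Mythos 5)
Your proposal is correct and is essentially the paper's own proof: the paper recasts Theorem~\ref{thm1}'s bound in exactly this canonical form (using the effective step size $K\gamma$), invokes precisely the Koloskova--Stich-type unrolling lemma you describe (its Lemma~\ref{unroll_lem}, cited from Lemma~15 of Koloskova et al.\ and Lemma~C.5 of Liu et al.), and reads off the three terms by the same substitutions. Your remark on the $q$ versus $q^2$ mismatch in the last term is also accurate --- it stems from the paper's own inconsistency between the step-size ceiling $\frac{p^2q}{945KL}$ used in Theorem~\ref{thm1}'s proof and the ceiling $\frac{p^2q^2}{945KL}$ stated in the corollary, not from a flaw in your argument.
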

% \begin{proof}
%     Follows from plugging Lemma \ref{unroll_lem} into Theorem \ref{thm1}.
%     % \evan{capitalize the word lemma, add space between numbers}
% \end{proof}
% Theorem \ref{thm1} shows the sublinear convergence of the central server's model $x_g^t$ towards a stationary point when $T$ is sufficiently large. 
% We're also able to remove the need of any data-heterogeneity assumption, which means that even if the dataset distributed throughout the network is non-IID, the convergence results will still be bounded under this rate.
In \eqref{eq28}, the first two terms on the right side capture the effect of stochastic gradient variance $\sigma$ on convergence. Choosing a larger number of local D2D rounds $K$ decreases the effect of stochasticity on the convergence. We can also observe that the bound becomes better when the values of $p$ and $q$ are large. Note that the network topology which would give the smallest bound is $p = q = 1$, which is also the most resource-inefficient case: $p = 1$ means the server samples all clients from each subnet every time DS communication is performed, and $q = 1$ means the topology of every in-subnet D2D communication is a fully connected graph. This emphasizes the importance of learning-efficiency co-optimization for {\tt SD-GT}, as we will consider through our control algorithm in Sec.~\ref{sec:IV-E}.
% \chris{We need to try to spice this up a bit. Take a look at pages 15-16 in https://arxiv.org/ab/2305.13503. Try to bring out some of the interesting points in the different terms. Maybe the fact that $p_1$ and $p_2$ are multiplied? A joint effect?}

\textcolor{black}{
We now present a stricter assumption~\cite{tian2018asy} that guarantees a better convergence rate in Theorem \ref{thm2} and \ref{thm3}. 
}

\begin{assumption}\textbf{($\mu$-strongly-convex)}
    Every local objective function $f_i: \mathbb{R}^d \rightarrow \mathbb{R}$ is $\mu$-strongly convex with $0 \leq \mu \leq L$.
\begin{equation*}
    f(y) \leq f(x) + \nabla f(x)^\top(y-x) + \frac{\mu}{2}\|y - x\|^2, \quad \forall x,y\in \mathbb{R}^d.
\end{equation*}

\label{assmp4}
\end{assumption}
\textcolor{black}{Although most overparameterized neural networks have non-convex loss landscapes, strong convexity remains relevant in domains employing shallow neural networks to prioritize fast convergence and solution stability. As we will see, it is possible to show a linear convergence rate for {\tt SD-GT} in this special case (Theorem 3).}
By adding the assumption of convexity, we are able to get a stronger convergence result.
\begin{theorem}\textbf{(Weakly-convex)}
Under Assumptions \ref{asmp1}, \ref{asmp2}, \ref{assmp3}, and \ref{assmp4} with $\mu = 0$, with a constant step size $\gamma < \mathcal{O}(\frac{p^2q^2}{K^{3/2}L})$ and any $T > K$, we have
\begin{align}
       &\textstyle\frac{1}{T}\sum_{t=1}^{T}\mathbb{E} (f(\Bar{x}_g^t) - f(x^\star))\notag\\
       % &\mathcal{O}\bigg(\frac{\mathbb{E}||\overline{x}_g^1 - x^*||^2}{\gamma KT} + \frac{\mathcal{G}_1}{T} + \frac{\gamma  \sigma^2}{n}  + \frac{\gamma^2 KL \sigma^2}{p^4q^4}\bigg)\\
       =& \mathcal{O}\bigg(\textstyle\frac{\mathbb{E}\|\overline{x}_g^1 - x^\star\|^2}{\gamma KT}
       \textstyle+ \textstyle\frac{\gamma  \sigma^2}{n} + \frac{\gamma^2 KL \sigma^2}{p^4 q^4}\bigg).
\end{align}
\label{thm2}
\end{theorem}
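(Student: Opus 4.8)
The plan is to mimic the standard descent argument for convex decentralized optimization, adapted to the two-timescale structure and the two tracking terms of {\tt SD-GT}. Since $\mu = 0$ precludes any geometric contraction, I would track the squared distance to the optimum $\mathbb{E}\|\overline{x}_g^t - x^\star\|^2$ and aim for a telescoping inequality of the form
\begin{align*}
\mathbb{E}\|\overline{x}_g^{t+1} - x^\star\|^2 \leq \mathbb{E}\|\overline{x}_g^t - x^\star\|^2 - 2\gamma K\, \mathbb{E}(f(\overline{x}_g^t) - f(x^\star)) + R_t,
\end{align*}
where $R_t$ collects the error contributions from D2D consensus, gradient-tracking lag, stochastic-gradient variance, and client sampling. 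Summing over $t = 1,\ldots,T$ and dividing by $2\gamma K T$ then yields the claimed bound, provided the averaged residual $\frac{1}{T}\sum_t R_t$ reduces to the stated $\frac{\gamma\sigma^2}{n} + \frac{\gamma^2 K L\sigma^2}{p^4 q^4}$ terms.

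First I would expand the server update. Taking expectation over the sampling sets $\mathcal{C}'_s$, the aggregated increment $\Tilde{x}_g^{t+1}$ is unbiased for the full-client average $\frac{1}{n}\sum_i \Tilde{x}_i^{t+1}$. Using the within-subnet recursion \eqref{eq3}--\eqref{eq4} together with the double stochasticity of each $W_s$, this telescopes to $\overline{x}_g^{t+1} = \overline{x}_g^t - \frac{\gamma}{n}\sum_{k=1}^K\sum_i \nabla f_i(x_i^{t,k};\xi_i^{t,k})$, because the $K\gamma y_i^t$ correction in \eqref{eq7} cancels the $y_i^t$ contribution and the conservation property $\sum_{i\in\mathcal{C}_s} z_i^t = 0$ (a consequence of the block-diagonal doubly-stochastic $W$ and the initialization) removes the within-subnet tracking term. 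I would then add and subtract $\nabla f(\overline{x}_g^t)$ inside the averaged gradient, apply $L$-smoothness (Assumption \ref{asmp1}) to absorb $\|x_i^{t,k} - \overline{x}_g^t\|$ into a consensus-error term, and invoke convexity (Assumption \ref{assmp4} with $\mu = 0$) through $\langle\nabla f(\overline{x}_g^t),\overline{x}_g^t - x^\star\rangle \geq f(\overline{x}_g^t) - f(x^\star)$ to extract the $-2\gamma K(f(\overline{x}_g^t) - f(x^\star))$ term. The squared increment $\|\frac{\gamma}{n}\sum_{k,i}\nabla f_i\|^2$ splits into a deterministic part, controlled via smoothness and folded back into the descent term for $\gamma$ small enough, and a variance part that under Assumption \ref{assmp3} contributes $\frac{\gamma^2 K\sigma^2}{n}$ per step, i.e. the $\frac{\gamma\sigma^2}{n}$ term after telescoping.

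The heart of the argument is bounding the consensus and tracking residuals, which I would establish through auxiliary lemmas proving a coupled contraction. Using Assumption \ref{asmp2}, each D2D round contracts the subnet consensus error $\sum_{i\in\mathcal{C}_s}\|x_i^{t,k} - \overline{x}_s^{t,k}\|^2$ by a factor $(1-\rho_s)$, while the within-subnet tracking term $z_i^t$ cancels the persistent heterogeneous drift so that only step-size- and variance-dependent quantities survive; in parallel the between-subnet error of $y_i^t$ is governed by its own recursion tied to the sampling process. I would bundle these into a Lyapunov potential $V^t = \mathbb{E}\|\overline{x}_g^t - x^\star\|^2 + c_1(\text{consensus}) + c_2(\text{tracking})$ and choose $c_1, c_2$ so that the cross terms telescope. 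This is where the step-size ceiling $\gamma < \mathcal{O}(p^2 q^2 / (K^{3/2}L))$ arises, and where the extra factor of $q^2$ relative to the non-convex bound appears, since the convex analysis must control the consensus error averaged over all $K$ inner rounds rather than only at aggregation time, bringing in an additional mixing factor.

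I expect the main obstacle to be this simultaneous contraction of the coupled consensus/tracking system while keeping data-heterogeneity constants entirely out of the bound, which is precisely what the two-tracking-term design is meant to achieve and must be verified through the cancellation identities above. The sampling randomness $p$ enters through the variance of $\Tilde{x}_g^{t+1}$ around its full-client mean; bounding this variance and showing it injects only an $\mathcal{O}(\gamma^2 K L\sigma^2 / (p^4 q^4))$ residual per step---rather than a heterogeneity-dependent term---relies on the unbiasedness of the tracking correction and is the most delicate bookkeeping. Once $R_t$ is shown to be $\gamma K$ times the stated variance terms, telescoping $V^t$ over $t$ and dividing by $T$ closes the proof.
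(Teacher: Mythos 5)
Your proposal follows essentially the same route as the paper's proof: the paper likewise works with the expected server iterate, uses the tracking-term cancellation to obtain the unbiased update $\overline{x}_g^{t+1} = \overline{x}_g^t - \frac{\gamma}{n}\sum_{i,k}\nabla f_i(x_i^{t,k},\xi_i^{t,k})$, derives a convex descent inequality in $\mathbb{E}\|\overline{x}_g^t - x^\star\|^2$ with the function-value gap as the descent quantity (Lemma~\ref{lem6}), and controls the coupled consensus/tracking/sampling residuals $\Delta_t, \Gamma_t, \mathcal{Y}_t, \mathcal{Z}_t$ via contraction lemmas, combining them into a weighted potential $\mathcal{G}_t$ that is unrolled and absorbed through the step-size ceiling, with the prescribed initialization giving $\mathcal{Y}_1, \mathcal{Z}_1 \leq \sigma^2$ so no heterogeneity constant enters. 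The only cosmetic difference is bookkeeping: rather than a single Lyapunov function, the paper telescopes the distance recursion first and then separately unrolls the $\mathcal{G}_t$ recursion (converting $\mathbb{E}\|\nabla f(\overline{x}_g^t)\|^2 \leq 2L\,\mathbb{E}(f(\overline{x}_g^t)-f(x^\star))$ to couple it back to the function gap before absorbing), which is the same argument in two stages.
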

The convergence rate of Theorem \ref{thm2} is similar to that of Theorem~\ref{thm1}, but the bound indicates convergence towards the global optimal solution instead of a stationary point. 
By choosing a specific step size, we obtain the following result, which shows convergence to the optimal global model.
\begin{corollary} Under the same conditions as in Theorem \ref{thm2}, by choosing a constant step size
\begin{align}\gamma =& \min\bigg\{ \bigg( \frac{\mathbb{E}\|\overline{x}_g^1 - x^\star\|^2 Kn}{\sigma^2 T} \bigg)^{1/2},\\\notag &\frac{1}{39} \bigg(\frac{\mathbb{E}\|\overline{x}_g^1 - x^\star\|^2 K p^4 q^4}{L\sigma^2 T} \bigg)^{1/3}, \frac{p^2q^2}{101 K^{3/2}L}\bigg\},\end{align} {\tt SD-GT} obtains the following rate:
\begin{align}
    \textstyle
    \frac{1}{T}&\sum_{t=1}^{T}\mathbb{E} (f(\Bar{x}_g^t) - f(x^\star))
       = \mathcal{O}\bigg(\frac{ L\mathbb{E}\|\overline{x}_g^1 - x^\star\|^2}{T} \notag\\
       \textstyle&+\bigg(\frac{\sqrt{L}\mathbb{E}\|\overline{x}_g^1 - x^\star\|^2\sigma}{p^2q^2\sqrt{K}T}\bigg)^{\frac{2}{3}} + \sqrt{\frac{\mathbb{E}\|\overline{x}_g^1 - x^\star\|^2\sigma^2}{nKT}}\bigg).
       \label{new_eq14}
\end{align}
\label{cor2}
\end{corollary}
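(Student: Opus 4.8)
The plan is to obtain Corollary~\ref{cor2} directly from Theorem~\ref{thm2} by plugging in the prescribed step size; no fresh analysis of {\tt SD-GT} itself is needed, since all of the algorithmic work is already encapsulated in the three-term bound of Theorem~\ref{thm2}. Writing $r_1 := \mathbb{E}\|\overline{x}_g^1 - x^\star\|^2$ for brevity, I would first read off that bound as a scalar function of the step size, $\Psi(\gamma) = \frac{a}{\gamma T} + b\gamma + c\gamma^2$, with $a = r_1/K$, $b = \sigma^2/n$, and $c = KL\sigma^2/(p^4q^4)$, valid on the feasible range $\gamma \le \gamma_{\max}$ where $\gamma_{\max} = \mathcal{O}(p^2q^2/(K^{3/2}L))$ is the cap required by Theorem~\ref{thm2}. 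The three arguments of the $\min$ defining the step size are then precisely the three natural tuning choices: $\gamma_1$ balances $a/(\gamma T)$ against $b\gamma$, $\gamma_2$ balances $a/(\gamma T)$ against $c\gamma^2$, and $\gamma_3 = \gamma_{\max}$ enforces feasibility.

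The heart of the argument is a standard step-size tuning lemma, which I would establish by a short case analysis driven by the $\min$. Since $\gamma \le \gamma_1 = (a/(bT))^{1/2}$, we get $b\gamma \le (ab/T)^{1/2}$; since $\gamma \le \gamma_2 = (a/(cT))^{1/3}$, we get $c\gamma^2 \le c^{1/3}(a/T)^{2/3}$; and for the first term I use $1/\gamma = \max\{1/\gamma_1, 1/\gamma_2, 1/\gamma_3\} \le 1/\gamma_1 + 1/\gamma_2 + 1/\gamma_3$, which contributes one copy each of $(ab/T)^{1/2}$, $c^{1/3}(a/T)^{2/3}$, and $ad/T$ with $d = 1/\gamma_{\max}$. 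Summing gives $\Psi(\gamma) \le 2(ab/T)^{1/2} + 2c^{1/3}(a/T)^{2/3} + ad/T$, and feasibility of the chosen $\gamma$ is automatic because it is defined as a minimum that includes $\gamma_{\max}$, so the hypothesis of Theorem~\ref{thm2} holds without further checking.

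It then remains to substitute $a, b, c, d$ and simplify each of the three terms. The balanced pair $(ab/T)^{1/2}$ collapses to $\sqrt{r_1\sigma^2/(nKT)}$, matching the last term of \eqref{new_eq14}; the pair $c^{1/3}(a/T)^{2/3}$ collapses to $(\sqrt{L}\,r_1\sigma/(p^2q^2\sqrt{K}\,T))^{2/3}$, matching the middle term; and $ad/T$ reduces to $\mathcal{O}(Lr_1/T)$ once the residual $\sqrt{K}/(p^2q^2)$ factors are absorbed into the constant, matching the leading term.

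I anticipate no conceptual obstacle, as the genuine convergence content is inherited from Theorem~\ref{thm2}; the one place that demands care is the bookkeeping of the exponents of $K$, $p$, and $q$ through the square-root and cube-root balancing. A single misplaced power of $K$ flips which of the three terms dominates and alters the prescribed step size, so I would re-verify each substitution of $a = r_1/K$ and $c = KL\sigma^2/(p^4q^4)$ rather than reuse the bare tuning-lemma template.
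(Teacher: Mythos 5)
Your overall route---reading Theorem~\ref{thm2} as a scalar function $\frac{a}{\gamma T}+b\gamma+c\gamma^2$ of the step size and running the standard tuning case analysis with the feasibility cap as the third argument of the $\min$---is exactly the paper's strategy: the paper invokes its unroll-recursion Lemma~\ref{unroll_lem}, whose proof is precisely the case analysis you spell out. Writing $r_1=\mathbb{E}\|\overline{x}_g^1-x^\star\|^2$ as you do, your identification $a=r_1/K$, $b=\sigma^2/n$, $c=KL\sigma^2/(p^4q^4)$ and your evaluation of the two balanced terms, $\sqrt{ab/T}=\sqrt{r_1\sigma^2/(nKT)}$ and $c^{1/3}(a/T)^{2/3}=\big(\sqrt{L}r_1\sigma/(p^2q^2\sqrt{K}T)\big)^{2/3}$, are correct and match the last two terms of \eqref{new_eq14}. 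One bookkeeping remark: the corollary's first two $\min$ arguments are \emph{not} your $\gamma_1,\gamma_2$; they exceed them by a factor of $K$, because the paper applies the lemma to the product $K\gamma$ rather than to $\gamma$ (its proof reads "$\frac{r_0}{T}\frac{1}{K\gamma}+bK\gamma+eK^2\gamma^2$" with "$K\gamma\le 1/u$"), so the stated arguments are values of $K\gamma$. Your claim that they are "precisely" your tuning choices is therefore off by powers of $K$, though your own choices would serve equally well for the first two terms.

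The genuine gap is your last step. Your residual term is $ad/T$ with $d=1/\gamma_{\max}=\Theta\big(K^{3/2}L/(p^2q^2)\big)$, i.e., $ad/T=\Theta\big(\sqrt{K}\,L\,r_1/(p^2q^2\,T)\big)$, and you dispose of the factor $\sqrt{K}/(p^2q^2)$ by "absorbing it into the constant." That is not legitimate: $K$, $p$, $q$ are problem parameters tracked explicitly everywhere else in the bound (tracking them is the point of the corollary), so $\sqrt{K}/(p^2q^2)$ is not an absolute constant and $\sqrt{K}Lr_1/(p^2q^2T)\neq\mathcal{O}(Lr_1/T)$. You warned against exactly this kind of misplaced power of $K$, and this is where it bites. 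The paper avoids this term by a different accounting: since it tunes $K\gamma$, it sets the lemma's cap parameter to $u=L$ (i.e., $K\gamma\le 1/L$), which makes the residual $ur_0/T=\mathcal{O}(Lr_1/T)$---but that cap is weaker than Theorem~\ref{thm2}'s actual hypothesis $\gamma\lesssim p^2q^2/(K^{3/2}L)$, and when that stricter cap is the active branch of the $\min$, the term $\frac{r_1}{\gamma KT}$ is exactly your $\Theta\big(\sqrt{K}Lr_1/(p^2q^2T)\big)$. So your honest accounting and the paper's clean first term are genuinely in tension: a complete proof must either keep the $\sqrt{K}/(p^2q^2)$ factor in the leading term of the rate, or justify why the step-size hypothesis of Theorem~\ref{thm2} remains satisfied while using $u=L$ in the lemma. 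As written, your absorption step is an unsupported claim, not a proof.
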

From \eqref{new_eq14}, we can see that the bound has a similar structure to the bound for the non-convex case in \eqref{eq28}, where there are two terms related to the stochasticity of the gradients, and one term $\mathcal{O}(\frac{1}{T})$ proportional to the initial error. 

If we further assume strong convexity, we can obtain a faster rate of convergence, provided the gradient noise is small.

\begin{theorem}\textbf{(Strongly-convex)}
Under Assumptions \ref{asmp1}, \ref{asmp2},\ref{assmp3}, and \ref{assmp4} with $\mu > 0$, for a constant step size $\gamma \leq \overline{\gamma}$, where
\begin{align}
    \overline{\gamma} &= \min\bigg\{\frac{\min(p,q)\mu}{K(14L^2+240L^3)},\frac{1}{18KL}, \frac{4}{\mu K},\notag\\
    &\frac{\min(p, q)p}{2(45KL + 108KL^2 + K\mu)}, \frac{\min(p,q)^2}{2(86K^2+864K^2L+K\mu)}\bigg\}, \notag\\
\end{align}
then we have
\begin{align}
\textstyle\mathbb{E}\|\overline{x}_\mathrm{g} ^{T+1} \!- \!x^\star\|^2 &= \mathcal{O}\bigg( (1 - \frac{\mu K\gamma}{4})^T\mathbb{E}\|\overline{x}_\mathrm{g} ^1 - x^\star\|^2 \notag\\
    \textstyle&+ (1 - \frac{\mu K\gamma}{4})^T \gamma \sigma^2 + \|(I - A)^{-1}b\|_1^2\bigg),
\label{eq:24}
\end{align}
\label{thm3}
where $I$ is the identity matrix,
% \evan{expand the spacing between rows}
\small
\setlength{\arraycolsep}{1pt} % default: 5pt
\medmuskip = 1mu
%
% \begin{gather*}
%     \renewcommand*{\arraystretch}{2}
%     A=\begin{bmatrix}
%         1 - \mu K\gamma/2 & 9\gamma KL(1 \!-\! p) & 72K^3L\gamma^2 & 72K^3L\gamma^2\\
%         \frac{14}{p}\gamma^2K^2L^2 & 1 - \frac{p}{2} + \frac{36}{p}\gamma^2KL^2 & \frac{14}{p} K^2\gamma & \frac{14}{p} K^2\gamma\\
%         \frac{72}{p}\gamma^3K^3L^4 & \frac{30}{p}L^2K\gamma & 1 - \frac{p}{2} + \frac{240}{p}K^3\gamma^2L^2 & \frac{240}{p}K^3\gamma^2L^2\\
%         \frac{168}{q}\gamma^3K^3L^4 & \frac{78}{q}L^2K\gamma & \frac{624}{q}K^3\gamma^2L^2 & 1 - \frac{q}{2} + \frac{624}{q}K^3\gamma^2L^2\\
%     \end{bmatrix}
% \end{gather*}
%
\begin{gather*}
    \renewcommand*{\arraystretch}{1.5}
    A=\left(1-\frac{p}{2}\right)I +
    \gamma KL\begin{bmatrix}
        \frac{p-\mu \gamma K}{2\gamma KL} & 9(1 \!-\! p) & 72K^2\gamma & 72K^2\gamma\\
        \frac{14\gamma KL}{p} &  \frac{36\gamma L}{p} & \frac{14K}{pL} & \frac{14K}{pL} \\
        \frac{72\gamma^2K^2L^3}{p} & \frac{30L}{p} & \frac{240K^2\gamma L}{p} & \frac{240K^2\gamma L}{p}\\
        \frac{168\gamma^2K^2L^3}{q} & \frac{78L}{q} & \frac{624K^2\gamma L}{q} & \frac{624 K^2\gamma L}{q}\\
    \end{bmatrix},
\end{gather*}
\begin{gather*}
\renewcommand*{\arraystretch}{1.5}
    b = \begin{bmatrix}
        \frac{2\gamma^2K}{n} + 9K^2\gamma^3L\\
        K\gamma^2 + 3K^3\gamma^4L^2\\
        \frac{2\gamma}{qK} + \frac{30K^3\gamma^3L^2}{q}\\
        \frac{2\gamma}{qK} + \frac{78K^3\gamma^3L^2}{q}\\
    \end{bmatrix}\sigma^2.
\end{gather*}
\normalsize 
\end{theorem}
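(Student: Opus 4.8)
The plan is to reduce the theorem to a four-dimensional coupled linear recursion and then solve it by iteration. I would work with the non-negative error vector $\mathbf{e}^t = (e_1^t, e_2^t, e_3^t, e_4^t)^\top$, whose entries track, respectively, the expected optimality gap $\mathbb{E}\|\overline{x}_\mathrm{g}^t - x^\star\|^2$, the within-subnet consensus error (the averaged deviation of each client's model $x_i^{t,k}$ from its subnet mean), and the two gradient-tracking residuals attached to the between-subnet term $y_i^t$ and the within-subnet term $z_i^t$. The entire argument rests on establishing a one-step inequality $\mathbf{e}^{t+1} \leq A\mathbf{e}^t + b$, holding componentwise in expectation, with $A$ and $b$ exactly the matrix and vector in the statement; the bound \eqref{eq:24} then falls out of unrolling this recursion.

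First I would derive the descent inequality for $e_1^t$. Since each $f_i$ is $\mu$-strongly convex (Assumption~\ref{assmp4}) and $L$-smooth (Assumption~\ref{asmp1}), and since the cancellation steps \eqref{eq7}--\eqref{eq10} are engineered so that the aggregated direction is, in expectation over the sampling of $\mathcal{C}'_s$, an unbiased correction of the true global gradient, I expect $e_1^{t+1} \leq (1 - \tfrac{\mu K\gamma}{2}) e_1^t + (\text{coupling in } e_2,e_3,e_4) + (\text{noise})$. This is consistent with the matrix in the statement: its $(1,1)$ entry evaluates to $(1-\tfrac{p}{2}) + \gamma KL \cdot \tfrac{p - \mu\gamma K}{2\gamma KL} = 1 - \tfrac{\mu K\gamma}{2}$, confirming that strong convexity supplies this base contraction. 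For $e_2^t$ I would invoke the mixing assumption (Assumption~\ref{asmp2}): the block-diagonal doubly-stochastic $W$ shrinks the deviation-from-mean by a factor $1-q$, which is the source of the $q$-dependent entries in the lower rows of $A$ and in $b_3,b_4$. For $e_3^t$ and $e_4^t$ I would expand the tracking updates \eqref{eq6} and \eqref{eq10} and use $L$-smoothness to relate successive gradient differences to model drift, coupling each residual back to $e_1$ and $e_2$. All stochastic-gradient fluctuations are controlled by Assumption~\ref{assmp3} and collected into $b$; the $\tfrac{1}{n}$ in $b_1$ reflects variance reduction from averaging over the sampled clients, while the common diagonal $1-\tfrac{p}{2}$ is the contraction induced by partial-participation global aggregation, with $p=\min_s(1-\beta_s^2)$ measuring sampling strength.

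With $\mathbf{e}^{t+1} \leq A\mathbf{e}^t + b$ in hand, I would iterate to get $\mathbf{e}^{T+1} \leq A^T \mathbf{e}^1 + \sum_{j=0}^{T-1} A^j b$. The step-size ceiling $\overline{\gamma}$ is precisely what makes $A$ a non-negative convergent matrix: each of the five terms in the defining minimum forces a distinct group of the off-diagonal entries of $A - (1-\tfrac{p}{2})I$ to be small, so that $I-A$ is an invertible M-matrix with non-negative inverse. This renders $\sum_{j=0}^{\infty} A^j b = (I-A)^{-1}b$ convergent and non-negative, producing the steady-state floor $\|(I-A)^{-1}b\|_1^2$ (the $\ell_1$-norm and the square absorb the four-way coupling of the residual noise). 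For the transient $A^T\mathbf{e}^1$, I would exhibit positive weights $w_1,\dots,w_4$ under which $A$ contracts in the weighted max-norm at rate $1-\tfrac{\mu K\gamma}{4}$; the slowest mode is the optimality gap, whose base rate $1-\tfrac{\mu K\gamma}{2}$ is degraded to $1-\tfrac{\mu K\gamma}{4}$ by exactly the first-row coupling, which $\overline{\gamma}$ keeps below $\tfrac{\mu K\gamma}{4}$. Extracting the first component and folding the $\gamma\sigma^2$ initialization of the tracking terms into the second summand yields \eqref{eq:24}.

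The main obstacle I anticipate is the simultaneous derivation and constant-bookkeeping of the four coupled recursions, above all the descent inequality for $e_1$, which must correctly account for (i) the two tracking terms acting on the two timescales $k$ and $t$, and (ii) the bias and variance introduced by random client sampling of $\mathcal{C}'_s$ at the global step; this is why the analysis is carried out for the \emph{expected} server model $\overline{x}_\mathrm{g}^t$. Making the constants line up so the system closes with precisely this $A$ and $b$, and then checking that each factor in $\overline{\gamma}$ secures the required M-matrix and contraction-rate properties, is the delicate, calculation-heavy core; by contrast the unrolling and geometric-series steps are routine once the linear system is established.
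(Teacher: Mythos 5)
Your overall framework is exactly the paper's: the paper forms the state vector $\big(\mathbb{E}\|\overline{x}_\mathrm{g}^t - x^\star\|^2,\ \Gamma_t,\ \gamma\mathcal{Y}_t,\ \gamma\mathcal{Z}_t\big)^\top$, proves the componentwise recursion with precisely this $A$ and $b$ by combining its descent and tracking lemmas, bounds $\|A\|_1 \leq 1 - \frac{\mu K\gamma}{4}$ via the step-size conditions in $\overline{\gamma}$, unrolls, and bounds the noise series by $(I-A)^{-1}b$ before extracting the first component. However, there is a genuine gap in your identification of the state: your second coordinate is wrong. You take it to be the within-subnet consensus deviation (the drift of $x_i^{t,k}$ from its subnet mean, i.e., what the paper calls $\Delta_t$) and claim its contraction comes from the mixing matrix $W$ at rate governed by $q$. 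In the paper the second coordinate is the \emph{sampling gap} $\Gamma_t = \frac{1}{n}\sum_i \mathbb{E}\|x_i^{t-1,K+1} - \overline{x}_\mathrm{g}^t\|^2$, which propagates across global rounds only because unsampled clients retain stale models; its contraction factor $1-\frac{p}{2}$ and all of row two's couplings depend on $p$ alone — consistent with the stated $A$, whose second row contains no $q$ whatsoever. The consensus drift $\Delta_t$ cannot serve as a state variable: it is regenerated within each round from the post-aggregation configuration and admits no useful round-to-round contraction. The paper instead eliminates it by a separate within-round bound (the deviation lemma, Lemma~\ref{lem1}), expressing $\Delta_t$ in terms of $\Gamma_t$, $\mathcal{Y}_t$, $\mathcal{Z}_t$, $\mathbb{E}\|\nabla f(\overline{x}_\mathrm{g}^t)\|^2$, and noise, and substituting this into every other recursion; that substitution is where constants such as $14/p$, $30/p$, and $78/q$ arise. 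As described, your four recursions would not close into the stated $A$ and $b$.

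Two smaller points. First, to close the first row you must convert the gradient-norm terms appearing in all the lemmas into the optimality gap; the paper does this with $\mathbb{E}\|\nabla f(\overline{x}_\mathrm{g}^t)\|^2 \leq L^2\,\mathbb{E}\|\overline{x}_\mathrm{g}^t - x^\star\|^2$, which needs convexity together with smoothness — your sketch gestures at smoothness coupling but omits this specific step, without which the system is not linear in the four states. Second, your proposed weighted max-norm contraction would in principle work, but the particular constants in $\overline{\gamma}$ are tailored to the paper's simpler device: bounding the induced $\ell_1$-norm (maximum column sum) of $A$ and using the unweighted sum $\mathcal{L}_t = \mathbb{E}\|\overline{x}_\mathrm{g}^t - x^\star\|^2 + \Gamma_t + \gamma\mathcal{Y}_t + \gamma\mathcal{Z}_t$ as the Lyapunov function, so that $\mathcal{L}_{T+1} \leq (1-\frac{\mu K\gamma}{4})^T\mathcal{L}_1 + \|(I-A)^{-1}b\|_1$ follows immediately and the $\gamma\sigma^2$ transient in the theorem comes from the initialization $\mathcal{Y}_1, \mathcal{Z}_1 \leq \sigma^2$ entering $\mathcal{L}_1$ through the $\gamma$-scaled coordinates.
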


The second term on the RHS of \eqref{eq:24} is proportional to stochastic gradient noise $\sigma^2$, but can be controlled through an appropriate choice of step size $\gamma$. In the third term, the size of $b$ can also be limited through $\gamma$. By choosing a specific step size, we obtain the following result.
\begin{corollary}
    Under the same conditions as in Theorem \ref{thm3}, by choosing a constant step size
    \begin{align}
    \gamma = &\min\bigg\{\Bar{\gamma}, \frac{\ln(\max(1, \mu K(\mathbb{E}\|\overline{x}_\mathrm{g} ^{1} - x^\star\|^2 + \Bar{\gamma} 2\sigma^2)T/\sigma^2))}{\mu K T}\bigg\}\notag,\end{align} {\tt SD-GT} obtains the following rate:
{\begin{align}
    \textstyle\mathbb{E}\|\overline{x}_\mathrm{g} ^{T+1} - x^\star\|^2 \leq&\textstyle \Tilde{\mathcal{O}}\bigg(\exp( - pq\mu T)\cdot(\mathbb{E}\|\overline{x}_\mathrm{g} ^1 - x^\star\|^2 + \sigma^2) \notag\\
    &\textstyle+ \frac{\sigma^2}{\mu KT} + \frac{L^5\sigma^2}{\mu^5T^5pq}\bigg).
\label{eq:cor2}
\end{align}}
\label{cor3}
\end{corollary}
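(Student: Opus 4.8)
The plan is to obtain Corollary~\ref{cor3} from the single-shot estimate \eqref{eq:24} of Theorem~\ref{thm3} in three moves: convert the two geometric prefactors into exponentials, expand the steady-state term $\|(I-A)^{-1}b\|_1^2$ into an explicit polynomial in $\gamma$, and then balance the exponential decay against this polynomial using the prescribed step size. The overall template is the now-standard ``tune $\gamma$ in a geometric-plus-noise estimate'' argument (cf.\ the strongly convex analyses in \cite{mishchenko2022proxskip,karimireddy2020scaffold}), specialized to the coupled error system behind \eqref{eq:24}. As a first step I would handle the two decaying summands: applying $(1-\tfrac{\mu K\gamma}{4})^T \le \exp(-\tfrac{\mu K\gamma T}{4})$ turns them into $\exp(-\tfrac{\mu K\gamma T}{4})\big(\mathbb{E}\|\overline{x}_\mathrm{g}^1-x^\star\|^2 + \gamma\sigma^2\big)$, which is precisely the quantity the step-size choice is designed to drive down.

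The core estimate is bounding the floor $\|(I-A)^{-1}b\|_1^2$ as a function of $\gamma$. Writing $A = (1-\tfrac p2)I + \gamma K L\,M$, with $M$ the bracketed matrix in the definition of $A$, gives $I-A = \tfrac p2 I - \gamma KL\,M$, so that $(I-A)^{-1} = \tfrac2p\big(I - \tfrac{2\gamma KL}{p}M\big)^{-1}$. I would use the cap $\gamma \le \overline{\gamma}$ to guarantee $\big\|\tfrac{2\gamma KL}{p}M\big\| < 1$ in a convenient induced norm, so the Neumann series converges and $(I-A)^{-1}$ is controlled entrywise at order $\mathcal{O}(1/p)$ with higher-order corrections in $\gamma/p$. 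Since every entry of $b$ in \eqref{eq:24} is a polynomial in $\gamma$ with lowest-order term $\mathcal{O}(\gamma)\sigma^2$ and coefficients carrying factors $1/n$, $1/q$, and powers of $K,L$, the product yields $\Phi(\gamma):=\|(I-A)^{-1}b\|_1^2 = \sum_{j} c_j \gamma^{j}$, a polynomial whose coefficients assemble the $1/(pq)$ and $L/\mu$ dependence appearing in \eqref{eq:cor2}.

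With $\mathbb{E}\|\overline{x}_\mathrm{g}^{T+1}-x^\star\|^2 \le \exp(-\tfrac{\mu K\gamma T}{4})\big(\mathbb{E}\|\overline{x}_\mathrm{g}^1-x^\star\|^2 + \gamma\sigma^2\big) + \Phi(\gamma)$ holding for every $\gamma\le\overline{\gamma}$, the stated choice $\gamma=\min\{\overline{\gamma},\,\tfrac{1}{\mu K T}\ln(\max(1,\cdot))\}$ realizes the usual case split. When $\overline{\gamma}$ is active ($T$ small), the exponential $\exp(-\tfrac{\mu K\overline{\gamma} T}{4})$ survives and, after lower-bounding $K\overline{\gamma}$ through the binding (smallest) constraint defining $\overline{\gamma}$, produces the $\exp(-pq\mu T)\big(\mathbb{E}\|\overline{x}_\mathrm{g}^1-x^\star\|^2+\sigma^2\big)$ contribution. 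When the logarithmic term is active, $\exp(-\tfrac{\mu K\gamma T}{4})$ is forced down to the scale of the floor, and each monomial $c_j\gamma^j$ of $\Phi$ contributes $\tilde{\mathcal{O}}\big(c_j/(\mu K T)^j\big)$; the $j{=}1$ term gives $\tfrac{\sigma^2}{\mu K T}$ and the dominant high-order term gives $\tfrac{L^5\sigma^2}{\mu^5 T^5 pq}$, matching \eqref{eq:cor2}.

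I expect the main obstacle to be the floor estimate: controlling $(I-A)^{-1}$ for the fully coupled four-coordinate error system (model error, the two tracking errors, and consensus error) and verifying that $\gamma\le\overline{\gamma}$ keeps the off-diagonal coupling contractive, so the Neumann expansion is valid and each coordinate's steady state is a clean polynomial in $\gamma$ whose leading behavior can be read off. A secondary but delicate point is reconciling the advertised decay exponent $pq\mu$ with $\tfrac{\mu K\overline{\gamma}}{4}$, which requires care with how $p$, $q$, and $K$ enter the binding term of $\overline{\gamma}$ and with which monomial of $\Phi$ ultimately dominates the $\tilde{\mathcal{O}}$.
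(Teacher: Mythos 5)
Your proposal is correct in outline and follows the same three-move template as the paper's own proof: bound $(1-\tfrac{\mu K\gamma}{4})^T$ by an exponential, bound the noise floor $\|(I-A)^{-1}b\|_1$ as a function of $\gamma$, then split on which argument of the $\min$ defines the step size and substitute $\overline{\gamma}$ to read off the decay exponent. The one step where you genuinely depart from the paper is the floor estimate, and there your route is the more substantive one: the paper simply asserts $\|(I-A)^{-1}b\|_1 \le \mathcal{O}\big(K^5L^5\sigma^2\gamma^5/(pq)\big)$ with no derivation, whereas you invert $I-A=\tfrac{p}{2}I-\gamma KL\,M$ by a Neumann series. Your contractivity requirement does hold: the step-size conditions of Theorem \ref{thm3} force every column sum of $A$ below $1-\tfrac{\mu K\gamma}{4}$, and since the entries of $M$ are nonnegative for $\gamma\le\overline{\gamma}$ this yields $\big\|\tfrac{2\gamma KL}{p}M\big\|_1 \le 1-\tfrac{\mu K\gamma}{2p}<1$. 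Be aware, though, that executing your expansion surfaces a discrepancy with the paper's intermediate claim: the entries $\tfrac{2\gamma}{qK}\sigma^2$ of $b$, hit by the $\mathcal{O}(1/p)$, $\mathcal{O}(1/q)$ blocks of the inverse, make the floor's leading monomial $\mathcal{O}(\gamma\sigma^2)$ (with coefficients of the type $1/(p^2K)$, $1/(q^2K)$, $1/(\mu n)$), not $\mathcal{O}(\gamma^5)$; only part of the floor is genuinely high order in $\gamma$. Your bookkeeping — attributing a $\sigma^2/(\mu KT)$ contribution to the $j=1$ monomial of $\Phi$ and the $L^5\sigma^2/(\mu^5T^5pq)$ term to the dominant high-order monomial — absorbs this correctly, whereas the paper obtains $\sigma^2/(\mu KT)$ solely from the collapsed exponential and leans on the unjustified $\gamma^5$ claim; in this sense your treatment of the paper's weakest step is the more defensible one, at the price of extra $1/(Kp^2)$-type factors that must be hidden in the $\tilde{\mathcal{O}}$. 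The remaining delicacy you flag — that $\mu K\overline{\gamma}$ scales like $\mu\min(p,q)^2$ up to $K,L,\mu$ factors, so the advertised $\exp(-pq\mu T)$ exponent requires constant-hiding — is shared by, and inherited from, the paper, which silently absorbs it in its final ``plug in $\overline{\gamma}$'' step.
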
 

Thus, under strong convexity,~\eqref{eq:cor2} demonstrates that our algorithm has a linear convergence rate under $\sigma^2 = 0$. In both Corollaries \ref{cor2} and \ref{cor3}, we observe a convergence towards the optimal solution when $T$ goes to infinity.

% \chris{Under proper initialization of the gradient tracking terms, all terms related to the heterogeneity will be either cancelled out or have the same order as the two stochastic noise related terms.}
\vspace{-0.1in}

\subsection{Proof Sketch and Key Intermediate Results}
\label{sec:IV-C}
Here we provide a proof sketch for Theorems \ref{thm1}, \ref{thm2}, and \ref{thm3}. The details of the proofs can be found in the supplemental material.

% \evan{Add the explanations/justifications of the assumptions.}

Recall that $\overline{x}_g^t$ is the expected global model with respect to the random client sampling process of the server. All of our theoretical analysis is based on iterative behavior of this expected model. When observing the progression of our algorithm, there are a few terms that characterize its deviation from our desired update direction. First, we have $\Gamma_t = \frac{1}{n}\sum_{i=1}^n\mathbb{E}\|x_i^{t-1,K+1} - \overline{x}_\mathrm{g} ^t\|^2$, which captures the error induced by the server only sampling part of the network instead of all devices. Next, let $Y^t = [y_1^t, \ldots, y_n^t] \in \mathbb{R}^{d\times n}$ and $Z^t = [z_1^t, \ldots, z_n^t] \in \mathbb{R}^{d\times n}$ be the collection of all gradient tracking variables $y_i^t$ and $z_i^t$ at iteration $t$. Further, we let $J_c = \textrm{diag}\{J_1, \ldots, J_S\} \in \mathbb{R}^{n\times n}$, with $J_s = \frac{11^\top}{m_s}$ being a full average matrix of subnet $s$, and $J = \frac{11^\top}{n} \in \mathbb{R}^{n\times n}$ being the full average matrix of the entire network. We define $\mathcal{Z}_t = \frac{1}{n}\mathbb{E}\|Z^t + \nabla F(\overline{x}_\mathrm{g} ^t)(I - J_c)\|^2_F$ and $\mathcal{Y}_t = \frac{1}{n}\mathbb{E}\|{Y^t} + \nabla F(\overline{x}_\mathrm{g} ^t)(J_c - J)\|_F^2$, which captures the performance of our within-subnet and between-subnet gradient tracking terms, respectively. Finally, we have $\Delta_t = \frac{1}{n}\sum_{i=1}^n\sum_{k=1}^K\mathbb{E}\|x_i^{t, k} - \overline{x}_\mathrm{g} ^{t}\|^2$ as the client drift term, which captures the total deviation across device models after performing $K$ rounds of D2D communication.

Now we can write out two descent lemmas that are used to obtain the convergence rate of our algorithm under non-convex vs. convex settings (the proofs of all lemmas are provided in Appendix~\ref{appendixA}).
\begin{lemma}(Descent lemma for non-convex settings)
     Under Assumption \ref{asmp1}, with a step size $\gamma \leq \frac{1}{4KL}$, the iteration of the global aggregation term $\overline{x}_g^t$ can be expressed as:
    \begin{align}
        \mathbb{E}f(\overline{x}_g^{t+1}) \leq & \mathbb{E}f(\overline{x}_g^{t})-\frac{\gamma K}{4}\mathbb{E}\|\nabla f(\overline{x}_g^t)\|^2 \notag\\
        &+ \gamma L^2\Delta_t + \frac{L\gamma^2K}{2n}\sigma^2.
    \end{align}
\label{lem5}
\end{lemma}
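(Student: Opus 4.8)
The plan is to establish Lemma~\ref{lem5} as a standard descent inequality for the expected server model $\overline{x}_g^t$, tracking carefully how the two gradient-tracking correction terms and the $K$ rounds of D2D communication enter the per-step progress. First I would derive a clean expression for the one-step update of the averaged model. Because the server aggregation in \eqref{eq7}--\eqref{eq9} cancels the between-subnet tracking term $y_i^t$ (the $+K\gamma y_i^t$ correction in \eqref{eq7}), and because $\overline{x}_g^t$ is the expectation over the sampling sets $\mathcal{C}'_s$, the increment $\overline{x}_g^{t+1} - \overline{x}_g^t$ should reduce in expectation to $-\gamma \sum_{k=1}^K$ of the average stochastic gradients evaluated at the local iterates $x_i^{t,k}$, with the $z_i^t$ terms averaging out across each subnet (since $\sum_{i \in \mathcal{C}_s} z_i$ telescopes to zero under the GT initialization). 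The goal is to write $\overline{x}_g^{t+1} = \overline{x}_g^t - \gamma \frac{1}{n}\sum_{i,k} \nabla f_i(x_i^{t,k},\xi_i^{t,k})$ up to the sampling expectation.

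Next I would apply $L$-smoothness (Assumption~\ref{asmp1}) in its descent form, $\mathbb{E}f(\overline{x}_g^{t+1}) \leq \mathbb{E}f(\overline{x}_g^t) + \mathbb{E}\langle \nabla f(\overline{x}_g^t), \overline{x}_g^{t+1}-\overline{x}_g^t\rangle + \frac{L}{2}\mathbb{E}\|\overline{x}_g^{t+1}-\overline{x}_g^t\|^2$. The inner-product term is the main source of progress: I would take conditional expectation over the stochastic gradients, replace $\nabla f_i(x_i^{t,k},\xi_i^{t,k})$ by $\nabla f_i(x_i^{t,k})$, and then use the identity $2\langle a,b\rangle = \|a\|^2+\|b\|^2-\|a-b\|^2$ (or equivalently the polarization trick $\langle \nabla f(\overline{x}), \nabla f_i(x_i^{t,k})\rangle$) to extract a negative $-\frac{\gamma K}{4}\|\nabla f(\overline{x}_g^t)\|^2$ term, while the mismatch between $\nabla f(\overline{x}_g^t)$ and the gradients at the drifted iterates $x_i^{t,k}$ is controlled by $L$-smoothness and produces the $\gamma L^2 \Delta_t$ term, where $\Delta_t = \frac{1}{n}\sum_{i,k}\mathbb{E}\|x_i^{t,k}-\overline{x}_g^t\|^2$ is exactly the client-drift quantity defined in the proof-sketch paragraph. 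The quadratic term $\frac{L}{2}\mathbb{E}\|\overline{x}_g^{t+1}-\overline{x}_g^t\|^2$ I would split into a mean part (absorbed using the step-size condition $\gamma \le \frac{1}{4KL}$) and a variance part; bounding the variance of the averaged stochastic gradient by $\frac{\sigma^2}{n}$ per round via Assumption~\ref{assmp3} and the independence of the $n$ clients' noise, accumulated over $K$ rounds, yields the $\frac{L\gamma^2 K}{2n}\sigma^2$ term.

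The step-size restriction $\gamma \le \frac{1}{4KL}$ is what lets me merge the various $\|\nabla f(\overline{x}_g^t)\|^2$ contributions so that a clean coefficient of $\frac{\gamma K}{4}$ survives on the negative term after the positive $\frac{L}{2}\|\cdot\|^2$ contribution is subtracted off; I would verify that the leftover positive multiple of $\|\nabla f(\overline{x}_g^t)\|^2$ coming from the quadratic term is at most $\frac{\gamma K}{4}$ under this bound.

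The main obstacle I anticipate is the bookkeeping around the expectation $\overline{x}_g^t = \mathbb{E}_{\mathcal{C}'_s}[x_g^t]$: I must confirm that, after taking the sampling expectation, the within-subnet tracking terms $z_i^t$ genuinely cancel (so they do not appear in the descent bound) and that the between-subnet cancellation in \eqref{eq7} leaves the averaged update equal to a plain accumulated-gradient step. Verifying that the noise from $K$ successive D2D rounds aggregates additively to $\frac{K\sigma^2}{n}$ (rather than compounding through the mixing matrix $W$) requires using the doubly-stochastic, norm-nonexpansive property of $W_s$ from Assumption~\ref{asmp2}, since the ATC mixing in \eqref{eq4} does not change the subnet average. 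Everything else is the routine smoothness-and-variance decomposition, so the crux is getting the averaged dynamics of $\overline{x}_g^t$ exactly right.
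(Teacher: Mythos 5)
Your proposal is correct and follows essentially the same route as the paper's proof: establish that the expected server model evolves as $\overline{x}_g^{t+1} = \overline{x}_g^t - \gamma\frac{1}{n}\sum_{i,k}\nabla f_i(x_i^{t,k},\xi_i^{t,k})$ (using the zero-sum properties of $y_i^t$ and $z_i^t$ under the GT initialization and the double stochasticity of $W$), then apply $L$-smoothness, split into the inner-product term (yielding $-\frac{\gamma K}{2}\|\nabla f(\overline{x}_g^t)\|^2 + \frac{\gamma L^2}{2}\Delta_t$) and the quadratic term (yielding the drift, gradient-norm, and $\frac{L\gamma^2 K}{2n}\sigma^2$ variance contributions), and finally use $\gamma \le \frac{1}{4KL}$ to absorb the positive $L^2K^2\gamma^2\|\nabla f(\overline{x}_g^t)\|^2$ and drift coefficients into the stated constants. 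The anticipated obstacles you flag (cancellation of the tracking terms in expectation, additive accumulation of noise over $K$ rounds) are exactly the points the paper handles in its Part 1 derivation, and your treatment of them is sound.
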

\begin{lemma} (Descent lemma for convex settings)
    Under Assumptions \ref{asmp1} and \ref{assmp4}, with a constant step size $\gamma < \frac{1}{36KL}$:
        \begin{align}
        \mathbb{E}\|\Bar{x}_g^{t+1} - x^\star\|^2
        \leq & \mathbb{E}\|\Bar{x}_g^t - x^\star\|^2 - \frac{\mu\gamma}{2}\mathbb{E}\|\Bar{x}_g^t - x^\star\|^2 \notag\\
        &- \gamma K  \mathbb{E} (f(\Bar{x}_g^t) - f(x^\star))\notag\\
        &+9 (1-p)\gamma KL \Gamma_t + 72K^3L\gamma^3 (\mathcal{Y}_t + \mathcal{Z}_t)\notag\\
        &+ \frac{2\gamma^2 K \sigma^2}{n} + 9K\gamma^3L\sigma^2.
\end{align}
\vspace{-0.1in}
\label{lem6}
\end{lemma}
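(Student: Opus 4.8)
The plan is to derive a one-step recursion for the expected global model $\overline{x}_g^{t+1}$ and then run the classical strongly-convex descent argument on it, with the novelty concentrated in how the two tracking terms and the random sampling are handled. First I would show that the effective update direction of the sampling-averaged model is simply a stochastic gradient evaluated at the individual client iterates. Because each $W_s$ is doubly stochastic (Assumption \ref{asmp2}), the combine step \eqref{eq4} preserves subnet averages, so summing the per-client increments \eqref{eq7} over a subnet telescopes the $K$ within-subnet steps of \eqref{eq3}. The between-subnet term cancels exactly against the $+K\gamma y_i^t$ correction built into \eqref{eq7}, and the within-subnet terms vanish because $\sum_{j\in\mathcal{C}_s} z_j^t = 0$, an identity that holds at initialization and is preserved by the doubly stochastic update \eqref{eq6}. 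After taking the sampling expectation in \eqref{eq8}, this leaves $\overline{x}_g^{t+1} = \overline{x}_g^t - \gamma\, g^t$, where $g^t$ is the client-and-subnet average of the stochastic gradients $\nabla f_i(x_i^{t,k};\xi_i^{t,k})$ over the $K$ rounds.

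Second, I would expand $\mathbb{E}\|\overline{x}_g^{t+1}-x^\star\|^2 = \mathbb{E}\|\overline{x}_g^t - x^\star\|^2 - 2\gamma\,\mathbb{E}\langle \overline{x}_g^t - x^\star, g^t\rangle + \gamma^2\,\mathbb{E}\|g^t\|^2$. For the inner-product term I would split $\nabla f_i(x_i^{t,k}) = \nabla f_i(\overline{x}_g^t) + \big(\nabla f_i(x_i^{t,k}) - \nabla f_i(\overline{x}_g^t)\big)$; the first piece aggregates to $K\nabla f(\overline{x}_g^t)$, to which $\mu$-strong convexity (Assumption \ref{assmp4}) is applied to extract the $-\gamma K\,\mathbb{E}(f(\overline{x}_g^t) - f(x^\star))$ and $-\tfrac{\mu\gamma}{2}\mathbb{E}\|\overline{x}_g^t - x^\star\|^2$ contributions, retaining only a fraction of the strong-convexity descent so that the remainder can absorb the error terms. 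The second piece is the discrepancy between the client iterates and $\overline{x}_g^t$; applying $L$-smoothness (Assumption \ref{asmp1}) together with Young's inequality converts it into $L$-weighted client-drift quantities.

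Third, I would control the drift produced above and the second-moment term $\gamma^2\mathbb{E}\|g^t\|^2$. The stochastic-gradient variance (Assumption \ref{assmp3}), averaged over $n$ clients and accumulated across $K$ rounds, yields the noise terms $\tfrac{2\gamma^2 K\sigma^2}{n}$ and $9K\gamma^3 L\sigma^2$. The deterministic within-round drift $\Delta_t$ must then be re-expressed through the tracking-term measures: since the within-subnet trajectory is driven by $z_i^t$ and $y_i^t$, a bound of the form $\Delta_t \lesssim K^2\gamma^2(\mathcal{Y}_t + \mathcal{Z}_t)$ combined with the $\gamma KL$ weight from the cross term produces the $72K^3 L\gamma^3(\mathcal{Y}_t + \mathcal{Z}_t)$ term. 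Separately, the spread of the pre-aggregation models carried into round $t$, together with the variance of the server's without-replacement subnet sampling, produces the $9(1-p)\gamma KL\,\Gamma_t$ term, the coefficient $(1-p)$ arising through $\beta_s = (m_s-h_s)/m_s$ and $p = \min_s(1-\beta_s^2)$.

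The main obstacle I anticipate is the bookkeeping of the nested randomness. The object $\overline{x}_g^t$ already averages over the sampling, yet the iterates $x_i^{t,k}$ entering $g^t$ depend on which clients were synchronized in round $t-1$ (unsampled clients retain $x_i^{t+1,1}=x_i^{t,K+1}$ and $y_i^{t+1}=y_i^t$), so the step is not a clean martingale increment. Carefully conditioning on the filtration up to round $t$, verifying that the conservation identities $\sum_{j\in\mathcal{C}_s} z_j^t = 0$ and the $y$-cancellation survive partial synchronization, and isolating the sampling variance to recover \emph{precisely} the $(1-p)$ coefficient rather than a looser constant, is where the bulk of the technical effort will lie and is what distinguishes this hierarchical analysis from single-level gradient-tracking arguments.
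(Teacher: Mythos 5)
Your overall architecture tracks the paper's proof closely: you derive the recursion $\overline{x}_g^{t+1} = \overline{x}_g^t - \frac{\gamma}{n}\sum_{i,k}\nabla f_i(x_i^{t,k},\xi_i^{t,k})$ exactly as in the paper's Part 1 (via double stochasticity, $\sum_{j\in\mathcal{C}_s}z_j^t = 0$, and the cancellation of the $y$-terms against the $K\gamma y_i^t$ correction in \eqref{eq7}), you then expand the squared distance, and you close the bound by invoking the deviation lemma (Lemma \ref{lem1}) to convert the client drift $\Delta_t$ into the $\Gamma_t$, $\mathcal{Y}_t$, $\mathcal{Z}_t$ terms, with the noise accounting and step-size absorption handled as in the paper.

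However, there is a genuine gap in your handling of the inner-product term, and it is the step on which the lemma's specific form depends. You propose splitting the \emph{gradient}, $\nabla f_i(x_i^{t,k}) = \nabla f_i(\overline{x}_g^t) + \big(\nabla f_i(x_i^{t,k}) - \nabla f_i(\overline{x}_g^t)\big)$, applying strong convexity to the first piece and ``$L$-smoothness together with Young's inequality'' to the second. Any Young bound on the cross term $-2\gamma\,\mathbb{E}\langle \overline{x}_g^t - x^\star,\, \nabla f_i(x_i^{t,k}) - \nabla f_i(\overline{x}_g^t)\rangle$ necessarily leaves a \emph{positive} multiple of $\mathbb{E}\|\overline{x}_g^t - x^\star\|^2$ on the right-hand side. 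When $\mu > 0$ this can be absorbed into the strong-convexity descent, but only by choosing the Young parameter proportional to $\mu$, which inflates the drift coefficient to order $L^2/\mu$ — inconsistent with the lemma's $\mu$-independent, $L$-weighted coefficients $9(1-p)\gamma KL\,\Gamma_t$ and $72K^3L\gamma^3(\mathcal{Y}_t + \mathcal{Z}_t)$. Worse, Assumption \ref{assmp4} permits $\mu = 0$, and Lemma \ref{lem6} is invoked with $\mu = 0$ in Theorem \ref{thm2}; in that case there is no negative distance term at all to absorb the Young residue, and for a merely convex $f$ the distance $\|\overline{x}_g^t - x^\star\|^2$ cannot be dominated by the function gap, so your argument cannot yield the stated inequality. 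The same issue infects your plan to let ``a fraction of the strong-convexity descent absorb the error terms'': at $\mu = 0$ the absorption must go into the function-gap term $-2\gamma K\,\mathbb{E}(f(\overline{x}_g^t) - f(x^\star))$ (after converting residual gradient-norm terms via $\mathbb{E}\|\nabla f(\overline{x}_g^t)\|^2 \leq 2L\,\mathbb{E}(f(\overline{x}_g^t) - f(x^\star))$ and using $\gamma \leq \frac{1}{36KL}$), not into the $\mu$-descent. The paper avoids all of this by splitting the \emph{displacement} rather than the gradient: writing $\overline{x}_g^t - x^\star = (\overline{x}_g^t - x_i^{t,k}) + (x_i^{t,k} - x^\star)$ and applying the perturbed strong-convexity inequality $\langle z - y, \nabla f(x)\rangle \geq f(z) - f(y) + \frac{\mu}{4}\|y - z\|^2 - L\|z - x\|^2$, i.e., smoothness between $\overline{x}_g^t$ and $x_i^{t,k}$ and convexity between $x_i^{t,k}$ and $x^\star$. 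This yields the drift penalty $L\|\overline{x}_g^t - x_i^{t,k}\|^2$ directly, with no leftover positive distance term, and holds uniformly for $0 \leq \mu \leq L$; with that substitution, the rest of your outline goes through as in the paper.
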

Next, we need to bound iterative effects of quantities $\Delta_t$, $\Gamma_t$, $\mathcal{Y}_t$, and $\mathcal{Z}_t$ that appear in Lemmas~\ref{lem5} and \ref{lem6} under some appropriate choices of step size. Under $L$-smoothness, we can control $\Delta_t$ as follows.
\begin{lemma} (Deviation lemma)
    Under Assumption \ref{asmp1}, by selecting a step size $\gamma < \frac{1}{8KL}$, we have
\begin{align}
    \textstyle\Delta_t
        \leq&     \textstyle3(1 - p)K\Gamma_t
        + 24K^3\gamma^2\mathcal{Y}_t   \textstyle+ 24K^3\gamma^2\mathcal{Z}_t \notag\\
        &+ 6K^3\gamma^2\mathbb{E}\|\nabla f(\overline{x}_\mathrm{g} ^t)\|^2 + 3K^2\gamma^2 \sigma^2.
\end{align}
\vspace{-0.1in}
\label{lem1}
\end{lemma}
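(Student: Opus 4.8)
The plan is to track the per-client deviation from the expected global model over a single outer round $t$, exploiting the gradient-tracking structure so that the corrected update direction collapses onto the global gradient. First I would write the subnet models in matrix form $X^{t,k}=[x_1^{t,k},\ldots,x_n^{t,k}]\in\mathbb{R}^{d\times n}$ and define the deviation matrix $\hat X^{t,k}=X^{t,k}-\overline{x}_g^t\mathbf{1}^\top$. Combining \eqref{eq3}--\eqref{eq4}, the ATC step is $X^{t,k+1}=(X^{t,k}-\gamma G^{t,k})W^\top$, where the $i$-th column of $G^{t,k}$ is $\nabla f_i(x_i^{t,k},\xi_i^{t,k})+y_i^t+z_i^t$. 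Since $W$ is block-diagonal and each $W_s$ is doubly stochastic, $W\mathbf{1}=\mathbf{1}$ and $\|W\|_2=1$, hence $\overline{x}_g^t\mathbf{1}^\top W^\top=\overline{x}_g^t\mathbf{1}^\top$ and the recursion reduces to $\hat X^{t,k+1}=(\hat X^{t,k}-\gamma G^{t,k})W^\top$, with $\|AW^\top\|_F\le\|A\|_F$. Note that only non-expansiveness of $W$ is used; the contraction constant $\rho_s$ does not enter, consistent with the absence of $q$ from the claimed bound.

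The crucial structural step is the gradient-tracking cancellation. Writing $\nabla F=\nabla F(\overline{x}_g^t)$ for brevity, adding and subtracting $\nabla f_i(\overline{x}_g^t)$ in each column and invoking the identity $I-(J_c-J)-(I-J_c)=J$, the deterministic part of $G^{t,k}$ decomposes as $G^{t,k}=\nabla f(\overline{x}_g^t)\mathbf{1}^\top+(Y^t+\nabla F(J_c-J))+(Z^t+\nabla F(I-J_c))+(\nabla F(X^{t,k})-\nabla F(\overline{x}_g^t))+\Xi^{t,k}$, where $\Xi^{t,k}$ is the zero-mean stochastic block. The five blocks have (expected) squared Frobenius norms $n\|\nabla f(\overline{x}_g^t)\|^2$, $n\mathcal{Y}_t$, $n\mathcal{Z}_t$, $\le L^2\|\hat X^{t,k}\|_F^2$ (by Assumption~\ref{asmp1}), and $\le n\sigma^2$, respectively. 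This is precisely the mechanism by which the two tracking terms steer the aggregate direction back to the global gradient and eliminate any dependence on data-heterogeneity constants.

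Next I would unroll $\hat X^{t,k}=\hat X^{t,1}(W^\top)^{k-1}-\gamma\sum_{l=1}^{k-1}G^{t,l}(W^\top)^{k-l}$, square, and sum over $k=1,\ldots,K$. The deterministic blocks are bounded with Cauchy--Schwarz, which costs a factor $(k-1)$ and yields the $K^3$ coefficients on $\mathcal{Y}_t$, $\mathcal{Z}_t$, and $\|\nabla f(\overline{x}_g^t)\|^2$; the stochastic block must instead be handled through independence across $l$, so that cross terms vanish and no Cauchy--Schwarz inflation occurs, keeping the noise contribution at order $K^2\gamma^2\sigma^2$ rather than $K^3$. The $L$-smoothness block feeds back a self-referential term of order $K^2\gamma^2L^2\Delta_t$; choosing $\gamma<\tfrac{1}{8KL}$ drives its coefficient strictly below one, so it can be absorbed on the left with only a constant-factor loss. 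Finally, the leading term $2K\cdot\tfrac{1}{n}\mathbb{E}\|\hat X^{t,1}\|_F^2$ is converted into $3(1-p)K\Gamma_t$ by relating the round-initial deviation to $\Gamma_t$: sampled clients hold the aggregated server model $x_g^t$ while unsampled clients retain $x_i^{t-1,K+1}$, and the variance of the subnet sampling operator about $\overline{x}_g^t=\mathbb{E}[x_g^t]$ supplies the factor $(1-p)=\max_s\beta_s^2$.

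The main obstacle I anticipate is this last step: extracting the sharp $(1-p)$ factor from without-replacement subnet sampling, which needs a careful second-moment computation for the sampled aggregate together with its independence from the end-of-round-$(t-1)$ models, rather than a crude linear-in-$\beta_s$ estimate. The remaining difficulty is purely bookkeeping, namely tracking the powers of $K$ and correctly separating the deterministic and stochastic contributions so the noise term scales as $K^2\gamma^2\sigma^2$.
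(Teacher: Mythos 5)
Your proposal is correct and follows essentially the same route as the paper's proof: the same decomposition of the corrected update direction into the global gradient, the between-subnet residual $\mathcal{Y}_t$, the within-subnet residual $\mathcal{Z}_t$, an $L$-smoothness difference, and zero-mean noise (handled without Cauchy--Schwarz inflation), the same use of double stochasticity of $W$ for non-expansiveness, and the same client-sampling argument converting the round-initial deviation into $(1-p)\Gamma_t$ with $1-p=\max_s\beta_s^2$. The only difference is bookkeeping: the paper runs a one-step recursion with a Young factor $\alpha = 1+\tfrac{1}{K-1}+4K\gamma^2L^2$ and bounds $\alpha^{K}\le 3$, $\sum_{k'}\alpha^{k'}\le 3K$ under $\gamma<\tfrac{1}{8KL}$, so the smoothness feedback is folded into the recursion coefficient rather than absorbed as a self-referential $\gamma^2K^2L^2\Delta_t$ term on the left-hand side as you do; both variants yield the stated bound.
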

This lemma implies that by choosing an appropriate step size, the deviation at iteration $t$ of all client models from the average model after running $K$ rounds of D2D communication can be bounded by the other error terms at iteration $t$. We then can derive the bounds for the remaining terms.
\begin{lemma} (Between-subnet GT lemma)
Under Assumption \ref{asmp1}, with constant step size $\gamma < \frac{1}{\sqrt{6}KL}$, and $0 < p \leq 1$ defined as in Theorem~\ref{thm1}, we have: 
\begin{align}
        \mathcal{Y}_t \leq&  (1 - \frac{p}{2})\mathcal{Y}_{t-1} \! + \!\frac{10L^2}{p}\Delta_{t-1} \notag\\
        &\!+\! \frac{12}{p}\gamma^2L^2K^2\mathbb{E}\|\nabla f(\overline{x}_\mathrm{g} ^{t-1})\|^2 \!+\! \frac{2\sigma^2}{pK}.
\end{align}
\vspace{-0.1in}
\label{lem2}
\end{lemma}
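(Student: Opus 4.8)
The plan is to derive the one-step recursion for the between-subnet tracking error $\mathcal{Y}_t=\frac1n\mathbb{E}\|Y^t+\nabla F(\overline{x}_g^t)(J_c-J)\|_F^2$ by writing the update of $Y^t$ in matrix form, taking the expectation over the server's random client sampling, and extracting a contraction governed by the sampling ratio $p$. Writing $G^t=\nabla F(\overline{x}_g^t)$, the ideal value of the tracking variables is $-G^t(J_c-J)$, whose $i$-th column for $i\in\mathcal{C}_s$ is (subnet-$s$ average gradient) minus (network average gradient); this matches the initialization of $y_i^1$. First I would combine \eqref{eq7}--\eqref{eq10} with the D2D recursion \eqref{eq3}--\eqref{eq4}: because $\tilde{x}_i^{t}$ in \eqref{eq7} cancels the $K\gamma y_i^{t-1}$ contribution, the server estimate $\psi_s^{t}$ equals $\tfrac{1}{K\gamma}$ times the difference between the sampled subnet-$s$ average and the global average of the accumulated descent directions. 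A sampled client thus receives $y_i^{t}=\psi_s^{t}$, a fresh estimate of (network average gradient) minus (subnet-$s$ average gradient), while an unsampled client retains $y_i^{t}=y_i^{t-1}$.

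Next I would expand $\mathcal{Y}_t$ by inserting this two-case update and adding and subtracting the ideal value $-G^t(J_c-J)$. Taking the expectation over the random sample sets $\mathcal{C}'_s$, the squared Frobenius norm splits into the contribution of the unsampled clients, which reproduces $\mathcal{Y}_{t-1}$ up to the shift of the reference point from $\overline{x}_g^{t-1}$ to $\overline{x}_g^t$, and the contribution of the sampled clients, whose residual error is pure sampling fluctuation. The key computation is the per-subnet second moment of the sampling indicators: the fraction of error that survives in subnet $s$ is controlled by $\beta_s^2=\big(\tfrac{m_s-h_s}{m_s}\big)^2$, so that invoking $p=\min_s(1-\beta_s^2)$ yields a factor $(1-p)$ on the retained error. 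Applying Young's inequality with parameter $\tfrac{p}{2}$ and using $(1-p)(1+\tfrac{p}{2})\le 1-\tfrac{p}{2}$ produces the contraction coefficient $1-\tfrac{p}{2}$, while the accompanying $(1+\tfrac{2}{p})$ factor supplies the $\tfrac1p$-amplification of all remaining terms.

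Finally I would bound the non-contracting remainder using $L$-smoothness (Assumption~\ref{asmp1}). The reference-point shift $G^t-G^{t-1}$ and the deviation of $\psi_s^t$ from its mean are both expressed through the accumulated model change $x_i^{t-1,K+1}-x_i^{t-1,1}$ over the $K$ D2D rounds together with its sampling fluctuation; bounding these by smoothness converts them into the client-drift term $\tfrac{10L^2}{p}\Delta_{t-1}$ and the descent-direction term $\tfrac{12}{p}\gamma^2L^2K^2\,\mathbb{E}\|\nabla f(\overline{x}_g^{t-1})\|^2$, and isolating the stochastic gradient from its mean via Assumption~\ref{assmp3} yields the noise term $\tfrac{2\sigma^2}{pK}$. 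The step-size restriction $\gamma<\tfrac{1}{\sqrt6\,KL}$ is precisely what keeps the $\mathcal{O}(K^2\gamma^2L^2)$ coefficients below one, so that these higher-order contributions can be absorbed into the three stated terms.

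I expect the main obstacle to be the joint handling of the random sampling and the gradient-tracking structure: computing the exact second moment of the sampling-without-replacement within each subnet to obtain the sharp $(1-p)$ factor, and then arranging the cross terms so that the only surviving remainders are $\Delta_{t-1}$, $\mathbb{E}\|\nabla f(\overline{x}_g^{t-1})\|^2$, and $\sigma^2$ with the indicated constants. In particular one must verify that the within-subnet tracking variables $z_i^t$ introduce no independent $\mathcal{Z}_{t-1}$ contribution and that the model-sampling fluctuation produces no independent $\Gamma_t$ contribution; both rely on the subnet- and network-wide averages of $z_i^t$ remaining zero throughout \eqref{eq6} and on folding the residual fluctuations into the drift term $\Delta_{t-1}$.
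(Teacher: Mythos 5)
Your proposal matches the paper's proof in all essentials: the paper likewise writes the sampled update of $Y^t$ in matrix form with per-subnet retention matrices $B=\mathrm{diag}(\beta_{1}I_{m_1},\ldots,\beta_{S}I_{m_S})$ and $B'=I-B$, splits the tracking error into the retained previous error, the D2D gradient-deviation term, and the reference-shift term $(\nabla F(\overline{x}_g^t)-\nabla F(\overline{x}_g^{t-1}))(J_c-J)$, applies Young's inequality with parameter $p/2$ to get the $(1-\tfrac{p}{2})$ contraction and $\tfrac{6}{p}$ amplification, and uses $L$-smoothness plus $\gamma<\tfrac{1}{\sqrt{6}KL}$ to absorb the higher-order $\Delta_{t-1}$ and noise contributions into the stated constants. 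The one nuance worth noting is that the paper obtains the $\beta_s^2$ (hence $1-p$) factor not from a second moment of the sampling indicators --- Bernoulli indicators kept inside the squared norm would only yield a surviving fraction $\beta_s$, i.e.\ a $\sqrt{1-p}$-type contraction and slightly weaker constants --- but by substituting the expected recursion $\mathbb{E}_{\mathcal{C}'_s}Y^t=\mathbb{E}_{\mathcal{C}'_s}Y^{t-1}B+\cdots$ into the norm and then using the deterministic bound $\|XB\|_F^2\le(1-p)\|X\|_F^2$, consistent with the entire analysis being carried out on the expected model $\overline{x}_g^t$.
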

\begin{lemma}
    (In-subnet GT lemma) Under Assumption \ref{asmp1}, with constant step size $\gamma < \frac{1}{\sqrt{6}KL}$, and $0 < q \leq 1$ defined as in Theorem~\ref{thm1}, we have: 
\begin{align}
        \mathcal{Z}_t \leq& (1-\frac{q}{2})\mathcal{Z}_{t-1} + \frac{26L^2}{q}\Delta_{t-1}
        \notag\\
        &+ \frac{12K^2L^2\gamma^2}{q}\mathbb{E}\|\nabla f(\overline{x}_\mathrm{g} ^t)\|^2 + \frac{2\sigma^2}{qK}.
        \label{eq:isgt}
    \end{align}
    \vspace{-0.1in}
\label{lem3}
\end{lemma}
\begin{lemma} (Sample Gap lemma)
Under Assumption \ref{asmp1}, and $0 < p \leq 1$ defined as in Theorem~\ref{thm1}, we have:
{\begin{align}
        \Gamma_t
        \leq& (1 - \frac{p}{2})\Gamma_{t-1} + \frac{12}{p}\gamma^2KL^2\Delta_{t-1} \notag\\
        &+ \frac{12}{p}\gamma^2K^2\mathcal{Y}_{t-1} + \frac{12}{p}\gamma^2K^2\mathcal{Z}_{t-1} \notag\\
        &+ \frac{12}{p}\gamma^2K^2\mathbb{E}\|\nabla f(\overline{x}_\mathrm{g} ^{t-1})\|^2+ K\gamma^2\sigma^2.
\end{align}
\vspace{-0.1in}}
\label{lem4}
% \label{lem7}
\end{lemma}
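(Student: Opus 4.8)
The plan is to establish a one-step recursion for $\Gamma_t = \frac{1}{n}\sum_{i=1}^n \mathbb{E}\|x_i^{t-1,K+1} - \overline{x}_\mathrm{g}^t\|^2$ by tracing each client's deviation back through (i) the $K$ within-subnet updates of round $t-1$ and (ii) the server sampling that set $x_i^{t-1,1}$. First I would split the deviation as $x_i^{t-1,K+1} - \overline{x}_\mathrm{g}^t = (x_i^{t-1,1} - \overline{x}_\mathrm{g}^{t-1}) + \delta_i$, where $\delta_i$ collects client $i$'s accumulated D2D drift minus the mean update that advances $\overline{x}_\mathrm{g}^{t-1}$ to $\overline{x}_\mathrm{g}^t$; using \eqref{eq7} and \eqref{eq8}, $\delta_i$ is exactly the gap between client $i$'s round-$(t-1)$ drift and its subnet/global average. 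Applying Young's inequality $\|a+b\|^2 \le (1+\eta_s)\|a\|^2 + (1+\eta_s^{-1})\|b\|^2$ with a subnet-dependent weight $\eta_s$ (fixed below) separates a contracting ``stale-deviation'' term from a ``fresh-update'' error term.

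The key step is the contraction. I would bound $\frac{1}{n}\sum_i \mathbb{E}\|x_i^{t-1,1} - \overline{x}_\mathrm{g}^{t-1}\|^2$ by conditioning on the round-$(t-1)$ sampling. For $i\in\mathcal{C}_s$, $x_i^{t-1,1}$ equals the broadcast aggregate $x_\mathrm{g}^{t-1}$ if $i$ is sampled and the retained $x_i^{t-2,K+1}$ otherwise. The unsampled clients occur with probability $\beta_s = (m_s-h_s)/m_s$, and since the sampling indicator is independent of $x_i^{t-2,K+1}$, they contribute $\beta_s$ times the subnet's share of $\Gamma_{t-1}$; the sampled clients contribute the sampling variance of the aggregate, which I would control via the without-replacement (finite-population) variance identity and fold back into $\Gamma_{t-1}$ plus the lower-order round-$(t-1)$ update spread. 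Choosing $\eta_s = (1-\beta_s)^2/(2\beta_s)$ makes $(1+\eta_s)\beta_s = (1+\beta_s^2)/2 = 1 - (1-\beta_s^2)/2 \le 1 - p/2$ by the definition $p = \min_s(1-\beta_s^2)$, which is precisely how the stated factor $(1-p/2)$ emerges rather than the naive $\beta_s$; the same choice gives the $\mathcal{O}(1/p)$ prefactor $1+\eta_s^{-1}$ on the error terms, producing the $\frac{12}{p}$ coefficients.

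For the error term $\frac{1}{n}\sum_i\mathbb{E}\|\delta_i\|^2$ I would expand the accumulated drift over the $K$ inner steps, each increment being $-\gamma\left(\nabla f_i(x_i^{t-1,k},\xi_i^{t-1,k}) + y_i^{t-1} + z_i^{t-1}\right)$ composed with the mixing in \eqref{eq4}. Subtracting the corresponding averages and inserting $\pm\nabla f(\overline{x}_\mathrm{g}^{t-1})$ lets me invoke $L$-smoothness (Assumption \ref{asmp1}) to turn client-to-mean gradient gaps into the drift term $\Delta_{t-1}$ (hence the extra $L^2$ factor on it), and the centered tracking variables into $\mathcal{Y}_{t-1}$ and $\mathcal{Z}_{t-1}$, matching the centerings in their definitions. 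The residual direction yields $\mathbb{E}\|\nabla f(\overline{x}_\mathrm{g}^{t-1})\|^2$, and the independent stochastic-gradient noise over $K$ steps yields $K\gamma^2\sigma^2$ via Assumption \ref{assmp3}. The $\gamma^2K^2$ scaling on the $\mathcal{Y}$, $\mathcal{Z}$, and gradient terms versus the $\gamma^2K$ on $\Delta_{t-1}$ follows because the tracking corrections are constant across the $K$ steps whereas $\Delta_{t-1}$ already aggregates over them.

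Finally I would collect the contracting piece with the $\mathcal{O}(1/p)$-weighted error terms and use the step-size ceiling $\gamma < \mathcal{O}(\frac{p^2q}{KL})$ from Theorem \ref{thm1} to absorb higher-order remainders, arriving at the stated inequality. The main obstacle is the contraction step: correctly handling the correlation between the sampling indicator and the aggregate $x_\mathrm{g}^{t-1}$ (the sampled clients are reset to a quantity that itself depends on \emph{which} clients were sampled), bounding that aggregate's sampling variance back in terms of $\Gamma_{t-1}$ through the finite-population correction, and tuning $\eta_s$ so the net contraction lands exactly at $(1-p/2)$ with the $\beta_s^2$ structure.
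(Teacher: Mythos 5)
Your high-level route is the same as the paper's --- split the deviation into the stale start-of-round sampling gap plus the round-$(t-1)$ update drift, contract the stale part through the sampling probabilities, apply Young's inequality with a $p$-dependent weight, then center the update terms with $\pm\nabla F(\overline{x}_\mathrm{g}^{t-1})$ so that $L$-smoothness yields $\Delta_{t-1}$ and the tracking centerings yield $\mathcal{Y}_{t-1}$, $\mathcal{Z}_{t-1}$ --- but your key contraction step contains an arithmetic error that makes the stated constants unreachable. With $\eta_s = (1-\beta_s)^2/(2\beta_s)$, the contraction $(1+\eta_s)\beta_s = \tfrac{1+\beta_s^2}{2} \le 1-\tfrac{p}{2}$ is correct, but the companion weight is $1+\eta_s^{-1} = 1+\tfrac{2\beta_s}{(1-\beta_s)^2}$, which is $\Theta(1/p^2)$, \emph{not} $\mathcal{O}(1/p)$: for the subnet attaining $p = 1-\beta_s^2$ one has $1-\beta_s \approx p/2$ as $\beta_s \to 1$, so $1+\eta_s^{-1} \approx 8/p^2$. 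Moreover this is forced rather than a poor choice: once the retained mass enters linearly as $\beta_s$ (which it does in a genuine mean-square treatment, with $\mathbb{E}$ outside the norm), any $\eta_s$ satisfying $(1+\eta_s)\beta_s \le 1-\tfrac{p}{2}$ must satisfy $\eta_s \le \tfrac{(1-\beta_s)^2}{2\beta_s}$. Hence your argument can only deliver $\mathcal{O}(1/p^2)$ coefficients on $\Delta_{t-1}$, $\mathcal{Y}_{t-1}$, $\mathcal{Z}_{t-1}$ and the gradient term, a strictly weaker inequality than the stated $\tfrac{12}{p}$. The closing appeal to the step-size ceiling cannot repair this: Lemma~\ref{lem4} carries no step-size hypothesis, and the mismatched coefficients have identical powers of $\gamma$ and $K$, so no restriction on $\gamma$ converts $1/p^2$ into $1/p$. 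A secondary mismatch of the same kind: the statement's noise term is $K\gamma^2\sigma^2$ with no $1/p$ factor, which requires splitting off the zero-mean stochastic noise additively (via conditional unbiasedness, carrying no Young weight) \emph{before} any Young step, as the paper does; your plan routes the noise through $\delta_i$, so it would inherit the $(1+\eta_s^{-1})$ inflation.

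The idea you are missing is how the paper obtains a \emph{quadratic}, rather than linear, retention factor before Young's inequality. The paper's analysis runs entirely on iterates of the expected model $\overline{x}_\mathrm{g}^t = \mathbb{E}_{\mathcal{C}'_s}[x_\mathrm{g}^t]$, and the sampling expectation is applied to the iterate before the squared norm: a client in subnet $s$ is retained with probability $\beta_s$ and otherwise reset to a quantity whose sampling mean is $\overline{x}_\mathrm{g}^{t-1}$, so the expected start-of-round deviation is $\beta_s(x_i^{t-2,K+1}-\overline{x}_\mathrm{g}^{t-1})$, whose square carries $\beta_s^2 \le 1-p$. This is exactly the step $\frac{1}{n}\sum_i\mathbb{E}\|x_i^{t-1,1}-\overline{x}_\mathrm{g}^{t-1}\|^2 \le (1-p)\Gamma_{t-1}$ established in the proof of Lemma~\ref{lem1}, and it is why $p$ is defined through $\beta_s^2$ in the first place. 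Starting from $(1-p)$, Young's inequality with $\eta = \tfrac{p}{2(1-p)}$ reaches $(1-\tfrac{p}{2})$ at a cost of only $1+\eta^{-1} \le \tfrac{2}{p}$, and splitting the grouped error matrix into its four components (drift, $\mathcal{Z}$-centering, $\mathcal{Y}$-centering, mean gradient) produces the $\tfrac{12}{p}$ constants from the paper's prefactor $\tfrac{3\gamma^2 K}{np}$. If you insist on your (arguably more scrupulous) mean-square treatment of the sampling randomness, the honest conclusion is a variant of the lemma with contraction $(1-\tfrac{p}{4})$ --- take $\eta_s = \tfrac{1-\beta_s}{2\beta_s}$, giving $1+\eta_s^{-1} = \tfrac{1+\beta_s}{1-\beta_s} \le \tfrac{4}{p}$ --- not the stated $(1-\tfrac{p}{2})$ with $\tfrac{12}{p}$ coefficients.
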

\textcolor{black}{Equipped with these lemmas, we are now ready to derive the theorems. For the non-convex case, we start with Lemma~\ref{lem5} and use the term $-\frac{\gamma K}{4}\mathbb{E}\|\nabla f(\overline{x}_g^t)\|^2$ as the descent direction. We then construct the following Lyapunov function capturing the between-subnet, within-subnet, and sampling errors:
\begin{equation}
\mathcal{H}_t = \mathcal{E}_t + c_0 K^3\gamma^3 \bigg(\frac{1}{p}\mathcal{Y}_t + \frac{1}{q}\mathcal{Z}_t\bigg) + c_1 \frac{K\gamma}{p}\Gamma_t.
\label{eq:lyapunov}
\end{equation}
}

\textcolor{black}{
This Lyapunov function is designed such that the joint effect of the two tracking terms $\mathcal{Y}_t$ and $\mathcal{Z}_t$ converges to zero. Through appropriate choice of step size, we also ensure the sample gap term $\Gamma_t$ converges to zero. In particular, we show in Appendix B that this Lyapunov function converges through an appropriate step size such that the following holds for some $c_0, c_1, c_2 > 0$:
\begin{align*}
        \mathcal{H}_{t+1} - \mathcal{H}_{t} \leq& -c_1 K\gamma \mathbb{E}\|\nabla f(\overline{x}_g^{t})\|^2 \notag\\
        &+ c_2 (K\gamma)\sigma^2 + c_3(K^2\gamma^2)\sigma^2.
\end{align*}
By unrolling the recursion and defining suitable values of $c_0, c_1, c_2, \gamma$, we obtain Theorem 1.
}

\textcolor{black}{The multi-term gradient tracking structure encountered here for semi-decentralized FL introduces novel challenges that we address in our theoretical development. The difficulty of this analysis lies in carefully constructing the terms $\mathcal{Z}_t$ and $\mathcal{Y}_t$. In non-hierarchical gradient tracking FL~\cite{karimireddy2020scaffold}, only one gradient tracking term exists, leading to a relatively straightforward treatment for ensuring convergence. However, in our hierarchical setting, improper definition of $\mathcal{Z}_t$ and $\mathcal{Y}_t$ may induce convergence towards non-zero values. In our work, we show that with our choices of $\mathcal{Z}_t$ and $\mathcal{Y}_t$, we can derive results showing that those terms jointly converge with proper choice of step size (see Lemmas \ref{lem2} and \ref{lem3} and their proofs in Appendix A). Combining these results with Lemma~\ref{lem4}, where we control the behavior of the errors introduced by sampling $\Gamma_t$, leads to convergence of the Lyapunov function.}

For the convex case, we start with Lemma \ref{lem6}. We use $- \gamma K  \mathbb{E} (f(\Bar{x}_g^t) - f(x^\star))$ as the descent direction for the weakly convex case ($\mu = 0$), and $- \frac{\mu\gamma}{2}\mathbb{E}\|\Bar{x}_g^t - x^\star\|^2 $ as the descent direction for the strongly convex case ($\mu > 0$). For the weakly convex case, the process is similar to the non-convex case, but instead of unrolling the recursion using $\mathbb{E}\|\nabla f(\overline{x}_g^{t})\|^2$, we use $\mathbb{E} (f(\Bar{x}_g^t) - f(x^\star))$, which gives us convergence towards the global optimum instead of a stationary point. We then expand the recursion and bound all deviating terms with Lemmas \ref{lem1}, \ref{lem2}, \ref{lem3}, and \ref{lem4} and derive Theorem \ref{thm2}. For the strongly convex case, we construct a matrix that iterates the terms $\mathbb{E}\|\overline{x}_\mathrm{g} ^t - x^\star\|^2 $, $\Gamma_t $, $\mathcal{Y}_t$, and $\gamma \mathcal{Z}_t$. Then, we bound the $\ell_1$-norm of this matrix, and choose an appropriate step size so that we can ensure a linear rate when $\sigma = 0$. This leads to the results in Theorem \ref{thm3}.

In all three theorems, we can also see the importance of initialization for gradient tracking terms. If we have arbitrary initialization of the gradient tracking terms under the constraint that $\mathcal{Y}_1\frac{11^\top}{n} = \mathcal{Z}_1\frac{11^\top}{n} = \mathbf{0}$, then we will inevitably have the norm values $\mathcal{Y}_1$ and $\mathcal{Z}_1$ in the convergence bounds. The values of these two terms depend on the data heterogeneity level, which we aim to avoid in our results. On the other hand, when we use the initialization defined in the {\tt SD-GT} algorithm, we can show that $\mathcal{Y}_1 \leq \sigma^2$ and $\mathcal{Z}_1 \leq \sigma^2$, which removes the dependency on gradient dissimilarity across devices.
%This shows that for gradient tracking methods to remove dependency on data-heterogeneity, proper initialization on the gradient tracking terms is required.

\vspace{-0.1in}

\subsection{Learning-Efficiency Co-Optimization}
\label{sec:IV-E}
% \evan{change all the control algorithm into co-optimization alg}
Based on Corollary \ref{cor1}, we can see that the convergence speed under general non-convex problems is highly affected by the ratio $\frac{1}{p^2}$. Without any additional conditions, the best choice to maximize the convergence speed is setting $p = 1$.
% where $p_1 = 1$ means the central server samples all clients from each subnet and $p_2 = 1$ means each subnet has a fully connected (centralized) network structure.
However, recall that one of the main objectives of semi-decentralized FL is to reduce communication costs between devices and the main server; for this, we wish to minimize $p$. We therefore propose a method for the central server to trade-off between convergence speed and communication cost, e.g., the energy or monetary cost for wireless bandwidth usage.
% \evan{change the energy cost to communication cost e.g. energy/monetary cost for bandwidth usage}

To facilitate this, assume that for each subnet $\mathcal{C}_s$, the server estimates a communication cost of $E_{s}$ for pulling and pushing variables from the subnet $\mathcal{C}_s$, and a communication cost of $E_{s}^{\mathrm{D2D}}$ for every round of D2D communication performed by the subnet $\mathcal{C}_s$. As in the case of Theorem \ref{thm1}, we define $\beta_{s} = \frac{m_{s} - h_{s}}{m_{s}}$ to be the ratio of unsampled clients from each subnet. Based on Corollary \ref{cor1}, then, one possibility is for the server to consider the following optimization objective for determining the number of D2D communication rounds $K$ and $\beta_{s}$:
\begin{equation}
    \begin{aligned}
     &\textstyle \frac{\lambda_1}{p^2} +  \sqrt{\frac{\lambda_2}{K}} + (\frac{\lambda_2}{Kp^2})^{\frac{2}{3}} \\
    &\textstyle+ \lambda_3\Big(\sum_{s=1}^S(1 - \beta_{s})\cdot E_{s}\textstyle+ K\sum_{s=1}^S E^{D2D}_{s}\Big),
    \end{aligned}
    \label{eq16}
\end{equation}
where $\lambda_1,\lambda_2,\lambda_3 > 0$. $\lambda_1, \lambda_2$ replace unknown constants in Corollary \ref{cor1}, and $\lambda_3$ balances the importance of the communication costs. However, using~\eqref{eq16} directly would not account for the fact that the gradient norm $\mathbb{E}\|\nabla f(\overline{x}_\mathrm{g} ^t)\|^2$ will decrease over the course of training, giving an opportunity to favor a more communication efficient solution (i.e., higher $K$ and $\beta_s$) in later training stages. To this end, let $K_t$ be the number of D2D rounds each global iteration $t$, $\beta_{1}^{t}, \ldots, 
\beta_{S}^{t}$ be the rate of unsampled clients from each subnet at $t$, and define $p_t = \min(1 - (\beta_{1}^{t})^2, \ldots, 1 - (\beta_{s}^{t})^2)$. We aim to adaptively control these variables at the server each global round $t$.

% By using the terms from Theorem \ref{thm1}, and simply replacing all coefficients that are not computable by the server with some balancing constants. The server can try to the form a minimization problem based on the equation below with the balance terms $\lambda_1,\lambda_2,\lambda_3 > 0$.
% {\begin{equation}
%     \begin{aligned}
%      &\textstyle \frac{\lambda_1}{p^2} +  \sqrt{\frac{\lambda_2}{K}} + (\frac{\lambda_2}{Kp^2})^{\frac{2}{3}} \\
%     &\textstyle+ \lambda_3\Big(\sum_{s=1}^S(1 - \beta_{s})\cdot E_{s}\textstyle+ K\sum_{s=1}^S E^{D2D}_{s}\Big),
%     \end{aligned}
%     \label{eq16}
% \end{equation}}

Corollary \ref{cor1} captures the behavior of our algorithm under a specific initialization. In order to update the values of $\beta_{i}^{t}$, $p_t$ and $K_t$ dynamically throughout the training process, we will consider the Lyapunov function $\mathcal{H}_t$ as the bound used for estimation instead of only the function value gap $\mathcal{E}_t$. Note that $\mathcal{H}_t$ is a linear combination of $\mathcal{E}_t$ and three other training performance terms from Sec.~\ref{sec:IV-C}: $\mathcal{Y}_t$, $\mathcal{Z}_t$, and $\Gamma_t$. It is unclear how to obtain a reliable estimate for the in-subnet gradient tracking performance $\mathcal{Z}_t$ at the server, since in {\tt SD-GT} there is no edge server that can aggregate the information of all $z_i^t$ within subnets. As a result, we focus on estimating the remaining three terms. The server can obtain an approximation of the between-subnet gradient tracking performance $\mathcal{Y}_t$ using the change in $\psi_{s}^{t}$ values~\eqref{eq10} it stores for each subnet from $t$ to $t + 1$:
\begin{equation}
        \textstyle\hat{Y}_t = \frac{1}{S} \sum_{s=1}^{S}\|\psi_{s}^{t} - \psi_{s}^{t+1}\|^2.
        \label{yhat}
\end{equation}
For the client deviation term $\Gamma_t$, since the server does not have access to all clients, it can make an estimate based on its sampled set:
\begin{equation}
    \textstyle\hat{\Gamma}_t = \frac{1}{S\cdot h_s}\sum_{s=1}^S \sum_{j=1}^{h_s} \left\|x_{s,j}^{t,K+1} - x_g^{t+1}\right\|^2.
    \label{gammahat}
\end{equation}
Finally, recall that $\mathcal{E}_t$ measures how close the global model is to the optimal solution. However, the server does not know $x^\star$, and also cannot compute the expectation $\mathbb{E}f(\cdot)$ over the system's data distribution. Hence, we resort to Corollary~\ref{cor1}'s indication of a sub-linear convergence rate, and use a decaying term $\frac{1}{t}$ to approximate the effect of $\mathcal{E}_t$. Putting this together, we have the following approximate Lyapunov function value:
\begin{equation}
    \hat{\mathcal{H}}_t = \frac{1}{t} + \lambda_1^2 \Big( \frac{(K_t)^3\gamma^3}{(p_t)^2}\hat{Y}_t + \frac{K_t\gamma}{p_t}\hat{\Gamma}_t\Big).
    \label{Hhat}
\end{equation}
Using this estimate together with~\eqref{eq16}, at the end of global iteration $t$, the server can adapt the next iteration's sampling rates and D2D rounds by solving the following optimization:
\begin{align}
    % \begin{aligned}
    \min_{\beta_{1}, \ldots, \beta_{S}, K}\quad &\textstyle \frac{\lambda_1\hat{H}_t}{p^2} +  \sqrt{\frac{\lambda_2\hat{H}_t}{K}} + (\frac{\lambda_2\hat{H}_t}{Kp^2})^{\frac{2}{3}} \notag\\
    &\textstyle+ \lambda_3\Big(\sum_{s=1}^S(1 - \beta_{s})\cdot E_{s}\textstyle+ K\sum_{s=1}^S E^{D2D}_{s}\Big),\notag\\
    \textrm{subject to} \quad &\textstyle 0 \leq \beta_{s} \leq \frac{m_{s} - 1}{m_{s}},\quad 1\leq i \leq l,\notag\\
    &\textstyle p = \min(1 - \beta_{1}^2, \ldots, 1 - \beta_{s}^2).    \label{control_alg_prob}
    % \end{aligned}
\end{align}

\begin{table*}
\normalsize
    \centering
    \caption{Comparison of the number of communication rounds required for different methods to reach an $\epsilon$-optimal solution. For the gradient correction methods which effectively contain one subnet, $\rho$ denotes the connectivity of the network (based on the spectral radius), with $\rho = 1$ being a fully connected network. H-SGD considers a two layer tree-like network with each subnet containing an edge server that can aggregate model parameters from all clients in the subnet. Here, $\eta$ and $\eta_s$ are data-heterogeneity related variables: $\eta$ is the upper bound of the gradient divergence between subnets, and $\eta_s$ is the upper bound of the gradient divergence within subnet $s$.}
    \begin{tabular}{cccc}
        \hline
       Method  & non-convex rate  &  network topology & uses gradient correction \\
        \hline
        \hline
       \vspace{0.05in}
       
       SCAFFOLD \cite{mishchenko2022proxskip} & $\mathcal{O}\Big( \frac{\sigma^2}{nK\epsilon^2} + \frac{1}{\epsilon}\Big)$ & centralized & \cmark\\
        \hline
       \vspace{0.05in}
       K-GT \cite{liu2023decentralized}  & $\mathcal{O}\Big( \frac{\sigma^2}{nK\epsilon^2} + \big(\frac{\sigma}{\rho^2 \sqrt{K}}\big)\frac{1}{\epsilon^{\frac{3}{2}}} + \frac{1}{\rho^2\epsilon}\Big)$ & decentralized & \cmark\\
        \hline
       \vspace{0.05in}
       LED \cite{alghunaim2024local}& $\mathcal{O}\Big( \frac{\sigma^2}{nK\epsilon^2} + \big(\frac{\sigma}{ \sqrt{\rho K}}\big)\frac{1}{\epsilon^{\frac{3}{2}}} + \frac{1}{\rho\epsilon}\Big)$& decentralized& \cmark\\
        \hline
       \vspace{0.05in}
       H-SGD \cite{wang2022demystifying}&  $\mathcal{O}\Big(\frac{\sigma^2}{n\epsilon^2} + (\sqrt{K}\sigma + K\eta + \sum_{s=1}^{\mathcal{S}}\eta_s)\frac{1}{\epsilon^{\frac{3}{2}}} + \frac{K}{\epsilon}\Big)$ & hierarchical& \xmark\\
        \hline
       \vspace{0.05in}
        
       SD-GT (ours) &  $\mathcal{O}\Big( \frac{\sigma^2}{nK\epsilon^2} + \big(\frac{\sigma}{p^2 q \sqrt{K}}\big)\frac{1}{\epsilon^{\frac{3}{2}}} + \frac{1}{p^2q\epsilon}\Big)$ & semi-decentralized& \cmark\\
        \hline
       
    \end{tabular}
    \vspace{0.15in}
    \label{table:1}
\end{table*}

\begin{algorithm}[t]
{
\caption{Dynamic Control Algorithm for {\tt SD-GT} at iteration $t$.}
\label{alg:2}
\KwInput{$t, p_t, K_t, \psi_{s}^t, \psi_x^{t+1}, x_{s,j}^{t,K+1}, x_g^{t+1}$}
\KwOutput{$\beta_{1}^{t+1}, \ldots, 
\beta_{S}^{t+1}, p_{t+1},  K_{t+1}$}
\KwConstants{$\lambda_1, \lambda_2, \lambda_3$}
Compute $\hat{Y}_t$ and $\hat{\Gamma_t}$ using \eqref{yhat} and \eqref{gammahat}.\\
Compute $\hat{\mathcal{H}_t}$ using \eqref{Hhat}.\\
Solve the minimization problem \eqref{control_alg_prob}, e.g., using GP, to obtain $ K_{t+1}$ and $\beta_{1}^{t+1}, \ldots, 
\beta_{S}^{t+1}$.\\
Compute $p_{t+1} = \min(1 - (\beta_{1}^{t+1})^2, \ldots, 1 - (\beta_{s}^{t+1})^2)$.\\
Server broadcasts the value $K_{t+1}$ to all clients.
}
\end{algorithm}
%Base on the minimization problem \eqref{control_alg_prob}, we propose 

% In the minimization problem \eqref{control_alg_prob}, the ratio between $E^{D2D}_{s}$ and $E_{s}$ is set to  $E_{s}^{D2D}/E_{s} = \delta$
% % with $\delta \ll 1$ \shiqiang{Why is this needed? It seems $E^{D2D}_{\mathcal{C}_s}$ and $E_{\mathcal{C}_s}$ are input parameters to the problem, which should be possible to take any positive value}
% . A small $\delta$ means that the cost for a client to communicate with its neighbors using D2D communication is much cheaper than to perform a centralized aggregation.

The optimization problem \eqref{control_alg_prob} 
% \shiqiang{Add word ``Equation'' only when equation number appears at the beginning of a sentence (IEEE style)} 
has a similar form to a geometric program (GP), with the exception of the last constraint $p = \min(1 - \beta_{1}^2, \ldots, 1 - \beta_{S}^2)$. By relaxing it to $p \leq \min(1 - \beta_{1}^2, \ldots, 1 - \beta_{S}^2)$, it becomes a posynomial inequality constraint which can be handled by GP~\cite{boyd2007tutorial}.
The solution of the relaxed problem will remain the same as \eqref{control_alg_prob}, which is provable via a straightforward contradiction. It is worth comparing this procedure to the more complex control algorithm for SD-FL proposed in \cite{lin2021semi}, where one must adaptively choose a smaller $K$ when client models in a subnet deviate from each other too quickly. With our {\tt SD-GT} methodology, the subnet drift is inherently controlled, allowing a relatively stable $K_t$ throughout the whole training process.

\textcolor{black}{The computational complexity of solving this GP using interior-point methods (IPMs) is approximately $\mathcal{O}(S^3 \log(S/\epsilon))$, where $\epsilon$ is the desired accuracy~\cite{boyd2004convex}; in particular, since our formulation optimizes over $S + 1$ variables, each Newton iteration incurs $\mathcal{O}(S^3)$ cost. Importantly, the overhead of solving this optimization is small compared to the other server operations of model aggregation, broadcasting, and gradient tracking that scale with the model dimension $d$.}
%Thus, the control step incurs only a tiny fraction of the total server computational cost.}

% \chris{Add some discussion here on how our approach based on gradient tracking leads to a simpler control optimization than~\cite{lin2021semi} because we explicitly integrate subnet-drift into the learning algorithm.}

Algorithm \ref{alg:2} summarizes the dynamic control procedure for {\tt SD-GT} developed in this section. It can be executed at any global iteration $t \in [1, T]$ during the training process to update $K_t$ and the sampling rates. \textcolor{black}{
We provide both a convergence analysis of {\tt SD-GT} under Algorithm~\ref{alg:2} and bounds on the associated $K_t$ values in Appendix~\ref{appen:ctrl_convergence}. Overall, those results show that Algorithm~\ref{alg:2} achieves the same convergence rate as Corollary~\ref{cor1}, with time-varying effects from adaptive parameters bounded by their minimum observed values during training. Moreover, higher D2D costs are shown to restrict feasible choices of $K_t$, while lower costs allow larger values, demonstrating that our dynamic control of $K_t$ balances communication and optimization objectives as intended.}

% After solving the minimization problem, the server has the ability to dynamically select more clients from clusters that cost less to perform global aggregation and select less clients from clusters that cost more to perform global aggregation. If the server wants the same sampling rate $\frac{h_{c_i}}{m_{c_i}}$ throughout all clusters, it can simply set $E_{c_1} = E_{c_2} = \ldots = E_{c_l}$.

\vspace{-0.1in}
\section{Performance Evaluation}
\label{sec:V}
In this section, \textcolor{black}{we first compare the non-convex convergence rate with  existing works (Sec. \ref{sec:IV-D})}. Then, we introduce our experimental setup (Sec. \ref{sec:V-A}), evaluate the performance compared to baselines under different aggregation frequencies (Sec. \ref{sec:V-B}), and compare results varying different network parameters (Sec. \ref{sec:V-C}). We then validate the linear rate shown in our theory for strongly convex problems with deterministic gradients (Sec. \ref{sec:V-D}). Finally, we show the performance of our co-optimization control algorithm (Sec. \ref{sec:V-E}). \textcolor{black}{Additional experiments considering weakly-convex problems and stochastic noise are given in Appendix~\ref{appen:additional_experiments}.}
\vspace{-0.1in}
\subsection{Rate Comparisons with Related Works}
\label{sec:IV-D}

To compare with existing methods, we consider the communication rounds required to reach an $\epsilon$-optimal solution. For methods considering FL's star topology and its hierarchical extension, this corresponds to DS communication rounds, while for methods considering fully decentralized FL, this corresponds to D2D rounds. Since {\tt SD-GT} operates in the semi-decentralized setting containing both DS and D2D, we focus on the more energy-intensive DS rounds for our setting. Table \ref{table:1} summarizes the rates of our work and existing works:
% This is a reasonable comparison the D2D communications are considered to be low cost communication that are likely to be as energy-efficient as computing local gradients.

\textbf{Comparison with gradient correction methods.} First, we compare our results with the gradient tracking-based methods SCAFFOLD and K-GT \cite{mishchenko2022proxskip, liu2023decentralized}, and the exact-diffusion method LED \cite{alghunaim2024local}. The terms $p$ and $q$ in our convergence rate reflect the impact of the hierarchical structure considered by {\tt SD-GT}, different from the other gradient correction methods that do not consider this structure. Here, $p$ is determined by the lowest sampling rate of each subnet. If the server samples all clients from each subnet, we will have $p = 1$ and maximize the convergence rate, at the expense of more DS communication. The quantity $q$ is the lowest connectivity of all subnets: if one subnet $\mathcal{C}_s$ is sparse with a very low connectivity $\rho_s$, this will still affect the required communication rounds to reach $\epsilon$-optimal solution.

For a direct rate comparison, we can consider the special case of {\tt SD-GT} where the hierarchical network only has one subnet and the server aggregates from all clients in the subnet. In this case, we have $p = 1$ and $q = \rho$. In this special case, the rate of our algorithm reduces to:
\begin{equation}
    \textstyle\mathcal{O}\left( \frac{\sigma^2}{nK\epsilon^2} + \left(\frac{\sigma}{\rho \sqrt{K}}\right)\frac{1}{\epsilon^{\frac{3}{2}}} + \frac{1}{\rho\epsilon}\right).
\end{equation}
Compared with K-GT and LED, we can see that our first term $\mathcal{O}\left(\frac{\sigma^2}{nK\epsilon^2}\right)$ recovers their result. On the other hand, our second term $\mathcal{O}\left(\big(\frac{\sigma}{\rho \sqrt{K}}\big)\frac{1}{\epsilon^{\frac{3}{2}}}\right)$ is in between the rate of K-GT and LED, while our third term $\mathcal{O}(\frac{1}{\rho\epsilon})$ is better than K-GT and the same as LED. This result is consistent to observations from previous works where exact-diffusion methods have better network-dependent rates than gradient tracking methods \cite{alghunaim2022unified,alghunaim2024local}.

\textbf{Comparison with hierarchical FL methods.} Next, we compare our rate with H-SGD, a recent hierarchical FL framework that does not employ gradient tracking \cite{wang2022demystifying}. There are several differences between their setting and our setting: 1) Unlike H-SGD, we do not assume the existence of edge servers that can aggregate information within a subnet. 2) We perform D2D communication every time each client performs one round of gradient update, while H-SGD performs in-subnet aggregations after $I_s$ rounds of local gradient updates. Hence, for a direct comparison, we set $I_s = 1$ to observe the convergence rates in Table \ref{table:1}. There are several differences between the two bounds. First, our bound does not require assumptions on data heterogeneity, while their bound requires some $\eta$ and $\eta_s$ that bounds the gradient divergence of the network. Second, we see that our rate improves with increasing the period $K$ between global aggregations,\footnote{The caveat is that the step size must decrease proportional to $K$, as shown in Theorem~\ref{thm1}, consistent with the other gradient tracking methods.} while H-SGD's gets slower. This is due to {\tt SD-GT} employing D2D communications for local aggregations, as well as our removal of the client drift impact through gradient tracking.

\begin{figure*}
% \centerline{\includegraphics[width=\textwidth]{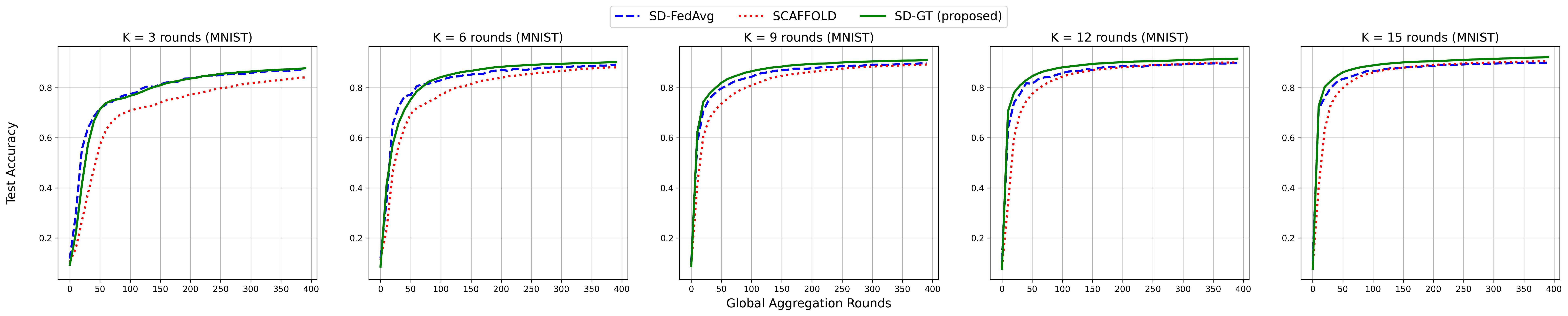}}
\centerline{\includegraphics[width=\textwidth]{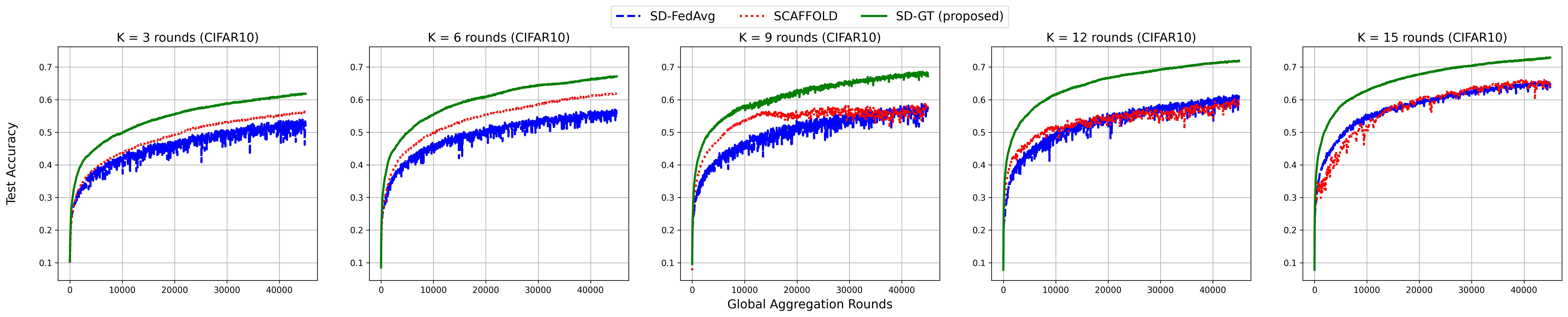}}
\caption{Comparison between algorithms on CIFAR10 datasets when changing the number of local client updates and D2D consensus rounds $K$ between global aggregations. Each experiment is conducted with 30 clients and 3 subnets. As $K$ increases, {\tt SD-GT} is able to take advantage of multiple in-subnet model update and consensus iterations while correcting for client drift to achieve better convergence speed, particularly on CIFAR10.\vspace{-0.15in}}
\label{fig:CIFAR10_exp1}
\end{figure*}

\begin{figure}
\center
{
\includegraphics[width=0.88\linewidth]{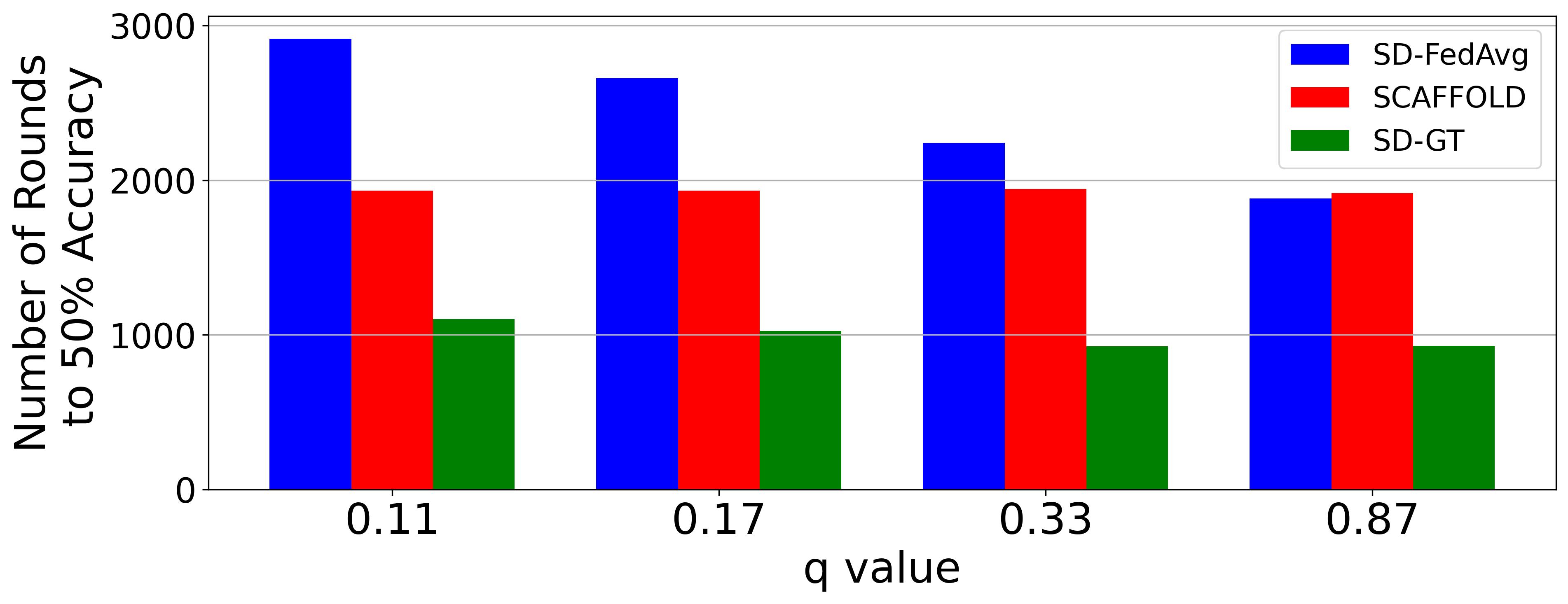}
}
\caption{\textcolor{black}{Comparison when changing the radius range ($q = 0.1, 0.17, 0.3, 0.87$) of the geometric graph for D2D communication links, on the CIFAR10 dataset. All experiments have $K = 3$, with 30 total clients and 3 subnets. Compared with SD-FedAvg, our method that combines both D2D communications and gradient tracking obtains robust performance over different subnet connectivity levels. SCAFFOLD's performance is unaffected since it does not employ D2D communications.} \vspace{-0.15in}}
\label{fig:setting4}
\end{figure}

\vspace{-0.1in}
\subsection{Experimental Setup}
\label{sec:V-A}
% \chris{Rather than setting 1 vs. setting 2, I think we should call it 4 datasets/ML tasks. The setup can have 1) System setup; 2) Datasets/ML tasks; 3) Baselines; 4) Parameter settings}
\textbf{System Setup.} By default, we consider $S = 3$ subnets and $m_s = 10$ clients per subnet. For each subnet $s$, we generate the D2D communication graph structure among the set $\mathcal{C}_s$ of clients using a random
geometric graph, as in prior works~\cite{lin2021semi,parasnis2024energy}. Specifically, \textcolor{black}{we generate a realization of $n$ device coordinates sampled over a uniform grid. The clients are then grouped into subnets base on their distance using $K$-means clustering}. Then, a sensing radius is generated for each client, and when two clients are within each other's range, a D2D link is established between them. By default, we choose the radius for each client randomly in the range $[0.5, 3.5]$. Subsequently, the weight matrices $W_1,...,W_S$ for each subnet are generated using the standard Metropolis-Hasting rule \cite{sayed2014adaptation}, which enforces double stochasticity on each $W_s$. In Sec.~\ref{sec:V-C}, we will consider several network variations, where we change the connectivity of subnets, the size of each subnet, and other aspects to demonstrate the importance of employing gradient tracking terms between-subnets and within-subnets.

% 1) Fixing the total number of clients in the system, and changing the number of subnets, where each subnet holds the same number of clients; 2) Fixing the number of clients and the sampling rate of the server sampling from each subnet, and changing the number of subnets.
%; 3) Fixing the number of clients and their network topology, and changing the number of D2D communication rounds $K$ performed between global aggregations.

\textbf{Datasets.} There are in total four datasets that we use in our experiments. The first three are standard image classification tasks used in FL, and last one one is a synthetic dataset.

\textit{(i) Real-world Datasets:} We consider three neural network classification tasks using a cross entropy loss function on image datasets (MNIST\cite{lecun1998gradient}, CIFAR10, CIFAR100\cite{krizhevsky2009learning}). Let $\mathcal{D}_i$ be the dataset allocated on client $i$; we set $|\mathcal{D}_i|$ to be the same $\forall i$. 
To simulate data heterogeneity, by default, each client in the MNIST and CIFAR10 experiments only hold data of one class (out of 10), and each client in the CIFAR100 experiments only hold data from at most six classes (out of 100). Subnets of given sizes are then formed among clients with proximity distance. \textcolor{black}{We then experiment with different non-i.i.d. allocations and subnetting strategies in Sec.~\ref{sec:V-C}.}
% \evan{cite the papers of the dataset}

\textit{(ii) Synthetic Dataset:} For Sec.~\ref{sec:V-D}, we consider a strongly convex Least Squares (LS) problem using synthetic data, to demonstrate the linear convergence of our algorithm under strong-convexity. In the LS problem, each client $i$ estimates an unknown signal $x_0 \in\mathbb{R}^d$ through linear measurements $b_i = A^ix_0 + n_i$, where $A^i \in \mathbb{R}^{{|\mathcal{D}_i|}\times d}$ is the sensing matrix, and $n_i \in \mathbb{R}^{|\mathcal{D}_i|}$ is the additive noise. 
% The LS problem cost function for each local client will be in the form:
% \begin{equation}
%     f_i(x) = \|A_ix - b_i\|^2.
% \end{equation}
Here, $d = 200$ and $x_0$ is a vector of i.i.d. random variables drawn from the standard normal distribution $\mathcal{N}(0,1)$. Each client $i$ receives $|\mathcal{D}_i| = 30$ observations. All additive noise is sampled from $\mathcal{N}(0, 0.04)$. The sensing matrix $A^i$ is generated as follows: For the $j$\textsuperscript{th} row of $A^i$, denoted by $a^i_j \in \mathbb{R}^d$, let $z_1, ..., z_d$ be an i.i.d. sequence of $\mathcal{N}(0,1)$ variables, and fix a correlation parameter $\omega \in [0, 1)$. For each row $j$ in $A^i$, we initialize by setting the first element to be $a^i_{j,1} = \frac{z_1}{\sqrt{1 - \omega^2}}$, and generate the remaining entries by applying the recursive relation
$a^i_{j,l+1} = \omega a^i_{j,l} + z_{l+1}$, for $l = 1, 2, \ldots, d-1$.

\textbf{Baselines.} We consider two baselines for comparison:

\textit{(i) SD-FedAvg:} Our first baseline is a semi-decentralized version of FedAvg\cite{lin2021semi} (denoted as SD-FedAvg). Each client updates using only its local gradient, and communicates with its nearby neighbors within each subnet using D2D communication after every gradient computation. A global aggregation is conducted after every $K$ rounds of gradient computation. This baseline does not contain gradient tracking, thus allowing us to assess this component of our methodology.

\textit{(ii) SCAFFOLD:} We also run a comparison with SCAFFOLD\cite{karimireddy2020scaffold} from Table~\ref{table:1}, a centralized gradient tracking algorithm that considers star topology aggregations. This baseline does not conduct D2D-based model aggregations. Thus, within each global round, SD-FedAvg and {\tt SD-GT} perform $K$ rounds of D2D communications and model updates, while SCAFFOLD computes $K$ rounds of local on-device updates. Comparing with SCAFFOLD allows us to assess the benefit of our methodology co-designing semi-decentralized updates with gradient tracking.\footnote{Recall that K-GT and LED, the other gradient tracking methods in Table~\ref{table:1}, consider the fully decentralized setting without any server, hence isn't comparable with our algorithm which considers a central server.}

\textbf{Parameter Settings.} We use a constant step size of $\gamma = 1 \times10^{-2}$ for MNIST, $\gamma = 1 \times 10^{-3}$ for CIFAR10, and $\gamma = 3 \times 10^{-4}$ for CIFAR100. For MNIST, we consider training a two layer fully-connected neural network with ReLU activations. For CIFAR10 and CIFAR100, we use a two layer convolutional neural network with ReLU activations. For experiments using the synthetic dataset, we tuned the value of $\omega$ to create two different condition numbers $\kappa$: (i) $\kappa \approx 80$, which is a simpler learning task, and (ii) $\kappa \approx 800$, which is more difficult. The step size for synthetic dataset experiments is set to $\gamma = 10^{-4}$. We use deterministic gradients for synthetic dataset and batch-size of $512$ for the image datasets.

For experiments on our co-optimization algorithm, the communication cost $E_{s}$ for each subnet is chosen uniformly at random between $1$ and $100$. The balance terms are set to $\lambda_1 = 1$,  $\lambda_2 = 10^{-1}$ and $\lambda_3 = 10^{-5}$. We compare different cost ratios $\delta = E_{s}^{\mathrm{D2D}}/E_{s}$. When $\delta << 1$, it indicates that D2D communication is much cheaper than global aggregation.
%We set $\delta = 1$ and $10^{-2}$ side by side to observe the importance of D2D communication. 

\begin{figure*}
\centerline{\includegraphics[width=\textwidth]{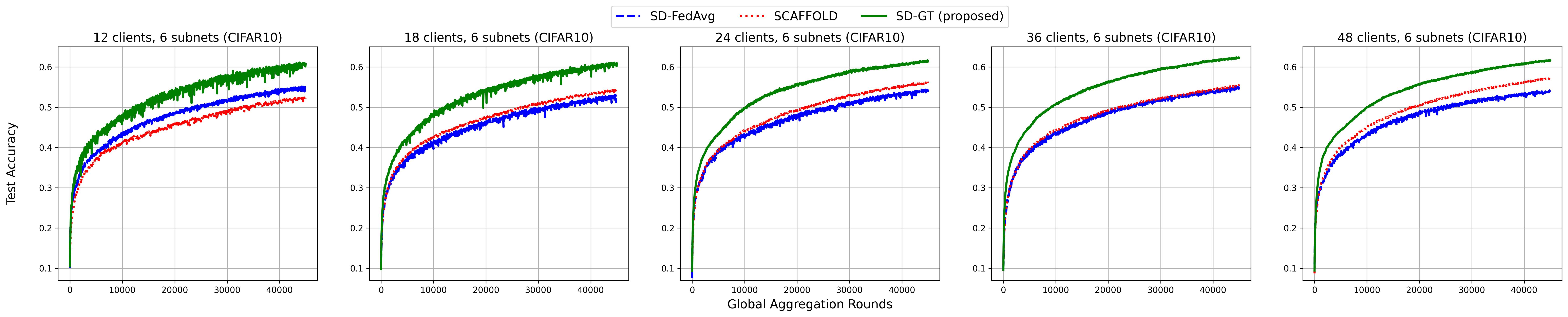}}
\centerline{\includegraphics[width=\textwidth]{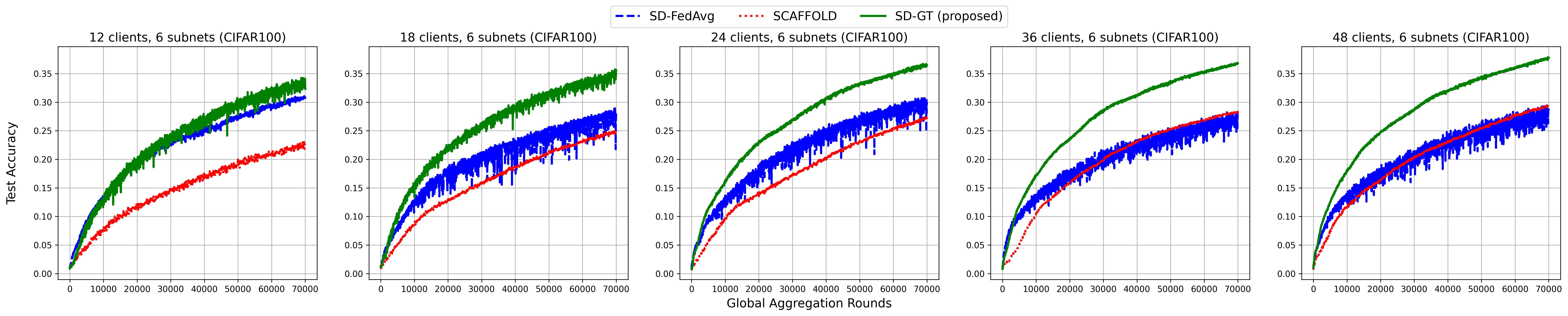}}
\caption{Impact of varying the total number of clients for $S = 6$ subnets ($K = 3$). As the size of each subnet increases, the within-subnet data heterogeneity increases. We see that {\tt SD-GT} is able to obtain larger improvements over the baselines as the number of clients grows larger (particularly for CIFAR100, an intrinsically more heterogeneous dataset due to having more labels), due to its inclusion of within-subnet gradient tracking terms.\vspace{-0.15in}}
%When the network connectivity and data heterogeneity within each subnet increases, {\tt SD-GT} is able to mix up information in each subnet better than SD-FedAvg.\vspace{-0.15in}}
\label{fig:CIFAR10_exp3}
\end{figure*}

\vspace{-0.1in}
\subsection{Varying Device-Server Communication Frequencies}

\label{sec:V-B}
% \textbf{Experiment 1: Convergence results for varying D2D rounds.} 
In Figure \ref{fig:CIFAR10_exp1}, we compare the model training performance of the algorithms on CIFAR10 as $K$, i.e., the number of D2D rounds and local updates conducted between global aggregations, is varied. In these experiments, the central server samples $40\%$ of the clients from each subnet. We experiment with $K = 3$ to $K = 15$ to observe the effect of performing multiple rounds of in-subnet consensus operations, where the frequency of global aggregations decreases as $K$ gets large. We can see that for CIFAR10, the SD-FedAvg baseline struggles to gain improvement from increasing the number of D2D communications. On the other hand, {\tt SD-GT} obtains better results when using a larger number of D2D rounds, since it tracks the gradient information through the network and corrects the client drifts accordingly. We also see that the convergence of SCAFFOLD is slower than {\tt SD-GT}, even though both algorithms uses gradient tracking to correct the update direction. This is due to the fact that SCAFFOLD does not consider D2D communications in-between global aggregations; {\tt SD-GT} exploits the fact that D2D is generally much less expensive than DS to conduct in-subnet aggregations.
%For the case where $K = 15$, we can observe that for the CIFAR10 dataset, the stability of convergence will drop, which indicates that the chosen step size is at the edge of violating the theoretical upper bound.
Overall, we see that \textit{our algorithm handles data heterogeneity better than the baselines, gaining more improvement from increasing the number of in-subnet D2D communication rounds.}
\vspace{-0.1in}
\subsection{Varying Network Structures}

\label{sec:V-C}
\textcolor{black}{\textbf{Impact of D2D Subnet Connectivity.} 
In Figure~\ref{fig:setting4}, we vary the D2D communication radius within each subnet to evaluate the effect of D2D connectivity on training convergence speed, for the CIFAR10 dataset. The device radius range across devices is varied from $[6.5, 7.5]$ (giving a mixing rate $q = 0.87$), corresponding to well-connected subnets, down to $[0.5, 1.5]$ ($q = 0.11$), representing sparse connectivity. As the subnet connectivity decreases, both SD-FedAvg and {\tt SD-GT} experience an increase in rounds required to reach the target accuracy. This observation aligns with Theorem~\ref{thm1}, where smaller values of $q$ results in slower convergence. However, the impact on SD-FedAvg is much more pronnounced. The performance drop can be attributed to reduced effectiveness of the D2D consensus process in sparsely connected subnets, which increases client drift. In contrast, our method corrects this drift using the global gradient tracking terms $y_i^t$.}

\textbf{Impact of Subnet Sizes.} 
In Figure \ref{fig:CIFAR10_exp3}, we consider the impact of varying the total number of clients $n$ for a fixed number of subnets $S = 6$, on the CIFAR10 and CIFAR100 datasets. For this experiment, we hold the data samples contained in each subnet constant as the subnet size $m_s$ increases, resulting in each client receiving a smaller portion of the data. This implies that the data heterogeneity within each subnet varies, while the data heterogeneity between subnets remains constant. Hence, this experiment assesses the impact of the within-subnet gradient tracking terms $z_i^t$ and D2D communication in {\tt SD-GT}.
%the improvement of our algorithm when the number of clients increases relies on the within-subnet gradient tracking terms and D2D communication.

Note that for a fixed SD-FL configuration, CIFAR10 has lower data heterogeneity than CIFAR100, since it has fewer labels. As a result, for CIFAR10, no matter how the subnet size varies, the D2D communications of SD-FedAvg do not bring much extra improvement compared to SCAFFOLD's global gradient tracking. On the other hand, for CIFAR100, SD-FedAvg is able to outperform SCAFFOLD when subnet sizes are small, given the benefit of local model synchronization. Still, SD-FedAvg's performance drops as the number of clients within each subnet increases. In all cases, {\tt SD-GT} is able to maintain superior performance over the baselines. This shows that \textit{within-subnet gradient tracking stablizes the convergence behavior when data heterogeneity within subnets increases}. 

% From Figure \ref{fig:CIFAR10_exp2}, we can see that for scenarios where global aggregation is more important, our method is able to outperform both methods with the usage of the between-subnet gradient tracking. For scenarios where D2D communication is more important, the within-subnet gradient tracking term is able to help our method perform the best among the baselines. \textit{This shows that the two gradient tracking terms we introduced is essential for stable convergence under different network structures.}

\begin{figure*}
\centerline{\includegraphics[width=\textwidth]{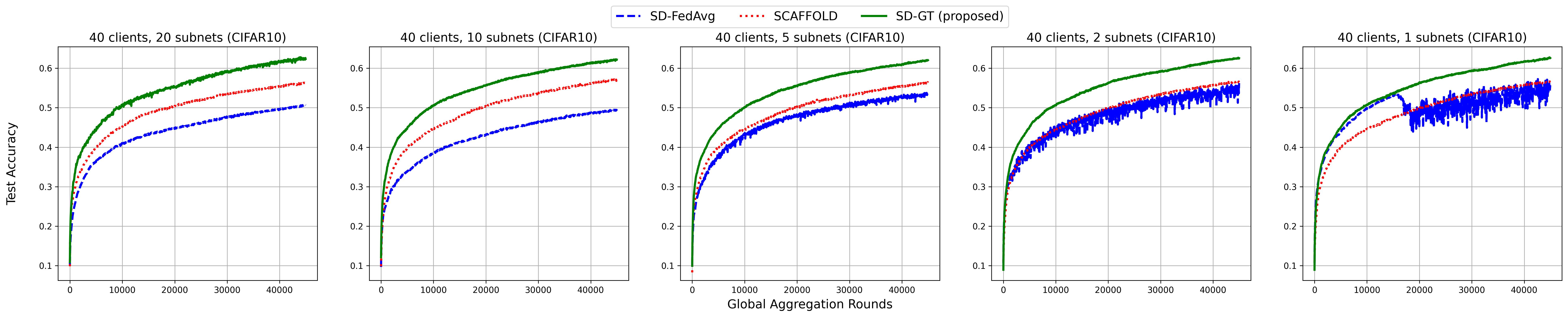}}
% \centerline{\includegraphics[width=\textwidth]{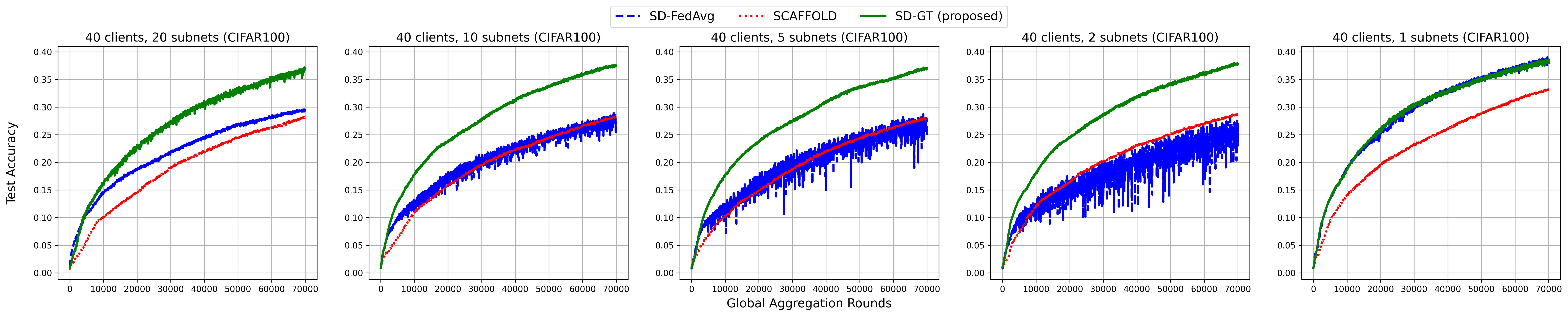}}
\caption{Impact of changing the number of subnets $S$ with fixed total number of clients $n = 40$ ($K = 3$). 
We can see that when varying subnet compositions, SD-FedAvg struggles to mix the information between subnets as within-subnet data heterogeneity grows, causing instability. {\tt SD-GT} is able to outperform both baselines using the combination of between-subnet and within-subnet gradient tracking on top of D2D communications.\vspace{-0.15in}}
\label{fig:CIFAR10_exp2}
\end{figure*}

\begin{figure}
    \centering
    \begin{subfigure}[t]{0.9\linewidth}
        \centering
        \includegraphics[width=\linewidth]{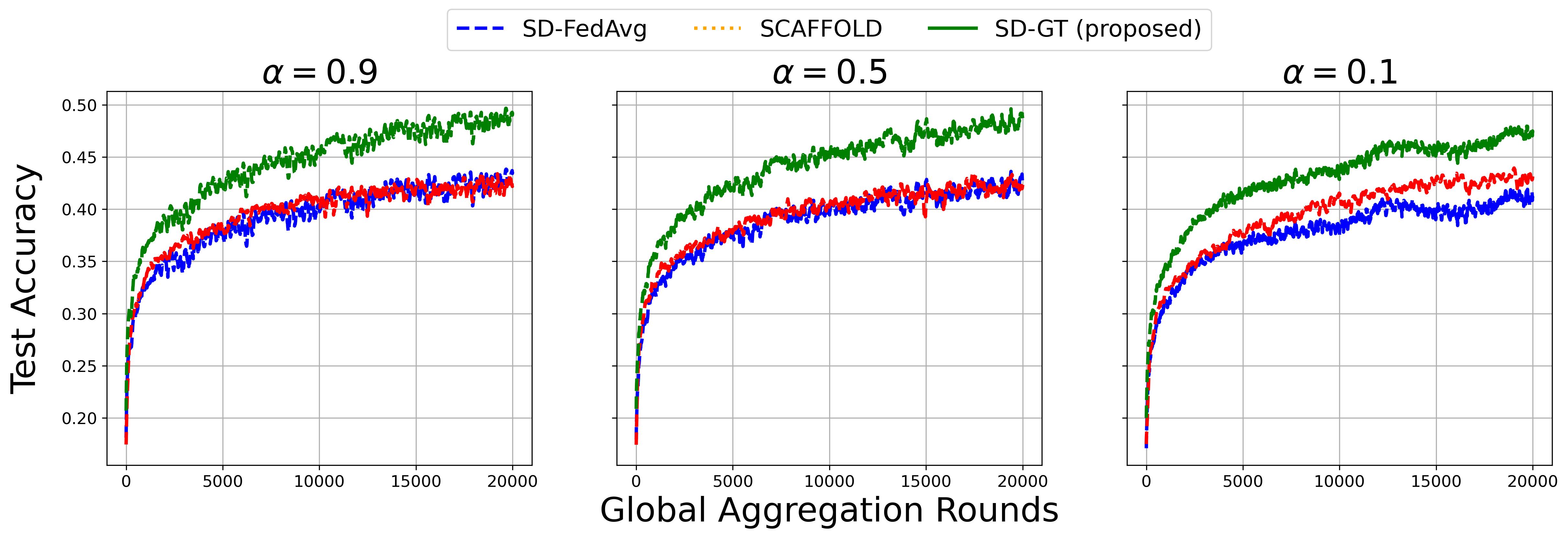}
        \captionsetup{justification=centering}
        \caption{Grouping by proximity (larger intra-subnet data heterogeneity)}
        \label{fig:dirich-group-proximity}       
    \end{subfigure}
    
    \begin{subfigure}[t]{0.95\linewidth}
        \centering
        \includegraphics[width=\linewidth]{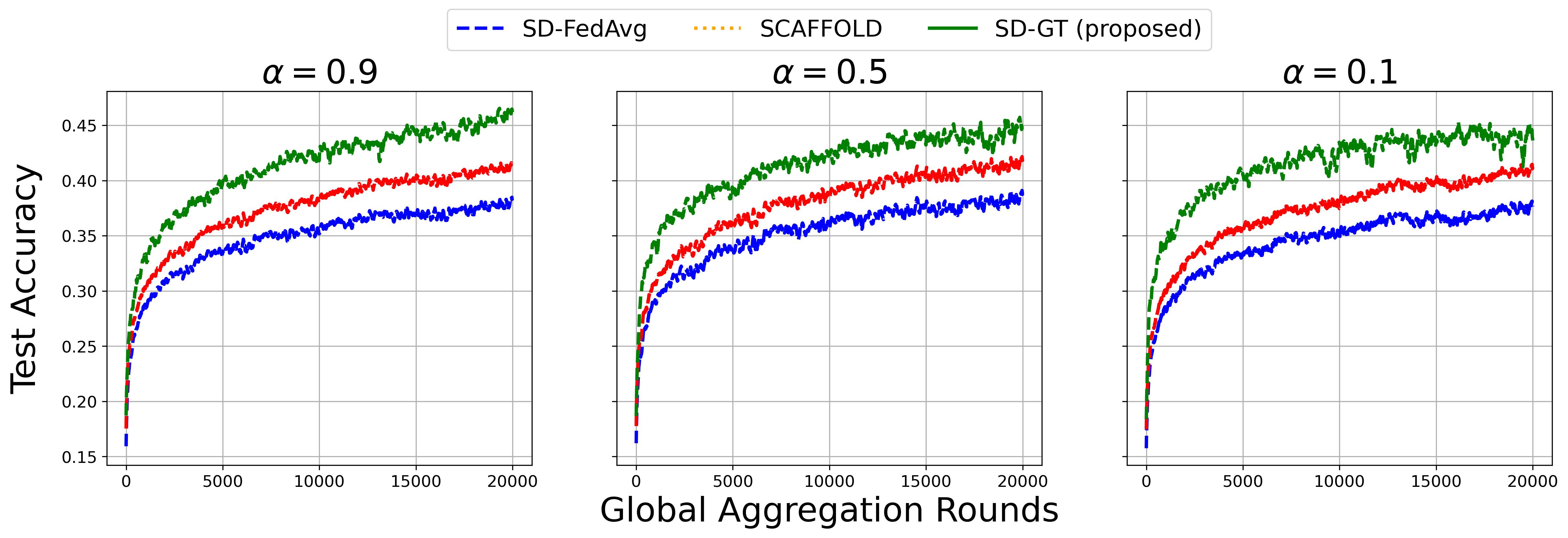}
        \captionsetup{justification=centering}
        \caption{Grouping by data similarity (larger inter-subnet data heterogeneity)}
        \label{fig:dirich-group-similarity}
    \end{subfigure}
    %\captionsetup{justification=centering}
    \caption{\textcolor{black}{Comparison of two subnet grouping policies with data heterogeneity controlled by Dirichlet concentration $\alpha$, on the CIFAR10 dataset.}}
    \label{fig:dirich-comparison}
\end{figure}

\begin{figure}
    \centering    \centerline{\includegraphics[width=0.45\textwidth]{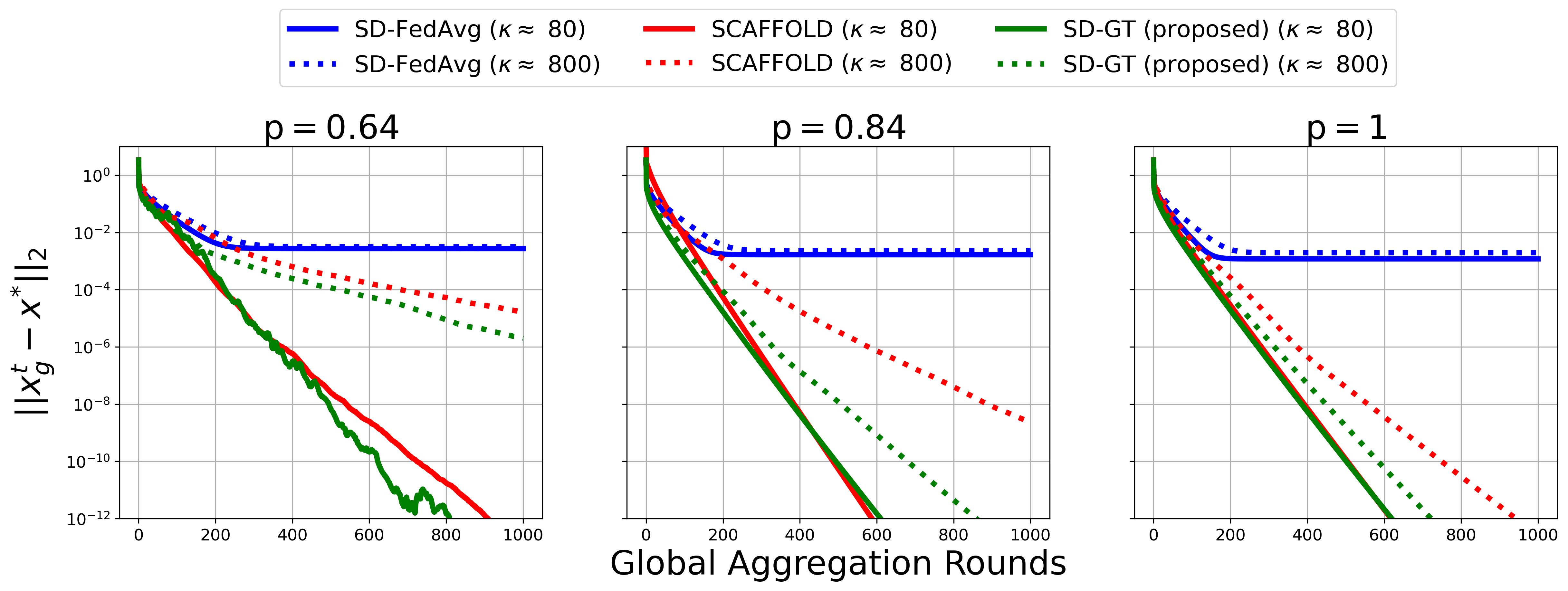}}
\caption{\textcolor{black}{Experimental results from synthetic dataset with $n = 30$ total clients and $S = 6$ total subnets $(K = 40)$. We see that {\tt SD-GT} obtains a linear convergence rate for each choice of sampling rate ($p = 0.64, 0.84, 1$) and condition number ($\kappa \approx 80, 800$), consistent with our theoretical results. Further, we can also observe that sampling rate effects the performance significantly under high conditional number $\kappa$.}\vspace{-0.15in}}
\label{fig:scvx}
\end{figure}
%is able to improve from increasing the sampling rate of each subnet while maintaining linear rate.

\textbf{Impact of Number of Subnets.} 
In Figure \ref{fig:CIFAR10_exp2}, we consider varying the number of subnets $S$ that partition a fixed number of clients $n = 40$ (thus varying number of clients per subnet $m_s$) on CIFAR10. In this setting, each client is assigned a fixed data across all experiments, ensuring that the data heterogeneity across the network remains the same, while both the within-subnet and between-subnet data heterogeneity vary.
%we aim to simulate the scenario where clients are disconnected from their original subnet and then reconnected to a different subnet without altering their local data. In this setting, each client is assigned a fixed subset of data across all experiments, ensuring the data heterogeneity across the network remains the same, while both the data-heterogeneity between and within subnets varies. We fix the total number of clients at $n = 40$, and change both the number of clients within each subnet $m_s$ and the total number of subnets $S$, conducting experiments on the CIFAR10 and CIFAR100 datasets

When the network contains a large number of small subnets (i.e., the left plots), {\tt SD-GT} training depends more on global aggregations and between-subnet gradient tracking $y_i^t$. Conversely, on networks that contains a small number of large subnets (i.e., the right plots), {\tt SD-GT} relies more on D2D communications and the impact of the within-subnet gradient tracking terms $z_i^t$. A key observation is that the performance of SD-FedAvg varies significantly depending on the subnet grouping. Additionally, {\tt SD-GT} outperforms both SCAFFOLD and SD-FedAvg in all cases. \textit{This emphasizes the importance of correcting client drift through both the gradient tracking terms within-subnet $z_i^t$ and between-subnet $y_i^t$,} as solutions like SD-FedAvg based on D2D communication with local gradient updates are not sufficient when the data-heterogeneity within/between subnets changes.
%, only using D2D communication with local gradient updates isn't enough, emphasizing the importance of correcting client drift through within-subnet gradient tracking term $z_i^t$ and between-subnet gradient tracking term $y_i^t$.}

% This can be shown by the CIFAR10 results of SD-FedAvg, where the performance improves when subnet set sizes are large (the importance of D2D communication increases). However, we can also observe that SD-FedAvg failed is struggling to gain improvement from D2D communication in CIFAR100, showing that \textit{to resolve high data-heterogeneity, only using D2D communication with local gradient update isn't enough, hence showing the importance of correct client drift through within-subnet gradient tracking.}

% Since all subnets are connected to the server in a centralized way, we can see that for all three methods, performance difference between networks with small amount of subnets is small when comparing to networks with large amount of subnets, indicating that if the dataset is spreaded evenly throughout the network, fixing subnet sizes and increasing subnet numbers has a similar effect to increasing the total number of agents in a fully centralized federated learning network. This is also indicated by Theorem \ref{thm1}. Since under this experiment, not only we are fixing the D2D communication round $K$, but the sampling rate is also fixed, which means we have a constant $p$. Plus all subnets have the same size (and same generation rule), which means $q$ is also approximately the same.

\textcolor{black}{\textbf{Varying data heterogeneity and subnetting:}
In Figure~\ref{fig:dirich-comparison}, we examine two interrelated notions: varying levels of non-i.i.d. data and subnetting strategies. We use the Dirichlet($\alpha$) distribution, for concentration parameter $\alpha > 0$, to allocate data labels across clients, where smaller values of $\alpha$ produce higher levels of non-i.i.d. Additionally, we consider different policies for grouping devices into subnets, which can impact the subnet graph connectivity and the data heterogeneity within and between subnets.}
% \textcolor{blue}{
% In our methodology, we assume the subnetting strategy is fixed beforehand (e.g., according to network operator policy), and make no particular assumptions on how the subnets are determined, similar to in prior works on semi-decentralized FL~\cite{lin2021semi,hosseinalipour2022multi,yemini2022semi}. The major impact of such policies on our algorithm would manifest from changes in subnet graph connectivity (i.e., $\rho_s$ in Assumption 2), and changes in local dataset heterogeneity within and between subnets (i.e., terms $z$ and $y$ in Algorithm 1). To investigate this, we have conducted experiments on two distinct subnet grouping policies: }

\textcolor{black}{\textit{(i) Grouping by proximity (larger intra-subnet data heterogeneity):} This corresponds to our standard system setup, except now, samples are allocated across devices according to Dirichlet($\alpha$). Thus, devices are grouped into subnets based on their ability to form short distance D2D links, which has communication efficiency advantages, but also
%This ensures that intra-subnet communication is more efficient than device-server (DS) communication.
does not control data heterogeneity within subnets (intra-subnet heterogeneity).}

\textcolor{black}{\textit{(ii) Grouping by data similarity (larger inter-subnet data heterogeneity):} In this case, we apply $K$-means across the distribution of classes possessed by each device, to subnet by similar data distributions. This helps reduce intra-subnet heterogeneity, but does not control proximity, and also increases data heterogeneity across subnets. We increase the geometric graph radius from $r \in [2.5, 3.5]$ to $r \in [6, 7]$ here to provide similar subnet connectivity in each case.}
%In this case, D2D communication within a subnet may not be more efficient than DS communication, as physically distant clients might still be grouped together. This setup leads to reduced intra-subnet heterogeneity but increased inter-subnet heterogeneity.

% \textcolor{blue}{
% In Fig.~\ref{fig:dirich-comparison}, we compare the effect of data heterogeneity across subnets and within subnets. We use Dirichlet distribution with $\alpha = 0.1, 0.5, 0.9$ to simulate different level of heterogeneity, we fixed the network to be 30 clients, 6 subnets, and $K=3$, the results shows that the inter-subnet have a higher impact on accuracy than the intra-subnet for both our method and SD-FedAvg. Which aligns with our intuition since D2D communication helps mitigate the heterogeneity within each subnet.}

\textcolor{black}{In Figure~\ref{fig:dirich-comparison}, we see that, for each value of $\alpha$, the policy of grouping by proximity tends to achieve a higher performance than grouping by data similarity. The gap between SD-FedAvg and SCAFFOLD closes when grouping by proximity, as D2D cooperative consensus rounds help mitigate intra-subnet data heterogeneity similar to gradient tracking within subnets. In each case, our SD-GT methodology obtains the highest performance, showing our joint design of cooperative consensus and gradient tracking provides improvements for different subnetting policies and non-i.i.d. distributions.}
%, showing our joint design of cooperative consensus and gradient tracking provides improvements for different subnetting policies
\vspace{-0.1in}

\subsection{Strongly Convex Learning Tasks}
\label{sec:V-D}
% \textbf{Experiment 4: Convergence results for varying DS sample rates.} 
\textcolor{black}{Figure \ref{fig:scvx} compares the learning performance between the algorithms on the synthetic dataset, a strongly convex task. We set the number of D2D rounds to $K = 40$ for all experiments and consider (i) different sampling rates $p$, varying from $0.64$ (corresponding to $\beta_s = 0.6$ for each subnet) to $1$ (full participation, $\beta_s = 1$), and (ii) different condition numbers $\kappa = 80$ (lower complexity task) and $800$ (higher complexity). We see that {\tt SD-GT} has a linear convergence to the globally optimal solution for each combination of $p$ and $\kappa$. This aligns with our result in Theorem~\ref{thm3} (note there is no stochasticity in the gradients here, i.e., $\sigma^2 = 0$). Also, we observe that SD-FedAvg converges to a non-optimal solution related to data heterogeneity, since we are using a constant step size \cite{nemirovski2009robust, li2019convergence}. 
% We observe from Figure \ref{fig4} that all experiments on SD-FedAvg converge to a similar radius from the optimal solution.
%When $\kappa$ is large, D2D communication improves the overall convergence speed, which gives our algorithm a faster linear convergence speed compared to SCAFFOLD. 
Additionally, SCAFFOLD, while converging linearly, does so at a slower rate than {\tt SD-GT} when $\kappa$ is large. Without D2D communications, the model uploaded from each device to the server in SCAFFOLD contains gradient information of only a single device, instead of from the whole subnet. Once again, this emphasizes the performance benefits of combining D2D communications with gradient tracking in our algorithm. Finally, while the impact of $p$ is not as obvious for $\kappa = 80$, under $\kappa = 800$, we can see that higher $p$ gives {\tt SD-GT} a considerable improvement in convergence speed.}

\begin{figure*}
\centerline{\includegraphics[width=0.95\textwidth]{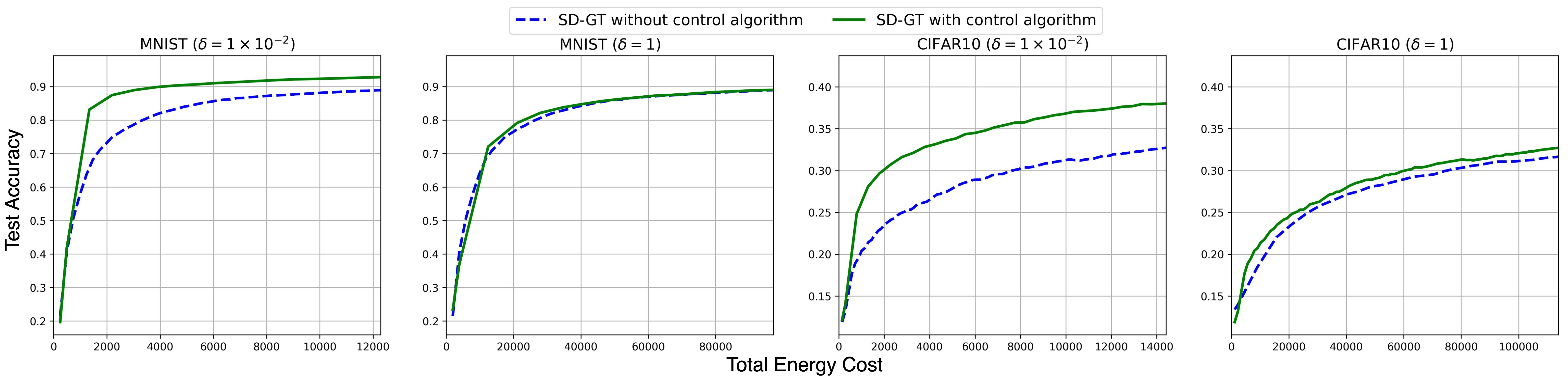}}
\caption{Experiments comparing {\tt SD-GT}'s performance-efficiency tradeoff with and without the proposed control algorithm. When $\delta$ is small (D2D communication is cheap), the co-optimization algorithm is able to adaptively select the sampling rate and the number of D2D communication rounds such that a higher quality trained model is obtained using the same amount of communication energy.\vspace{-0.15in}}
\label{fig:control}
\end{figure*}

\vspace{-0.1in}
\subsection{Adaptive Control Algorithm}
\label{sec:V-E}
Figure \ref{fig:control} evaluates the impact of {\tt SD-GT}'s learning-efficiency co-optimization procedure described in Sec.~\ref{sec:IV-E}. We conduct experiments on MNIST and CIFAR10, comparing two versions of {\tt SD-GT}. The first version is the {\tt SD-GT} without dynamic control, where a constant sampling rate $\frac{h_s}{m_s} = 40\%$ and a constant D2D communication update round $K = 3$ are used, decided before the training process begins. The second version is {\tt SD-GT} employing Algorithm \ref{alg:2} as the dynamic control mechanism. The initial sampling rate is set to $\frac{h_s}{m_s} = 20\%$ and the initial number of D2D communication rounds is set to $K_1 = 1$. We consider two values of energy ratios, $\delta = 0.01$ (D2D is cheap) and $\delta = 1$ (D2D and DS are the same). The DS communication costs are sampled randomly over the range $[1, 100]$ for each subnet, with D2D costs calculated according to $\delta$.
%These numbers will dynamically update throughout the training process.

\begin{figure}[t]
    \centering
    \includegraphics[width=0.48\textwidth]{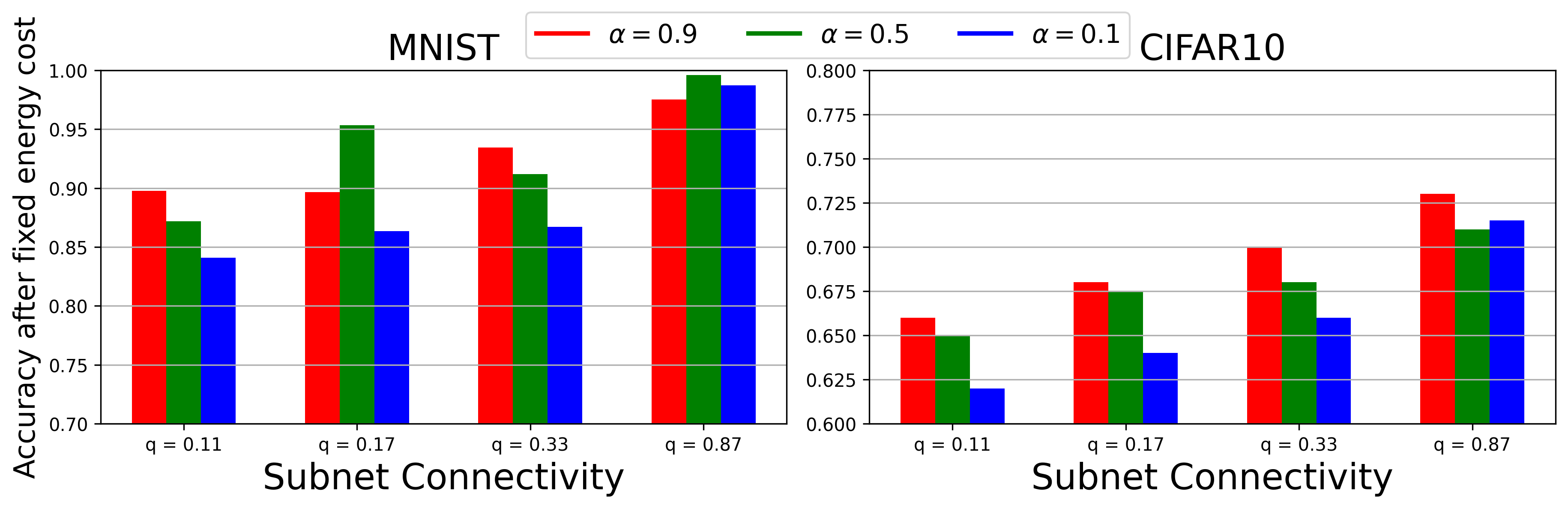}
    \caption{\textcolor{black}{Impact of subnet connectivity (mixing rate $q$) and inter-subnet data heterogeneity (Dirichlet concentration $\alpha$) on the performance of Algorithm~\ref{alg:2}, corroborating the convergence analysis in Appendix~\ref{appen:ctrl_convergence}.}}
    \label{fig:ctrl-results}
    \vspace{-0.15in}
\end{figure}
\begin{figure}[t]
    \centering
    \includegraphics[width=0.48\textwidth]{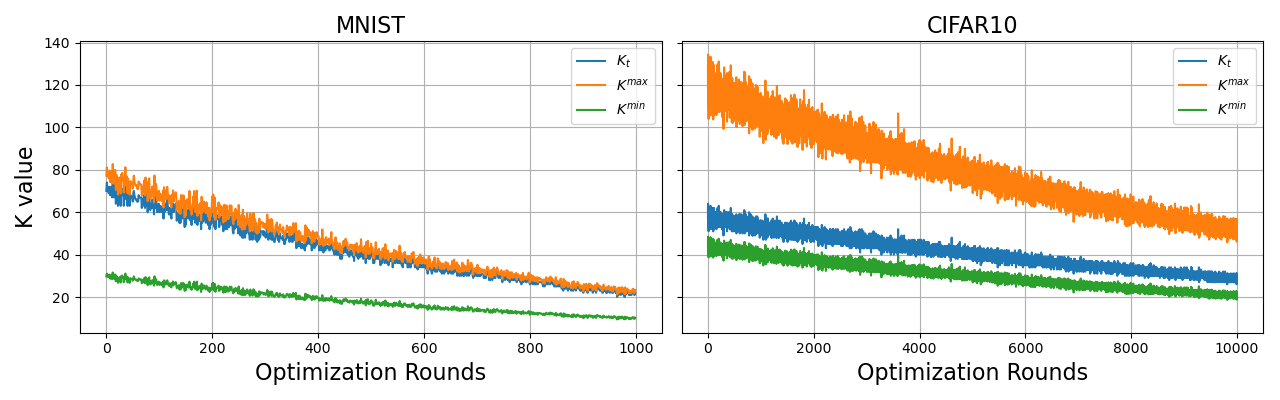}
    \caption{\textcolor{black}{Evolution of the aggregation interval $K_t$ obtained from solving Algorithm~\ref{alg:2} in comparison with the theoretically derived minimum $K^{min}$ and maximum $K^{max}$ provided in Appendix~\ref{appen:ctrl_convergence}.}}
    \label{fig:K_minmax}
    \vspace{-0.15in}
\end{figure}

Overall, we see that the adaptive control algorithm leads to improved {\tt SD-GT} performance in terms of the total energy incurred to reach a given testing accuracy. Our algorithm is able to dynamically choose the sampling rate and D2D update rounds based on the problem's training status and the given energy constraints. The largest gains are observed when $\delta$ is small, allowing the co-optimization algorithm to balance between convergence speed and communication efficiency, resulting in better convergence while incurring the same communication cost. By contrast, when $\delta$ is large, DS and D2D communications have similar cost, and thus there is less benefit to performing multiple D2D communication rounds.

{\color{black} In Figure~\ref{fig:ctrl-results}, we evaluate our adaptive control algorithm under varying degrees of inter-subnet data heterogeneity $\alpha$ and subnet connectivity $q$. The experiments follow our default setup, with cost ratio $\delta = 1 \times 10^{-2}$. Accuracy is measured after the total energy cost reaches $12 \times 10^3$ and $50 \times 10^3$ for MNIST and CIFAR-10, respectively. We see that increasing the value of $q$ tends to enhance performance across both datasets, a trend which aligns with the theoretical findings in Appendix~\ref{appen:ctrl_convergence}, where the convergence rate of Algorithm~\ref{alg:2} is increasing in $q$. On the other hand, while varying $\alpha$ has some impact on convergence, the resulting variation remains minor, particularly for well-connected subnets (large $q$). This can be attributed to (i) the gradient tracking mechanism in {\tt SD-GT} reducing model drift, and (ii) the adaptive control algorithm dynamically balancing energy efficiency with the overall algorithmic performance.

Finally, in Figure~\ref{fig:K_minmax}, we investigate the evolution of the length of aggregation interval $K_t$ over optimization iterations $t$. For both datasets, we see that $K_t$ lies between the minimum ($K^{min}$) and maximum ($K^{max}$) values obtained from analysis of the optimization~\eqref{control_alg_prob} that we provide in Appendix~\ref{appen:ctrl_convergence}.}

%, because D2D communication is not more communication efficient than global aggregation after every gradient computation. When $\delta$ is small, 
%We compare the improvement gained from our co-optimization algorithm under a larger delta ($\delta=1$) and a smaller delta ($\delta = 10^{-2}$). 

\vspace{-0.1in}
\section{Conclusion}
In this paper, we developed {\tt SD-GT}, the first gradient-tracking based semi-decentralized federated learning (SD-FL) methodology. {\tt SD-GT} incorporates dual gradient tracking terms to mitigate the subnet-drift challenge. We provided a convergence analysis of our algorithm under non-convex and strongly-convex settings, revealing conditions under which linear and sub-linear convergence rates are obtained. Based on our convergence results, we developed a low-complexity co-optimization algorithm that trades-off between learning quality and communication cost. Our experimental results demonstrated the improvements provided by {\tt SD-GT} over baselines in SD-FL and gradient tracking literature.

% \chris{Future work can explore ...}
%This work gives rise to several potential future directions. One is to explore the extension of {\tt SD-GT} to multi-layer network hierarchies, using multiple gradient tracking terms. Another potential direction is to explore the impact of adversarial attacks on gradient tracking-enhanced FL methods.

% \newpage
\vspace{-0.1in}
\bibliographystyle{IEEEtran}
\bibliography{egbib}

% \begin{IEEEbiography}[{\includegraphics[width=1in,height=1.25in,clip,keepaspectratio]{fig1.png}}]{Evan Chen}(Student Member, IEEE)
% \end{IEEEbiography}

% \begin{IEEEbiography}[{\includegraphics[width=1in,height=1.25in,clip,keepaspectratio]{fig2.png}}]{Shiqiang Wang}(Senior Member, IEEE)
% \end{IEEEbiography}

% \begin{IEEEbiography}[{\includegraphics[width=1in,height=1.25in,clip,keepaspectratio]{fig3.png}}]{Christopher G. Brinton}(Senior Member, IEEE)
% \end{IEEEbiography}

\newpage

\begin{IEEEbiographynophoto}{Evan Chen (Student Member, IEEE)}
received the B.S. degree in electronic engineering from National Chiao Tung University, Taiwan, in 2021. He is currently pursuing the Ph.D. degree in electrical and computer engineering with Purdue University, USA. 
His research interests lie in distributed and federated machine learning, large-scale optimization, and edge intelligence systems, with a particular focus on communication-efficient learning, adaptive optimization, and privacy-preserving techniques.
\end{IEEEbiographynophoto}

\begin{IEEEbiographynophoto}{Shiqiang Wang (Fellow, IEEE)}
is a Professor of Artificial Intelligence in the Department of Computer Science, University of Exeter, United Kingdom. He was a researcher at IBM T. J. Watson Research Center, NY, United States until Oct. 2025. He received his Ph.D. from Imperial College London, United Kingdom, in 2015. His research focuses on the intersection of artificial intelligence (AI), distributed computing, and optimization, with a broad range of applications including large language models (LLMs), agentic AI, efficient model training and inference, and AI in distributed systems. He has made foundational contributions to edge computing and federated learning that generated both academic and industrial impact. Dr. Wang served as an associate editor of the IEEE Transactions on Mobile Computing, IEEE Transactions on Parallel and Distributed Systems, and IEEE Transactions on Computational Social Systems. He also served as an area chair of major AI and machine learning conferences, including AAAI, ICLR, ICML, and NeurIPS. He received the IEEE Communications Society (ComSoc) Leonard G. Abraham Prize in 2021, IEEE ComSoc Best Young Professional Award in Industry in 2021, Best Paper Runner-Up of ACM MobiHoc 2025, IBM Outstanding Technical Achievement Awards (OTAA) in 2019, 2021, 2022, and 2023, multiple Invention Achievement Awards from IBM since 2016, and Best Student Paper Award of the Network and Information Sciences International Technology Alliance (NIS-ITA) in 2015. 
\end{IEEEbiographynophoto}
\begin{IEEEbiographynophoto}{Christopher G. Brinton (Senior Member, IEEE)} is the Elmore Associate Professor of Electrical and Computer Engineering (ECE) and Faculty Director of the HIVE Engineering Entrepreneurship Center at Purdue University. His research interest is at the intersection of machine learning, communications, and networking, specifically in fog/edge network intelligence, distributed machine learning, and AI/ML-inspired wireless network optimization. Dr. Brinton is a recipient of five of the US top early career awards, from the National Science Foundation (CAREER), Office of Naval Research (YIP), Defense Advanced Research Projects Agency (YFA and Director’s Fellowship), and Air Force Office of Scientific Research (YIP). He is also a recipient of the IEEE Communication Society William Bennett Prize Best Paper Award, the Intel Rising Star Faculty Award, the Qualcomm Faculty Award, and Purdue College of Engineering Faculty Excellence Awards in Early Career Research, Early Career Teaching, and Online Learning. Dr. Brinton currently serves as an Associate Editor for IEEE/ACM Transactions on Networking, and previously was an Associate Editor for IEEE Transactions on Wireless Communications. Prior to joining Purdue, Dr. Brinton was the Associate Director of the EDGE Lab and a Lecturer of Electrical Engineering at Princeton University. He also co-founded Zoomi Inc., a big data startup company that has provided learning optimization to more than one million users worldwide and holds US Patents in machine learning for education. His book The Power of Networks: 6 Principles That Connect our Lives and associated Massive Open Online Courses (MOOCs) have reached over 400,000 students. Dr. Brinton received the PhD (with honors) and MS Degrees from Princeton in 2016 and 2013, respectively, both in Electrical Engineering.

\end{IEEEbiographynophoto}
\newpage

\appendices
\onecolumn
% \appendix
% \section*{Appendices}
% \addcontentsline{toc}{section}{Appendices}
\renewcommand{\thesection}{\Alph{section}}

\section{Proof of general lemmas}
\label{appendixA}
\noindent In this section, we will discuss all the required results to derive the three convergence theorems. In part 1, we show that the expected model aggregated at the server will only collect the sum of all local computed gradients, i.e., all gradient tracking terms that are passed towards it is not accumulated throughout the iteration. Then, in part 2, by iterating the server's model through global iteration $t$, we control the following local update terms with appropriate step sizes: \\[-0.1in]
\begin{enumerate}
    \item deviation term caused by the server sampling clients,
    \item error term from the local gradient tracking term,
    \item error term from the global gradient tracking term. \\[-0.1in]
\end{enumerate}
Finally, in part 3, we derive the descent lemmas based on the non-convex/convex/strongly-convex assumptions that will decide the convergence speed of the algorithm.

% \subsection*{Part 1: Technical tools}
% The following are some tools that will be commonly used in the later proofs:
% \begin{lemma}
    
% \end{lemma}
\subsection*{\textbf{Part 1: Iteration of the Server's Model}}
\noindent We denote \textbf{$x_g^t$ to be the global model aggregated at the central server at iteration $t$}. We then define the \textbf{client sampling random variable $\mathcal{S}$} and \textbf{the expected global model w.r.t $\mathcal{S}$} to be $\overline{x}_g^t =\mathbb{E}_{\mathcal{C}'_s}[x_g^t]$. After taking expectation on the random variable $\mathcal{S}$, we can see that:
\begin{equation}
\begin{aligned}
        \overline{x}_g^t &= \mathbb{E}_{\mathcal{C}'_s}[x_g^{t-1} + \frac{1}{S\cdot h_{s}}\sum_{s = 1}^{S}\sum_{j=1}^{h_{s}}  \Tilde{x}_{{s,j}}^{t+1}]\\
        &= \overline{x}_g^{t-1} + \mathbb{E}_{\mathcal{C}'_s}\bigg[\frac{1}{S\cdot h_{s}}\sum_{s = 1}^{S}\sum_{j=1}^{h_{s}}  (x_{{s,j}}^{t, K+1} - x_{{s,j}}^{t, 1}+ K\gamma y_{{s,j}}^t)\bigg]\\
        &= \overline{x}_g^{t-1} + \frac{1}{n}\sum_{i = 1}^n \left(x_{{i}}^{t, K+1} - x_{{i}}^{t, 1} + K\gamma \mathbb{E}_{\mathcal{C}'_s}[y_{{i}}^t]\right)
        \label{expectation_of_xg}
\end{aligned}    
\end{equation}
Here we introduce some commonly used terminology for later proofs. We first define $W$ to be a block-wise diagonal matrix that corresponds to the D2D communication performed throughout the whole network, as in~\eqref{eq:W}:
\begin{equation}
    W = \mathrm{diag}(W_{1}, \ldots, W_{S}) \in R^{n\times n}.
\end{equation}
Here $W_s$ corresponds to the D2D communication matrix used for communication within the $s$ subnet. Since all $W_s$ are doubly stochastic, $W$ is also doubly stochastic. We then introduce two other matrices, one that corresponds to the whole network communicating in a fully connected graph:
\begin{equation}
    J = \frac{\mathbf{1}_n \mathbf{1}_n^\top}{n},
\end{equation}
and another that corresponds to each individual subnet having a fully connected graph:
\begin{equation}
    J_c = \mathrm{diag}\left(\frac{\mathbf{1}_{m_{1}} \mathbf{1}_{m_{1}}^\top}{m_{1}},\ldots,\frac{\mathbf{1}_{m_{S}} \mathbf{1}_{m_{S}}^\top}{m_{s}}\right).
\end{equation}
We also define the following gradient matrices that concatenate vectors of gradients:
\begin{equation}
    \begin{aligned}
        \nabla F(x_g^t) &= [\nabla f_1(x_g^t), \ldots, \nabla f_n(x_g^t)],\\
        \nabla F(x^{t,k}) &= [\nabla f_1(x_1^{t,k}), \ldots, \nabla f_n(x_n^{t,k})],\\
        \nabla F(x^{t,k}, \xi^{t,k}) &= [\nabla f_1(x_1^{t,k},\xi_1^{t,k}), \ldots, \nabla f_n(x_n^{t,k},\xi_n^{t,k})].
    \end{aligned}
\end{equation}
Based on the definition of $y_i^t$ and $z_i^t$, we can see that, if we initialize with $\sum_{i = 1}^n y_i^0 = 0$ and $\sum_{j = 1}^{m_s} z_{s,j}^0 = 0$, then:
\begin{equation}
    \begin{aligned}
        \sum_{i = 1}^n \mathbb{E}_{\mathcal{C}'_s}[y_i^t] &= \frac{\sum_{s = 1}^S h_s}{n}\sum_{i=1}^n \mathbb{E}_{\mathcal{C}'_s}[y_i^{t-1}] + \left(1 - (\frac{\sum_{s = 1}^S h_s}{n})\right)\sum_{s=1}^S \mathbb{E}_{\mathcal{C}'_s}[\Psi_{s}] = 0,\\
    \end{aligned}
    \label{sum_of_y}
\end{equation}
\begin{equation}
    \sum_{j=1}^{m_s}z_{s,j}^t = \sum_{j=1}^{m_s}z_{s,j}^{t-1} = 0.
    \label{sum_of_z}
\end{equation}
Plugging \eqref{sum_of_y} and \eqref{sum_of_z} into \eqref{expectation_of_xg} and using the fact that $WJ_c = WJ = 0$, we can get:
\begin{equation}
\begin{aligned}
        \overline{x}_g^t &= \overline{x}_g^{t-1} -  \gamma\frac{1}{n} \sum_{k=1}^K \sum_{i=1}^n\nabla f_i(x_i^{t,k}, \xi_i^{t,k}) - \gamma \sum_{s=1}^S \sum_{j=1}^{m_s}K \mathbb{E}_Q[z_{s,j}^t],\\
        &= \overline{x}_g^{t-1} -  \gamma\frac{1}{n} \sum_{k=1}^K \sum_{i=1}^n\nabla f_i(x_i^{t,k}, \xi_i^{t,k}) .
        \label{expectation_of_xg_2}
\end{aligned}    
\end{equation}

\subsection*{\textbf{Part 2: The Common Lemmas}}
\noindent First, we want to control the deviation of local models when performing D2D updates. The intuition is that if we choose our step size with respect to the inverse of the number of local updates $\frac{1}{K}$, we can control the amount of deviation accumulated over $K$ rounds of D2D communication. Then, after bounding the deviation term $\Delta_t$, the iterative relationship of $\mathcal{Y}_t$, $\mathcal{Z}_t$, and $\Gamma_t$ can all be bounded. \\[-0.1in]

\noindent We define the terms that we want to iteratively control with an appropriate choice of step size: we define $\Gamma_t = \frac{1}{n}\sum_{i=1}^n\mathbb{E}\|x_i^{t-1,K+1} - \overline{x}_\mathrm{g} ^t\|^2$ to be the \textbf{sampling error }term, and $\mathcal{Z}_t = \frac{1}{n}\mathbb{E}\|Z^t + \nabla F(\overline{x}_\mathrm{g} ^t)(I - J_c)\|^2_F$, $\mathcal{Y}_t = \frac{1}{n}\mathbb{E}\|{Y^t} + \nabla F(\overline{x}_\mathrm{g} ^t)(J_c - J)\|_F^2$ to be the \textbf{within-subnet} and \textbf{between-subnet correction} terms, respectively.
\begin{lemma}(Unroll recursion lemma)
    For any parameters $r_0 \geq 0, b \geq 0, e\geq 0, u\geq 0$, there exists a constant step size $\gamma < \frac{1}{u}$ s.t. 
\begin{equation}
    \Psi_T := \frac{r_0}{T}\frac{1}{\gamma} + b\gamma + e\gamma^2 \leq 2\sqrt{\frac{br_0}{T}} + 2e^{\frac{1}{3}}\left( \frac{r_0}{T} \right)^{\frac{2}{3}} + \frac{ur_0}{T}.
\end{equation}
    \label{unroll_lem}
    \vspace{-0.10in}
\end{lemma}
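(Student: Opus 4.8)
The plan is to recognize this as a standard \emph{step-size tuning} lemma: the stated bound is obtained by choosing $\gamma$ to simultaneously balance the three summands $\frac{r_0}{T\gamma}$, $b\gamma$, and $e\gamma^2$, subject to the cap $\gamma < \frac{1}{u}$. Concretely, I would set
\[
\gamma = \min\left\{ \left(\tfrac{r_0}{bT}\right)^{1/2},\ \left(\tfrac{r_0}{eT}\right)^{1/3},\ \tfrac{1}{u}\right\},
\]
where the first candidate equates $\frac{r_0}{T\gamma}$ with $b\gamma$ and the second equates $\frac{r_0}{T\gamma}$ with $e\gamma^2$, and a candidate is read as $+\infty$ whenever its denominator vanishes. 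The key structural observation, which replaces any explicit case analysis, is that this choice makes $\gamma$ no larger than each of the three candidates while making $\frac{1}{\gamma}$ exactly equal to the maximum of their reciprocals.

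With this choice fixed, I would first bound the two ``penalty'' terms using only the upper bounds $\gamma \le (r_0/(bT))^{1/2}$ and $\gamma \le (r_0/(eT))^{1/3}$, which give $b\gamma \le \sqrt{br_0/T}$ and $e\gamma^2 \le e^{1/3}(r_0/T)^{2/3}$ after simplification. For the remaining term I would use $\frac{1}{\gamma} = \max\{(bT/r_0)^{1/2},(eT/r_0)^{1/3},u\} \le (bT/r_0)^{1/2}+(eT/r_0)^{1/3}+u$, so that multiplying through by $\frac{r_0}{T}$ yields $\frac{r_0}{T\gamma} \le \sqrt{br_0/T} + e^{1/3}(r_0/T)^{2/3} + \frac{ur_0}{T}$. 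Adding the three bounds collapses to exactly $2\sqrt{br_0/T} + 2e^{1/3}(r_0/T)^{2/3} + \frac{ur_0}{T}$, which is the claim. This is precisely the step-size form invoked later in Corollaries~\ref{cor1} and~\ref{cor2}, where the three candidates specialize to the $T^{-1/2}$, $T^{-1/3}$, and constant expressions displayed there.

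Since the arithmetic is routine once $\gamma$ is selected, the proof presents no substantial obstacle; the only care needed is with degenerate inputs and the boundary. When $b=0$ or $e=0$ the corresponding candidate is $+\infty$ and the matching penalty term vanishes identically, so the bound still holds, and when $r_0=0$ the statement is trivial. The one genuinely boundary-sensitive point is the strict inequality $\gamma < \frac{1}{u}$: the balancing argument wants $\gamma$ to \emph{equal} the minimum above, so if the cap $\frac{1}{u}$ is the active constraint one instead takes $\gamma$ arbitrarily close to but below $\frac{1}{u}$, perturbing only the $\frac{ur_0}{T}$ term negligibly (equivalently, one adopts the customary non-strict convention $\gamma \le \frac{1}{u}$). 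In all the intended applications the cap is a fixed constant, independent of $T$, while the other two candidates decay in $T$, so for the relevant large-$T$ regime the cap is inactive and the strictness issue does not arise.
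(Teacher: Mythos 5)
Your proof is correct, but it is genuinely more than what the paper does: the paper does not prove Lemma~\ref{unroll_lem} at all, it simply defers to Lemma~C.5 of \cite{liu2023decentralized} and Lemma~15 of \cite{koloskova2020unified}. Your choice of step size, $\gamma = \min\{(r_0/(bT))^{1/2}, (r_0/(eT))^{1/3}, 1/u\}$, is the same one used in those references, but your argument is cleaner than theirs: the cited proofs proceed by a three-way case analysis on which candidate attains the minimum, whereas your observation that $\tfrac{1}{\gamma} = \max\bigl\{(bT/r_0)^{1/2}, (eT/r_0)^{1/3}, u\bigr\} \le (bT/r_0)^{1/2} + (eT/r_0)^{1/3} + u$ disposes of all cases at once, and the two penalty terms are each bounded directly by the corresponding candidate cap. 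Your handling of the boundary is also the right call, with one sharpening worth recording: when $b = e = 0$ but $r_0, u > 0$, the bound reduces to $\tfrac{r_0}{T\gamma} \le \tfrac{ur_0}{T}$, which holds \emph{only} at $\gamma = 1/u$ and fails for every $\gamma$ strictly below it, so the strict inequality $\gamma < \tfrac{1}{u}$ in the paper's statement is not merely inconvenient for the balancing argument but actually makes the lemma false in that degenerate case; the non-strict convention $\gamma \le \tfrac{1}{u}$ (which is how the cited sources state it, and which suffices for Corollaries~\ref{cor1} and~\ref{cor2}, where the cap is plainly intended to be attainable) is the correct reading, exactly as you suggest.
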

\begin{proof}
    See Lemma C.5 in \cite{liu2023decentralized} or Lemma 15 in \cite{koloskova2020unified}.
\end{proof}
\subsection{\textbf{Proof for Lemma \ref{lem1}}}
\begin{proof}
Before considering $\Delta_t$, we first observe the term $\frac{1}{n}\sum_{i=1}^n\mathbb{E}\|x_i^{t, k} - x_\mathrm{g} ^{t}\|$ for $1 \leq k \leq K$:
    \begin{equation}
        \begin{aligned}
            \frac{1}{n}\sum_{i=1}^n\mathbb{E}\|x_i^{t, k} - \overline{x}_g^{t}\|^2=& \frac{1}{n}\sum_{i=1}^n\mathbb{E}\left\|\sum_{j \in N^{in}_i}w_{ij}\bigg(x_j^{t, k-1} - \gamma (\nabla f_j(x^{t,k-1}, \xi_j^{t, k-1}) + y_j^t + z_j^t)]\bigg) - \overline{x}_g^{t}\right\|^2\\
            =& \frac{1}{n}\sum_{i=1}^n\mathbb{E}\left\|\sum_{j \in N^{in}_i}w_{ij}\bigg(x_j^{t, k-1} - \gamma (\nabla f_j(x^{t,k-1}, \xi_j^{t, k-1}) + y_j^t + z_j^t) - \overline{x}_g^{t}\bigg)\right\|^2\\
            \leq&\frac{1}{n}\sum_{i=1}^n\mathbb{E}\left\|x_i^{t, k-1} - \gamma (\nabla f_i(x^{t,k-1}, \xi_i^{t, k-1}) + y_i^t + z_i^t) - \overline{x}_g^{t}\right\|^2\\
            \leq& (1 + \frac{1}{K-1})\frac{1}{n}\sum_{i=1}^n\mathbb{E}\|x_i^{t,k-1} - \overline{x}_g^t\|^2\\
            &+K\gamma^2\frac{1}{n}\sum_{i=1}^n \mathbb{E}\bigg\Vert\nabla f_i(x^{t, k-1}) - \nabla f_i(\overline{x}_g^t) + \nabla f_i(\overline{x}_g^t) - \frac{1}{m_s}\sum_{j \in \mathcal{C}_s}\nabla f_{j}(\overline{x}_g^t) \\
            &+ \frac{1}{m_s}\sum_{j \in \mathcal{C}_s}\nabla f_{j}(\overline{x}_g^t)- \frac{1}{n}\sum_{j=1}^n \nabla f_j(\overline{x}_g^t) + \frac{1}{n}\sum_{j=1}^n \nabla f_j(\overline{x}_g^t) + y_i^t + z_i^t  \bigg\Vert^2+ \gamma^2 \sigma^2\\
            \leq& (1 + \frac{1}{K-1} + 4K\gamma^2L^2)\frac{1}{n}\sum_{i=1}^n\mathbb{E}\|x_i^{t,k-1} - \overline{x}_g^t\|^2+ 8K\gamma^2\mathcal{Y}_t+ 8K\gamma^2 \mathcal{Z}_t + 2K\gamma^2\mathbb{E}\|\nabla f(\overline{x}_g^t)\|^2 + \gamma^2 \sigma^2.
        \end{aligned}
    \end{equation}
    If we let $(1 + \frac{1}{K-1} + 4K\gamma^2L^2) = \alpha$ we can get:
    \begin{equation}
    \begin{aligned}
            \frac{1}{n}\sum_{i=1}^n\mathbb{E}\|x_i^{t, k} - \overline{x}_g^{t}\|^2
            \leq& 
        \alpha^{k-1}\frac{1}{n}\sum_{i=1}^n\mathbb{E}\|x_i^{t,1} - \overline{x}_g^t\|^2
                + \sum_{k'=0}^{k-2}\alpha^{k'}\bigg(8K\gamma^2\mathcal{Y}_t+ 8K\gamma^2\mathcal{Z}_t + 2K\gamma^2\|\nabla \mathbb{E}f(\overline{x}_g^t)\|^2+ \gamma^2 \sigma^2\bigg).
    \end{aligned}
    \label{lem1_unbounded}
    \end{equation}
    Now, we want to connect the term $\mathbb{E}\|x_i^{t, 1} - \overline{x}_g^{t}\|^2$ with $\Gamma_t$. Since for each iteration, each client $x_i$ has a $\frac{h_s}{m_s}$ probability of being sampled and updated by the server, the initial model at iteration $t$ has probability $\frac{m_s - h_s}{m_s}$ to be the last model updated at iteration $t-1$:
    \begin{equation}
        \mathbb{E}_{\mathcal{C}'_s}[x_i^{t, k}]\leq (\frac{m_s - h_s}{m_s})x_i^{t-1, K+1} + \frac{h_s}{m_s}\overline{x}_g^t.
    \end{equation}
    Thus, if we define $p  = \min(1 - \beta_1^2, \cdots, 1 - \beta_S^2)$, we can see that:
    \begin{equation}
        \frac{1}{n}\sum_{i=1}^n\mathbb{E}\|x_i^{t,1} - \overline{x}_g^t\|^2 \leq (1-p)\Gamma_t.
        \label{lem1_bound2}
    \end{equation}
    If we choose $\gamma < \frac{1}{8KL}$, then we can bound the following terms:
    \begin{equation}
        \begin{aligned}
            \alpha^{k-1} \leq \alpha^K \leq 3,\\
            \sum_{k'=0}^{k-2}\alpha^{k'} \leq 3K.
        \end{aligned}
    \end{equation}
    Combining \eqref{lem1_unbounded} and \eqref{lem1_bound2} together, we can get the following result:
    \begin{equation}
    \begin{aligned}
            \frac{1}{n}\sum_{i=1}^n\mathbb{E}\|x_i^{t, k} - \overline{x}_g^{t}\|^2 
            \leq 3(1-p)\Gamma_t
            + 24K^2\gamma^2\mathcal{Y}_t+ 24K^2\gamma^2\mathcal{Z}_t + 6K^2\gamma^2\mathbb{E}\|\nabla f(\overline{x}_g^t)\|^2 + 3K\gamma^2 \sigma^2.
            \label{lem1_singlek}
    \end{aligned}
    \end{equation}
    Since this bound is the same for any $\frac{1}{n}\sum_{i=1}^n\mathbb{E}\|x_i^{t, k} - \overline{x}_g^{t}\|^2$, $1\leq k \leq K$, multiplying the RHS of \eqref{lem1_singlek} by $K$ yields the upper bound on $\Delta_t$ in Lemma~\ref{lem1}.
\end{proof}

\subsection{\textbf{Proof for Lemma \ref{lem2}}}
\begin{proof}
    For subnet $\mathcal{C}_s$, we define $\beta_s = \frac{m_s - h_s}{m_s}$ to be the ratio of unsampled clients. Then, for each iteration, client $i \in \mathcal{C}_s$ has a probability of $\beta_s$ to stay the same as the previous iteration:
    \begin{equation}
        \mathbb{E}_{\mathcal{C}'_s} y_i^t = \beta_s \mathbb{E}_{\mathcal{C}'_s} y_i^{t-1} + (1 - \beta_s) \mathbb{E}_{\mathcal{C}'_s}\Psi_s^t.
    \end{equation}
    The expected updated information $\mathbb{E}_{\mathcal{C}'_s} \Psi_s$ can be written as:
    \begin{equation}
        \begin{aligned}
            \mathbb{E}_{\mathcal{C}'_s}\Psi_s^t &=  \frac{1}{K}\sum_{k = 1}^K \bigg( \frac{1}{n}\sum_{i=1}^n \nabla f_i(x_i^{t-1,k}, \xi_i^{t-1,k}) - \frac{1}{m_s}\sum_{i\in C_s} \nabla f_i(x_i^{t-1,k}, \xi_i^{t-1,k})\bigg).
    \end{aligned}
    \end{equation}
    For the next part of this proof, it is easier to discuss the properties in matrix form. We define the following matrices:
    \begin{equation}
\begin{aligned}
            \nabla F(x^{t-1,k}, \xi^{t-1,k}) &= \big[\nabla f_1(x_1^{t-1,k}, \xi_1^{t-1,k}), \ldots, \nabla f_n(x_n^{t-1,k}, \xi_n^{t-1,k})\big] \in \mathbb{R}^{d \times n},\\
            \nabla F(x^{t-1,k}) &= \big[\nabla f_1(x_1^{t-1,k}), \ldots, \nabla f_n(x_n^{t-1,k})\big] \in \mathbb{R}^{d \times n}.\\
\end{aligned}
\end{equation}
Under this definition, if we collect all $y_i^t$ into a matrix $Y^t = [y_1^t, \ldots, y_n^t]$, we have:
\begin{equation}
    \mathbb{E}_{\mathcal{C}'_s} Y^t =  \mathbb{E}_{\mathcal{C}'_s} Y^{t-1}\begin{bmatrix}
    \beta_{1}\cdot I_{m_1} & \cdots & 0\\
    \vdots, & \ddots & \vdots\\
    0 & \cdots & \beta_{S}\cdot I_{m_S}
    \end{bmatrix} + \frac{1}{K}\sum_{k=1}^K\nabla F(x^{t-1,k}, \xi^{t-1,k})(J - J_c)\begin{bmatrix}
    (1-\beta_{1})\cdot I_{m_1} & \cdots & 0\\
    \vdots & \ddots & \vdots\\
    0 & \cdots & (1-\beta_{S})\cdot I_{m_S}
    \end{bmatrix}.
\end{equation}
We define two matrices that control the sampling probability as:
\begin{equation}
\begin{aligned}
        {B} &= \begin{bmatrix}
    \beta_{1}\cdot I_{m_1} & \cdots & 0\\
    \vdots & \ddots & \vdots\\
    0 & \cdots & \beta_{S}\cdot I_{m_S}
\end{bmatrix},\\
{B}' &= \begin{bmatrix}
    (1-\beta_{1})\cdot I_{m_1} & \cdots & 0\\
    \vdots & \ddots & \vdots\\
    0 & \cdots & (1-\beta_{S})\cdot I_{m_S}
    \end{bmatrix}.
\end{aligned}
\end{equation}
Then, if we define $p  = \min(1 - \beta_1^2, \cdots, 1 - \beta_S^2)$, we can bound the iteration of $\mathcal{Y}_t$ as follows:
    \begin{equation}
        \begin{aligned}
            n\mathcal{Y}_t =& \mathbb{E}\|Y^t + \nabla f(\overline{x}_g^t)(J_c - J)\|_F^2\\
            =&\mathbb{E}\bigg\| \bigg(Y^{t-1} + \nabla f(\overline{x}_g^{t-1})(J_c - J)\bigg)B+ \bigg(\frac{1}{K}\sum_{k=1}^K\nabla F(x^{t-1,k}, \xi^{t-1,k}) - \nabla F(\overline{x}_g^{t-1})\bigg)(J - J_c)B'\\
            &+ \bigg(\nabla F(\overline{x}_g^t) - \nabla F(\overline{x}_g^{t-1})\bigg)(J_c - J)\bigg\|_F^2\\
            \leq&\mathbb{E}\bigg\| \bigg(Y^{t-1} + \nabla f(\overline{x}_g^{t-1})(J_c - J)\bigg)B+ \bigg(\frac{1}{K}\sum_{k=1}^K\nabla F(x^{t-1,k}) - \nabla F(\overline{x}_g^{t-1})\bigg)(J - J_c)B'\\
            &+ \bigg(\nabla F(\overline{x}_g^t) - \nabla F(\overline{x}_g^{t-1})\bigg)(J_c - J)\bigg\|_F^2 + n\frac{\sigma^2}{K}\\
            \leq& (1 - \frac{p}{2})n\mathcal{Y}_{t-1} + \frac{6}{p}\bigg(\frac{nL^2}{K}\Delta_{t-1} + 2\gamma^2nL^4K\Delta_{t-1} + 2\gamma^2L^2K^2n\mathbb{E}\|\nabla f(\overline{x}_g^{t-1})\|^2 \bigg)+\frac{6\gamma^2L^2Kn}{p}\sigma^2 +  n\frac{\sigma^2}{K}.
            \label{lem2_unbd}
        \end{aligned}
    \end{equation}
    Finally, by choosing $\gamma < \frac{1}{\sqrt{6}KL}$, combining the two $\Delta_{t-1}$ terms in the RHS of \eqref{lem2_unbd} yields the result of Lemma 4.
\end{proof}
\subsection{\textbf{Proof for Lemma \ref{lem3}}}
\begin{proof}
     For each client $i$, the iteration of the tracking term $z_i^t$ can be expressed as:
     \begin{equation}
\begin{aligned}
      z_i^{t+1} &= z_i^t + \frac{1}{K\gamma}\sum_{k=1}^K\left(\Tilde{z}_i^{t,k} - \sum_{j \in \mathcal{N}_i \cup \{i\}} w_{ij}\Tilde{z}_j^{t,k}\right)\\
      &= z_i^t + \frac{1}{K}\sum_{k=1}^K \left(\sum_{j \in \mathcal{N}_i \cup \{i\}} \nabla f_j(x_j^{t,k}, \xi_j^{t,k}) - \nabla f_i(x_i^{t,k}, \xi_i^{t,k}) + \sum_{j \in \mathcal{N}_i \cup \{i\}} z_j^t - z_i^t\right)\\
      &= \sum_{j \in \mathcal{N}_i \cup \{i\}} z_j^t + \frac{1}{K}\sum_{k=1}^K \left(\sum_{j \in \mathcal{N}_i \cup \{i\}} \nabla f_j(x_j^{t,k}, \xi_j^{t,k}) - \nabla f_i(x_i^{t,k}, \xi_i^{t,k})\right).
\end{aligned}
 \end{equation}
 Thus, if we define the matrix $Z^t = [z_1^t, \ldots, z_n^t] \in \mathbb{R}^{d\times n}$, we have:
 \begin{equation}
\begin{aligned}
Z^t = Z^{t-1}W +\frac{1}{K}\sum_{k=1}^K\nabla F(x^{t-1,k}, \xi^{t-1,k})(W - I).
\end{aligned}
 \label{lem3_z_expand}
 \end{equation}
Defining $q = \min(\rho_{1}, \ldots, \rho_{S})\in (0,1]$, we can apply \eqref{lem3_z_expand} to get:
    \begin{equation}
        \begin{aligned}
            n\mathcal{Z}_t =& \mathbb{E}\|Z^t + \nabla F(\overline{x}_g^t)(I - J_c)\|^2_F\\
            =& \mathbb{E}\left\|Z^{t-1}W + \frac{1}{K}\sum_{k=1}^K\nabla F(x^{t-1,k}, \xi^{t-1,k})(W - I) + \nabla F(\overline{x}_g^t)(I - J_c)\right\|_F^2\\
            =& \mathbb{E}\bigg\|Z^{t-1}W + \nabla F(\overline{x}_g^{t-1})(W - J_c) + \frac{1}{K}\sum_{k=1}^K(\nabla F(x^{t-1,k}) - \nabla F(\overline{x}_g^{t-1}))(W - I) \\
            &+ (\nabla F(\overline{x}_g^t) - \nabla F(\overline{x}_g^{t-1}))(I - J_c)\bigg\|_F^2+ \frac{\sigma^2}{K}\\
            \leq&(1-\frac{q}{2})n\mathcal{Z}_{t-1} + \frac{6n}{q}\bigg(\frac{4L^2}{K}\Delta_{t-1} + \|\nabla F(\overline{x}_g^t) - \nabla F(\overline{x}_g^{t-1})\|_F^2\bigg) + \frac{n\sigma^2}{K}\\
            \leq&(1-\frac{q}{2})n\mathcal{Z}_{t-1} + \frac{6n}{q}\bigg(\frac{4L^2}{K}\Delta_{t-1} + 2K\gamma^2L^4\Delta_{t-1} + 2K^2L^2\gamma^2\mathbb{E}\|\nabla f(\overline{x}_g^t)\|^2 + L^2K\gamma^2\sigma^2\bigg)+ \frac{n\sigma^2}{K}.\\
        \end{aligned}
    \end{equation}
With $ \gamma < \frac{1}{\sqrt{6}KL}$, we arrive at the result in Lemma 5:
    \begin{equation}
        \begin{aligned}
            \mathcal{Z}_t \leq &(1-\frac{q}{2})\mathcal{Z}_{t-1} + \frac{26L^2}{q}\Delta_{t-1} + \frac{12}{q}K^2L^2\gamma^2\mathbb{E}\|\nabla f(\overline{x}_g^t)\|^2 + \frac{2\sigma^2}{qK}.\\
        \end{aligned}
    \end{equation}
\end{proof}
\subsection{\textbf{Proof for Lemma \ref{lem4}}}
\begin{proof}
We have:
\begin{equation}
    \begin{aligned}
        \Gamma_t &= \frac{1}{n}\sum_{i=1}^n E\|x_i^{t-1,K+1} - \overline{x}_g^t\|^2\\
        &\leq \frac{1}{n}\sum_{i=1}^nE\left\|x_i^{t-1,K+1} - \overline{x}_g^{t-1} + \gamma\sum_{k=1}^K\frac{1}{n}\sum_{i=1}^n \nabla f_i(x_i^{t-1,k}, \xi_i^{t-1,k})\right\|^2\\
        &= \frac{1}{n}\sum_{i=1}^nE\left\|x_i^{t-1,K+1} - \overline{x}_g^{t-1} + \gamma\sum_{k=1}^K\frac{1}{n}\sum_{i=1}^n \nabla f_i(x_i^{t-1,k}, \xi_i^{t-1,k}) - \gamma \sum_{k=1}^K\frac{1}{n}\sum_{i=1}^n \nabla f_i(\overline{x}_g^{t-1}) + \gamma \sum_{k=1}^K\frac{1}{n}\sum_{i=1}^n \nabla f_i(\overline{x}_g^{t-1})\right\|^2\\
        &\leq (1 - \frac{p}{2})\Gamma_{t-1} + K\gamma^2\sigma^2\\
        &+ \frac{3\gamma^2}{np}K\sum_{k=1}^K\left\|(\nabla F(x^{t-1,k}) - \nabla F(\overline{x}_g^{t-1}))(I-J) + Z^{t-1} + \nabla F (\overline{x}_g^{t-1})(I - J_c) + Y^{t-1} + \nabla F (\overline{x}_g^{t-1})(J_c - J) + \nabla F (\overline{x}_g^{t-1})J\right\|_F^2.\\
    \end{aligned}
\end{equation}
Now, injecting $L$-smoothness and using the fact that $\|I - J\| \leq 1$, we arrive at Lemma 6:
\begin{equation}
    \Gamma_t\leq (1 - \frac{p}{2})\Gamma_{t-1} + \frac{12}{p}\gamma^2KL^2\Delta_{t-1} + \frac{12}{p}\gamma^2K^2\mathcal{Y}_{t-1} + \frac{12}{p}\gamma^2K^2\mathcal{Z}_{t-1} + \frac{12}{p}\gamma^2K^2\|\nabla f(\overline{x}_g^{t-1})\|^2 + K\gamma^2\sigma^2.
\end{equation}
\end{proof}

\subsection*{\textbf{Part 3: Convergence Lemmas for Different Assumptions}}
\noindent Lastly, we provide proofs for Lemma \ref{lem5}, which uses assumptions on $L$-smooth, and Lemma~\ref{lem6}, which uses assumptions on both convexity and $L$-smooth.

\subsection{\textbf{Proof for Lemma \ref{lem5}}}
\begin{proof}
    Using $\eqref{expectation_of_xg_2}$ and injecting $L$-smoothness, we can see that:
\begin{equation}
    \begin{aligned}
        \mathbb{E}f(\overline{x}_g^{t+1}) &\leq \mathbb{E}f(\overline{x}_g^{t}) + \langle \nabla \mathbb{E}f(\overline{x}_g^t), \overline{x}_g^{t+1} - \overline{x}_g^t\rangle + \frac{L}{2}\mathbb{E}\|\overline{x}_g^{t+1} -\overline{x}_g^{t}\|^2\\
        & \leq \mathbb{E}f(\overline{x}_g^{t}) + \underbrace{\mathbb{E}\bigg\langle \nabla f(\overline{x}_g^t), - \frac{1}{n}\gamma \sum_{i,k}\nabla f_i(x_i^{t,k}, \xi_i^{t,k})\bigg\rangle}_\text{term 1} 
         + \underbrace{\frac{L}{2}\mathbb{E}\left\|\frac{1}{n}\gamma \sum_{i,k}\nabla f_i(x_i^{t,k}, \xi_i^{t,k})\right\|^2}_\text{term 2}.
    \end{aligned}
\end{equation}
For term 1:
\begin{equation}
    \begin{aligned}
        \mathbb{E}\bigg\langle \nabla f(\overline{x}_g^t), - \frac{1}{n}\gamma \sum_{i,k}\nabla f_i(x_i^{t,k}, \xi_i^{t,k})\bigg\rangle
        =& \gamma\bigg\langle \nabla f(\overline{x}_g^t), \frac{1}{n} \sum_{i,k}-\nabla f_i(E[x_i^{t,k}]) - K\nabla f(\overline{x}_g^t) + K\nabla f(\overline{x}_g^t)\bigg\rangle\\
        =& -\gamma K\mathbb{E}\|\nabla f(\overline{x}_g^t)\|^2-\gamma\mathbb{E}\bigg\langle \nabla f(\overline{x}_g^t), \frac{1}{n} \sum_{i,k}\nabla f_i(x_i^{t,k}, \xi_i^{t,k}) - K\nabla f(\overline{x}_g^t) \bigg\rangle.
    \end{aligned}
\end{equation}
Since $\mathbb{E}_{\xi_i} \nabla f_i(x_i, \xi_i) = \nabla f_i(x_i)$, by $L$-smoothness, we can show that:
\begin{equation}
    \mathbb{E}\bigg\langle \nabla f(\overline{x}_g^t), - \frac{1}{n}\gamma \sum_{i,k}\nabla f_i(x_i^{t,k}, \xi_i^{t,k})\bigg\rangle \leq -\frac{\gamma K}{2}\mathbb{E}\|\nabla f(\overline{x}_g^t)\|^2 + \frac{\gamma L^2}{2}\frac{1}{n}\sum_{i,k}\mathbb{E}\|x_i^{t,k} - \overline{x}_g^t\|^2.
\end{equation}
For term 2:
\begin{equation}
    \begin{aligned}
        \frac{L}{2}\mathbb{E}\left\|\frac{1}{n}\gamma \sum_{i,k}\nabla f_i(x_i^{t,k})\right\|^2 
        =& \frac{L\gamma^2}{2}\mathbb{E}\left\|\frac{1}{n} \sum_{i,k}\nabla f_i(x_i^{t,k}) - K\nabla f(\overline{x}_g^t) + K\nabla f(\overline{x}_g^t)\right\|^2\\
        \leq& L^3K\gamma^2\frac{1}{n}\sum_{i,k}\mathbb{E}\|x_i^{t,k} - \overline{x}_g^t\|^2 + L^2K^2\gamma^2\mathbb{E}\|\nabla f(\overline{x}_g^t)\|^2 + \frac{L\gamma^2K}{2n}\sigma^2.
    \end{aligned}
\end{equation}
Combining both terms and choosing the step size as $\gamma \leq \frac{1}{4KL}$, we arrive at the result:

\begin{equation}
    \begin{aligned}
    \mathbb{E}f(\overline{x}_g^{t+1}) \leq & \mathbb{E}f(\overline{x}_g^{t}) + \bigg( L^3K\gamma^2 + \frac{\gamma L^2}{2}\bigg)\frac{1}{n}\sum_{i,k}\mathbb{E}\|x_i^{t,k} - \overline{x}_g^t\|^2 + \bigg(L^2K^2\gamma^2-\frac{\gamma K}{2}\bigg)\mathbb{E}\|\nabla f(\overline{x}_g^t)\|^2\\
    \leq & \mathbb{E}f(\overline{x}_g^{t}) + \gamma L^2\frac{1}{n}\sum_{i,k}\mathbb{E}\|x_i^{t,k} - \overline{x}_g^t\|^2 -\frac{\gamma K}{4}\mathbb{E}\|\nabla f(\overline{x}_g^t)\|^2 + \frac{L\gamma^2K}{2n}\sigma^2.
    \end{aligned}
\end{equation}
\end{proof}
\subsection{\textbf{Proof for Lemma \ref{lem6}}}
\begin{proof}
We start by bounding the distance between the expected model $\overline{x}_g^t$ at the server and the optimal model $x^\star$:
\begin{equation}
    \begin{aligned}
        \mathbb{E}\|\Bar{x}_g^{t+1} - x^\star\|^2 & =   \mathbb{E}\|\Bar{x}_g^t - x^\star\|^2 - \frac{2\gamma}{n} \mathbb{E}\langle\Bar{x}_g^t - x^\star, \sum_{i=1}^n\sum_{k=1}^K \nabla f_i(x_i^{t,k}, \xi_i^{t,k}) \rangle + \gamma^2 \mathbb{E}\|\frac{1}{n}\sum_{i=1}^n\sum_{k=1}^K \nabla f_i(x_i^{t,k}, \xi_i^{t,k})\|^2\\
        &\leq \mathbb{E}\|\Bar{x}_g^t - x^\star\|^2 - \frac{2\gamma}{n} \mathbb{E}\langle\Bar{x}_g^t - x^\star, \sum_{i=1}^n\sum_{k=1}^K \nabla f_i(x_i^{t,k}) \rangle + 2\gamma^2 \mathbb{E}\|\frac{1}{n}\sum_{i=1}^n\sum_{k=1}^K \nabla f_i(x_i^{t,k})\|^2 + \frac{2\gamma^2 K \sigma^2}{n}.
    \end{aligned}
\end{equation}
With L-smoothness and convexity, we have the inequality:
\begin{equation}
    \langle z - y, \nabla f(x)\rangle \geq f(z) - f(y) + \frac{\mu}{4} \|y - z\|^2 - L\|z - x\|^2,\quad \forall x,y,z \in R^d.
\end{equation}
We can now further bound the terms:
\begin{equation}
    \begin{aligned}
        \mathbb{E}\|\Bar{x}_g^{t+1} - x^\star\|^2 \leq& \mathbb{E}\|\Bar{x}_g^t - x^\star\|^2 + 2\gamma^2 \mathbb{E}\|\frac{1}{n}\sum_{i=1}^n\sum_{k=1}^K \nabla f_i(x_i^{t,k})\|^2 + \frac{2\gamma^2 K \sigma^2}{n}\\
        &- \frac{\mu\gamma}{2}\mathbb{E}\|\Bar{x}_g^t - x^\star\|^2 - 2\gamma K \mathbb{E} (f(\Bar{x}_g^t) - f(x^\star)) + 2\gamma L \Delta_t\\
        \leq& \mathbb{E}\|\Bar{x}_g^t - x^\star\|^2 + 4\gamma^2K^2 \mathbb{E}\|\nabla f (\Bar{x}_g^t)\|^2\\
        &- \frac{\mu\gamma}{2}\mathbb{E}\|\Bar{x}_g^t - x^\star\|^2 - 2\gamma K \mathbb{E} (f(\Bar{x}_g^t) - f(x^\star)) + 2\gamma L (1 + 2\gamma KL) \Delta_t+ \frac{2\gamma^2 K^2 \sigma^2}{n}.
    \end{aligned}
\end{equation}
We first choose $\gamma < \frac{1}{4KL}$ to let $1 + 2\gamma KL \leq \frac{3}{2}$, to control the coefficients of $\Delta_t$. Then, since we are assuming convexity for this lemma, we can use $\mathbb{E}\|\nabla f (\Bar{x}_g^t)\|^2 \leq 2L \mathbb{E}(f (\Bar{x}_g^t) - f (x^\star))$ to remove the $\mathbb{E}\|\nabla f (\Bar{x}_g^t)\|^2$ terms. Finally, we use lemma~\ref{lem1} to replace $\Delta_t$ with $\Gamma_t, \mathcal{Y}_t$ and $\mathcal{Z}_t$:
\begin{equation}
    \begin{aligned}
        \mathbb{E}\|\Bar{x}_g^{t+1} - x^\star\|^2 
        \leq& \mathbb{E}\|\Bar{x}_g^t - x^\star\|^2 + 4\gamma^2K^2 \mathbb{E}\|\nabla f (\Bar{x}_g^t)\|^2\\
        &- \frac{\mu\gamma}{2}\mathbb{E}\|\Bar{x}_g^t - x^\star\|^2 - 2\gamma K \mathbb{E} (f(\Bar{x}_g^t) - f(x^\star)) + 3\gamma L \Delta_t + \frac{2\gamma^2 K \sigma^2}{n}\\
        \leq & \mathbb{E}\|\Bar{x}_g^t - x^\star\|^2 - \frac{\mu\gamma}{2}\mathbb{E}\|\Bar{x}_g^t - x^\star\|^2 + (- 2\gamma K + 8\gamma^2 K^2 L + 36K^3\gamma^3 L^2) \mathbb{E} (f(\Bar{x}_g^t) - f(x^\star))\\
        &+9 (1-p)\gamma KL \Gamma_t + 72K^3L\gamma^3 (\mathcal{Y}_t + \mathcal{Z}_t)\\
        &+ \frac{2\gamma^2 K \sigma^2}{n} + 9K^2\gamma^3L\sigma^2.
    \end{aligned}
\end{equation}
In order to control the coefficients of $\mathbb{E} (f(\Bar{x}_g^t) - f(x^\star))$, we let $\gamma \leq \frac{1}{36KL}$ so that $ - 2\gamma K + 8\gamma^2 K^2 L + 36K^3\gamma^3 L^2 \leq - \gamma K$. Under this assumption, we arrive at the result:

\begin{equation}
    \begin{aligned}
        \mathbb{E}\|\Bar{x}_g^{t+1} - x^\star\|^2 
        \leq & \mathbb{E}\|\Bar{x}_g^t - x^\star\|^2 - \frac{\mu\gamma}{2}\mathbb{E}\|\Bar{x}_g^t - x^\star\|^2 - \gamma K  \mathbb{E} (f(\Bar{x}_g^t) - f(x^\star))\\
        &+9 (1-p)\gamma KL \Gamma_t + 72K^3L\gamma^3 (\mathcal{Y}_t + \mathcal{Z}_t)+ \frac{2\gamma^2 K \sigma^2}{n} + 9K\gamma^3L\sigma^2.
    \end{aligned}
\end{equation}

\end{proof}

\newpage
 
\section{Proofs of Theorems and Corollaries}
\label{appendixB}
\noindent In this section, we provide the proofs for our theorems and corollaries. 

\subsection{\textbf{Proof of Theorem \ref{thm1}}}

\begin{proof}
    We start with defining the Lyapunov function:
    \begin{equation}
        \mathcal{H}_t = \mathbb{E}f(\overline{x}_\mathrm{g} ^t) - \mathbb{E}f(x^\star) + c_0 K^3\gamma^3 \bigg(\frac{1}{p}\mathcal{Y}_t + \frac{1}{q}\mathcal{Z}_t\bigg) + c_1 \frac{K\gamma}{p}\Gamma_t
    \end{equation}
    for constants $c_0, c_1, c_2$. Additionally, by rearranging Lemma \ref{lem1} and introducing a constant $c_2$, we have:
    \begin{equation}
            0
        \leq c_2\gamma L^2 \bigg( -\Delta_t +  \textstyle3(1 - p)K\Gamma_t
        + 24K^3\gamma^2\mathcal{Y}_t   \textstyle+ 24K^3\gamma^2\mathcal{Z}_t + 6K^3\gamma^2\mathbb{E}\|\nabla f(\overline{x}_\mathrm{g} ^t)\|^2 + 3K^2\gamma^2 \sigma^2\bigg).
    \end{equation}
    Using Lemma \ref{lem5}, the non-convex descent lemma, together with Lemmas \ref{lem1}, \ref{lem2}, \ref{lem3}, \ref{lem4}, we can see that:
    \begin{equation}
\begin{aligned}
    \mathcal{H}_t - \mathcal{H}_{t-1} \leq & -DK\gamma\mathbb{E}\|\nabla f(\overline{x}_\mathrm{g} ^{t-1})\|^2
        -D_1\mathcal{Y}_{t-1} - D_2\mathcal{Z}_{t-1}- D_3\Gamma_{t-1} - D_4 \Delta_{t-1}\textstyle+\frac{D_5L^2}{p^4q^2 \cdot K}(K^3\gamma^3)\sigma^2 + \frac{L}{2nK}(K^2\gamma^2)\sigma^2,\\
\end{aligned}
\end{equation}
where here we have:
\begin{equation}
    \begin{cases}
        D &= \frac{1}{4} - \frac{12}{p^2}\gamma^4K^4L^2c_0 - \frac{12}{p^2}\gamma^2K^2c_1 - \frac{12}{q^2}\gamma^4 K^4 L^2c_0 - 6K^2\gamma^2L^2 c_2\\
        D_1 &= (\frac{c_0}{2} - \frac{12}{p}c_1 + 24L^2 c_2)K^3\gamma^3\\
        D_2 &= (\frac{c_0}{2} - \frac{12}{q}c_1 + 24L^2 c_2)K^3\gamma^3\\
        D_3 &= (\frac{c_1}{2} - 3(1-p)L^2c_2) K\gamma\\
        D_4 &= (c_2 - \frac{12}{p^2}\gamma^2 K^2c_1)\gamma L^2
    \end{cases}
\end{equation}
and $D_5$ is another constant. To get convergence, we need $D > 0$ and $D_1, D_2, D_3, D_4 \geq 0$. By choosing:
\begin{equation}
    \begin{cases}
        c_2 &> 2\\
        c_1 &= 6L^2 c_2\\
        c_0 &=\frac{192}{pq}L^2 c_2\\
        \gamma &\leq \frac{p^2 q}{945 KL}
    \end{cases}
\end{equation}
we can achieve this, yielding:
\begin{equation}
\begin{aligned}
    &\mathcal{H}_t - \mathcal{H}_{t-1} \leq  -DK\gamma\mathbb{E}\|\nabla f(\overline{x}_\mathrm{g} ^{t-1})\|^2
        +\frac{D_5L^2}{p^4q^2 \cdot K}(K^3\gamma^3)\sigma^2 + \frac{L}{2nK}(K^2\gamma^2)\sigma^2\\
    \Rightarrow & \frac{1}{T}(\mathcal{H}_{T} - \mathcal{H}_{1}) \leq -DK\gamma \frac{1}{T} \sum_{t=1}^T \mathbb{E}\|\nabla f(\overline{x}_g^t)\|^2 +\frac{D_5L^2}{p^4q^2 \cdot K}(K^3\gamma^3)\sigma^2 + \frac{L}{2nK}(K^2\gamma^2)\sigma^2\\
    \Rightarrow & \frac{1}{T} \sum_{t=1}^T \mathbb{E}\|\nabla f(\overline{x}_g^t)\|^2\leq \frac{1}{DTK\gamma}(\mathcal{H}_{1}) +\frac{D_5L^2}{Dp^4q^2 \cdot K}(K^2\gamma^2)\sigma^2 + \frac{L}{2DnK}(K\gamma)\sigma^2\\
\end{aligned}
\end{equation}
Based on our initialization, we have $\mathcal{H}_1 = \mathbb{E}f(\overline{x}_\mathrm{g} ^1) - \mathbb{E}f(x^\star) + c_0 K^3\gamma^3 \bigg(\frac{1}{p}\mathcal{Y}_1 + \frac{1}{q}\mathcal{Z}_1\bigg) + c_1 \frac{K\gamma}{p}\Gamma_1 \leq \mathbb{E}f(\overline{x}_\mathrm{g} ^1) - \mathbb{E}f(x^\star) + c_0 K^3\gamma^3 \bigg(\frac{1}{p}\sigma^2 + \frac{1}{q}\sigma^2\bigg)$. Under the condition that $T > K$, we can move the stochastic variance $\sigma^2$ from $\mathcal{H}_1$ and merge it into $D_5$, thus yielding the final result:
\begin{equation}
    \frac{1}{T} \sum_{t=1}^T \mathbb{E}\|\nabla f(\overline{x}_g^t)\|^2\leq \frac{1}{DTK\gamma}(\mathbb{E}f(\overline{x}_\mathrm{g} ^1) - \mathbb{E}f(x^\star)) +\frac{D_5L^2}{Dp^4q^2 \cdot K}(K^2\gamma^2)\sigma^2 + \frac{L}{2DnK}(K\gamma)\sigma^2.
\end{equation}
\end{proof}

\subsection{\textbf{Proof of Theorem \ref{thm2}}}

\begin{proof}
The descent lemma for this case is Lemma \ref{lem6}. Since we are assuming $\mu = 0$, we aim to use the $- \gamma K  \mathbb{E} (f(\Bar{x}_g^t) - f(x^\star))$ term in the RHS of the inequality to control the descent instead of the $- \frac{\mu\gamma}{4}\mathbb{E}\|\Bar{x}_g^t - x^\star\|^2$ term. From Lemma \ref{lem6}, we can write:
\begin{equation}
    \begin{aligned}
    \mathbb{E}\|\Bar{x}_g^{t+1} - x^\star\|^2 
        \leq & \mathbb{E}\|\Bar{x}_g^t - x^\star\|^2 - \frac{\mu\gamma}{2}\mathbb{E}\|\Bar{x}_g^t - x^\star\|^2 - \gamma K  \mathbb{E} (f(\Bar{x}_g^t) - f(x^\star))\\
        &+9 (1-p)\gamma KL \Gamma_t + 72K^3L\gamma^3 (\mathcal{Y}_t + \mathcal{Z}_t)+ \frac{2\gamma^2 K \sigma^2}{n} + 9K^2\gamma^3L\sigma^2\\
\Rightarrow \mathbb{E} (f(\Bar{x}_g^t) - f(x^\star)) \leq & \frac{1}{\gamma K}\Big(\mathbb{E}\|\Bar{x}_g^{t} - x^\star\|^2 - \mathbb{E}\|\Bar{x}_g^{t+1} - x^\star\|^2\Big)+9 (1-p)L \Gamma_t + 72K^2L\gamma^2 (\mathcal{Y}_t + \mathcal{Z}_t)+ \frac{2\gamma \sigma^2}{n} + 9K\gamma^2L\sigma^2\\
\end{aligned}
\end{equation}
From this, we define the following two terms:
\begin{equation}
\begin{aligned}
        \mathcal{G}_t &= 9 (1-p)L \Gamma_t + 72K^2L\gamma^2c_0 (\mathcal{Y}_t + \mathcal{Z}_t),\\
    \mathcal{E}_t &= \mathbb{E} (f(\Bar{x}_g^t) - f(x^\star)),
\end{aligned}
\end{equation}
where we choose $c_0 = \frac{6}{\min(p,q)^2} > 1$. For simplicity of notation, let $p' = \min(p,q)$. Then, we can bound the sum of $\mathcal{E}_t$ with:
\begin{equation}
    \label{eq:convT2}
    \frac{1}{T}\sum_{t=0}^{T-1}\mathcal{E}_t \leq \frac{\mathbb{E}\|\overline{x}_g^0 - x^\star\|^2}{\gamma KT} + \frac{1}{T} \sum_{t=0}^{T-1} \mathcal{G}_t + \frac{2\gamma \sigma^2}{n} + 9K\gamma^2 L \sigma^2.
\end{equation}
Now, in order to unfold this recursion, we need to bound $\mathcal{G}_t$ with an appropriate choice of step size. To do so, we first expand the terms using Lemmas \ref{lem2}, \ref{lem3}, \ref{lem4}:
\begin{equation}
    \begin{aligned}
        \mathcal{G}_t \leq & 9L(1-p)\bigg(1 - \frac{p}{2}\bigg) \Gamma_{t-1}\\
        &+\frac{432}{p'^2}K^2L\gamma^2 \bigg(\frac{p'}{4} + 1 - \frac{p'}{2}\bigg)(\mathcal{Y}_{t-1} + \mathcal{Z}_{t-1})\\
        &+ \bigg(108\frac{1-p}{p}\gamma^2 K^2L + \frac{24K^2L^2\gamma^2}{p'}\frac{432}{p'^2}K^2L\gamma^2\bigg)\mathbb{E}\|\nabla f (\Bar{x}_g^{t-1})\|^2\\
        &+ \bigg(108\frac{1-p}{p}\gamma^2 K L^3 + \frac{52L^2}{p'}\frac{432}{p'^2}K^2L\gamma^2\bigg)\Delta_{t-1}\\
        &+9L(1-p)K\gamma^2 \sigma^2 + \frac{4}{p'} \frac{432}{p'^2}K L\gamma^2 \sigma^2.
    \end{aligned}
\end{equation}
Next, we merge the coefficients of $\Delta_{t-1}$ using the fact $0 < p' \leq p < 1$ and $K \geq 1$ for a simpler notation:
\begin{equation}
    \begin{aligned}
        108\frac{1-p}{p}\gamma^2 K L^3 + \frac{52L^2}{p'}\frac{432}{p'^2}K^2L\gamma^2 &\leq (\frac{108}{p} + \frac{22464}{p'^3})\gamma^2K^2L^3 \leq \frac{22572}{p'^3} \gamma^2K^2L^3.
    \end{aligned}
\end{equation}
Then we expand the $\Delta_{t-1}$ term:
\begin{equation}
    \begin{aligned}
        \mathcal{G}_t \leq & 9L(1-p)\bigg(1 - \frac{p}{2} + \frac{7524}{p'^3}\gamma^2 K^3 L^2\bigg) \Gamma_{t-1}\\
        &+\frac{432}{p'^2}K^2L\gamma^2 \bigg(\frac{p'}{4} + 1 - \frac{p'}{2} + \frac{1254}{p'}L^2 K^3 \gamma^2\bigg)(\mathcal{Y}_{t-1} + \mathcal{Z}_{t-1})\\
        &+ \bigg(\frac{108}{p'^3}\gamma^2 K^2L + \frac{24K^2L^2\gamma^2}{p'}\frac{432}{p'^2}K^2L\gamma^2 + \frac{22572}{p'^3}\gamma^2 K^2 L^3 6 K^3 \gamma^2\bigg)\mathbb{E}\|\nabla f (\Bar{x}_g^{t-1})\|^2\\
        &+9L(1-p)K\gamma^2 \sigma^2 + \frac{4}{p'} \frac{432}{p'^2}K L\gamma^2 \sigma^2 + \frac{67716}{p'^3} \gamma^4 K^4 L^3 \sigma^2
    \end{aligned}
\end{equation}
In order to control all these coefficients, if we impose the following constraints:
\begin{equation}
    \begin{cases}
        \gamma \leq \frac{p'^2}{173 \sqrt{K^3 L^2}}\\
        \gamma \leq \frac{p'}{101\sqrt{K^3 L^2}}\\
        \gamma \leq \frac{p'}{53 \sqrt{K^3 L^2}}\\
        \gamma \leq \frac{1}{87\sqrt{K^3 L^2}}
    \end{cases}
\end{equation}
Then the following holds:
\begin{equation}
    \begin{cases}
        \frac{7524}{p'^3}\gamma^2 K^3 L^2 &\leq \frac{p'}{4}\\
        \frac{1254}{p'}L^2 K^3 \gamma^2 &\leq \frac{p'}{8}\\
        \frac{108}{p'^3}\gamma^2 K^2L + \frac{24K^2L^2\gamma^2}{p'}\frac{432}{p'^2}K^2L\gamma^2 + \frac{22572}{p'^3}\gamma^2 K^2 L^3 6 K^3 \gamma^2 &\leq \frac{120L}{p'}\gamma^2 K^2\\
        9L(1-p)K\gamma^2 \sigma^2 + \frac{4}{p'} \frac{432}{p'^2}K L\gamma^2 \sigma^2 + \frac{67716}{p'^3} \gamma^4 K^4 L^3 \sigma^2 &\leq \frac{1740}{p'^3} \gamma^2 KL\sigma^2
    \end{cases}
\end{equation}
Thus, we have:
\begin{equation}
\begin{aligned}
        \mathcal{G}_t &\leq (1 - \frac{p'}{8}) \mathcal{G}_{t-1} + \frac{120L}{p'}\gamma^2 K^2\mathbb{E}\|\nabla f (\Bar{x}_g^{t-1})\|^2 + \frac{1740}{p'^3} \gamma^2 KL\sigma^2\\
        &\leq (1 - \frac{p'}{8}) \mathcal{G}_{t-1} + \frac{240L}{p'}\gamma^2 K^2L \mathbb{E} (f(\Bar{x}_g^{t-1}) - f(x^\star)) + \frac{1740}{p'^3} \gamma^2 KL\sigma^2,
\end{aligned}    
\end{equation}
where in the second inequality we use $\mathbb{E}\|\nabla f (\Bar{x}_g^{t-1})\|^2 \leq 2L \mathbb{E} (f(\Bar{x}_g^{t-1}) - f(x^\star))$.
Now we can unroll the recursion of $\mathcal{G}_t$, defining $\delta = 1 - \frac{p'}{8}$:
\begin{equation}
\begin{aligned}
        \frac{1}{T}\sum_{t=2}^{T}\mathcal{G}_t &\leq \frac{2 \mathcal{G}_1}{(1- \delta)T} + \frac{240}{p' T} \gamma^2 K^2 L \sum_{t=1}^{T} \sum_{l = 0}^{t-1} (\frac{1+\delta}{2})^{t-1-l} \mathcal{E}^l + \frac{3480 \gamma^2 KL \sigma^2}{p'^3 (1 - \delta)}\\
        &\leq \frac{2 \mathcal{G}_1}{(1- \delta)T} + \frac{240}{p'} \frac{\gamma^2 K^2 L}{T (1 - \delta)} \sum_{t=1}^{T} \mathcal{E}_t + \frac{3480 \gamma^2 KL \sigma^2}{p'^3 (1 - \delta)}\\
        \Rightarrow \frac{1}{T}\sum_{t=1}^{T}\mathcal{G}_t 
        &\leq \frac{3 \mathcal{G}_1}{(1- \delta)T} + \frac{240}{p'} \frac{\gamma^2 K^2 L}{T (1 - \delta)} \sum_{t=1}^{T} \mathcal{E}_t + \frac{3480 \gamma^2 KL \sigma^2}{p'^3 (1 - \delta)}\\
\end{aligned}    
\end{equation}
Finally, we plug this back into the relationship from~\eqref{eq:convT2}:
\begin{equation}
    \begin{aligned}
        \frac{1}{T}\sum_{t=1}^{T}\mathcal{E}_t &\leq \frac{\mathbb{E}\|\overline{x}_g^1 - x^\star\|^2}{\gamma KT} + \frac{1}{T} \sum_{t=1}^{T} \mathcal{G}_t + \frac{2\gamma \sigma^2}{n} + 9K\gamma^2 L \sigma^2\\
        \Rightarrow \frac{1}{T} \left(1 - \frac{3840\gamma^2 K^2 L}{p'^2} \right)\sum_{t=1}^{T}\mathcal{E}_t &\leq \frac{\mathbb{E}\|\overline{x}_g^1 - x^\star\|^2}{\gamma KT} + \frac{3 \mathcal{G}_1}{(1- \delta)T} + \frac{2\gamma  \sigma^2}{n} + 9K\gamma^2 L \sigma^2 + \frac{55680\gamma^2 KL \sigma^2}{p'^4}\\
    \end{aligned}
\end{equation}
If we choose $\gamma \leq \frac{p'}{88 \sqrt{K^2L}}$, we get:
\begin{equation}
    \begin{aligned}
       \frac{1}{T}\sum_{t=1}^{T}\mathcal{E}_t &\leq \frac{2\mathbb{E}\|\overline{x}_g^1 - x^\star\|^2}{\gamma KT} + \frac{48\mathcal{G}_1}{T} + \frac{2\gamma  \sigma^2}{n} + 9K\gamma^2 L \sigma^2 + \frac{55680\gamma^2 KL \sigma^2}{p'^4}\\
       &\leq \frac{2\mathbb{E}\|\overline{x}_g^1 - x^\star\|^2}{\gamma KT} + \frac{432(1-p)L\Gamma_1}{T} + \frac{3456LK^2\gamma^2(\mathcal{Y}_1 + \mathcal{Z}_1)}{p^2q^2T} + \frac{2\gamma  \sigma^2}{n} + 9K\gamma^2 L \sigma^2 + \frac{55680\gamma^2 KL \sigma^2}{p^4 q^4}\\
    \end{aligned}
\end{equation}
Then, based on the algorithm's initialization, and under the condition that $T >K$, we can merge the terms $\mathcal{Y}_1$ and $\mathcal{Z}_1$ into the last stochastic variance term to obtain the final result:
\begin{equation}
    \begin{aligned}
       \frac{1}{T}\sum_{t=1}^{T}\mathcal{E}_t 
       &\leq \frac{2\mathbb{E}\|\overline{x}_g^1 - x^\star\|^2}{\gamma KT} + \frac{2\gamma  \sigma^2}{n} + 9K\gamma^2 L \sigma^2 + \frac{62592\gamma^2 KL \sigma^2}{p^4 q^4}. \\
    \end{aligned}
\end{equation}

\end{proof}
% \textcolor{red}{Organize the corollary into one}

\subsection{\textbf{Proof of Theorem \ref{thm3}}}

\begin{proof}
As in Theorem \ref{thm2}, the descent lemma for this case is Lemma \ref{lem6}. Since we are assuming $\mu > 0$, we aim to use the $- \frac{\mu\gamma}{4}\mathbb{E}\|\Bar{x}_g^t - x^\star\|^2$ term in the RHS of the inequality to control the descent instead of using the $- \gamma K  \mathbb{E} (f(\Bar{x}_g^t) - f(x^\star))$ term. We can write:
\begin{equation}
    \begin{aligned}
    \mathbb{E}\|\Bar{x}_g^{t+1} - x^\star\|^2 
        \leq & \mathbb{E}\|\Bar{x}_g^t - x^\star\|^2 - \frac{\mu\gamma}{2}\mathbb{E}\|\Bar{x}_g^t - x^\star\|^2 - \gamma K  \mathbb{E} (f(\Bar{x}_g^t) - f(x^\star))\\
        &+9 (1-p)\gamma KL \Gamma_t + 72K^3L\gamma^3 (\mathcal{Y}_t + \mathcal{Z}_t)+ \frac{2\gamma^2 K \sigma^2}{n} + 9K^2\gamma^3L\sigma^2\\
    \leq &  (1 - \frac{\mu\gamma}{2})\mathbb{E}\|\Bar{x}_g^t - x^\star\|^2
        +9 (1-p)\gamma KL \Gamma_t + 72K^3L\gamma^3 (\mathcal{Y}_t + \mathcal{Z}_t)+ \frac{2\gamma^2 K \sigma^2}{n} + 9K^2\gamma^3L\sigma^2.
    \end{aligned}
\end{equation}
Note that by convexity and L-smoothness, we can bound the gradient terms as follows:
\begin{equation}
    \mathbb{E}\|\nabla f(\overline{x}_g^t)\|^2 \leq L^2 \mathbb{E}\|\overline{x}_g^t - x^\star\|^2
\end{equation}
Then, by combining Lemmas \ref{lem1}, \ref{lem2}, \ref{lem3}, \ref{lem4}, and \ref{lem6}, we can form the following recursion:
\begin{equation}
        \textstyle\begin{bmatrix}
            \mathbb{E}\|\overline{x}_\mathrm{g} ^t \!-\! x^\star\|^2\\
            \Gamma_t\\
            \gamma \mathcal{Y}_t\\
            \gamma \mathcal{Z}_t
        \end{bmatrix}
        \leq         
        A\begin{bmatrix}
            \mathbb{E}\|\overline{x}_\mathrm{g} ^{t-1} \!-\! x^\star\|^2\\
            \Gamma_{t-1}\\
            \gamma \mathcal{Y}_{t-1}\\
            \gamma \mathcal{Z}_{t-1}
        \end{bmatrix} + b.
        \label{eq:recur-thm3}
    \end{equation}
where
\begin{equation}
    A = \begin{bmatrix}
        1 - \mu K\gamma/2 & 9\gamma KL(1 - p) & 72K^3L\gamma^2 & 72K^3L\gamma^2\\
        \frac{14}{p}\gamma^2K^2L^2 & 1 - \frac{p}{2} + \frac{36}{p}\gamma^2KL^2 & \frac{14}{p} K^2\gamma & \frac{14}{p} K^2\gamma\\
        \frac{72}{p}\gamma^2K^2L^3 & \frac{30}{p}L^2K\gamma & 1 - \frac{p}{2} + \frac{240}{p}K^3\gamma^2L^2 & \frac{240}{p}K^3\gamma^2L^2\\
        \frac{168}{q}\gamma^2K^2L^3 & \frac{78}{q}L^2K\gamma & \frac{624}{q}K^3\gamma^2L^2 & 1 - \frac{q}{2} + \frac{624}{q}K^3\gamma^2L^2 \\
    \end{bmatrix},
\end{equation}
and 
\begin{equation}
    b = \begin{bmatrix}
        \frac{2\gamma^2K}{n} + 9K^2\gamma^3L\\
        K\gamma^2 + 3K^3\gamma^4L^2\\
        \frac{2\gamma}{qK} + \frac{30K^3\gamma^3L^2}{q}\\
        \frac{2\gamma}{qK} + \frac{78K^3\gamma^3L^2}{q}\\
    \end{bmatrix}\sigma^2.
\end{equation}
To enforce convergence, we aim to upper bound the value of $\|A\|_1$ by appropriate choice of step size, which gives the following conditions:
\begin{equation}
    \begin{cases}
        \frac{254}{\min(p, q)}\Big(\gamma^2K^2L^2 + 2\gamma^2K^2L^3\Big) \leq \frac{\mu K\gamma}{4}\\
         9\gamma KL(1 - p) + 1 - \frac{p}{2} + \frac{36}{p}\gamma^2KL^2 + \frac{30}{p}L^2K\gamma + \frac{78}{q}L^2K\gamma \leq 1 - \frac{\mu K\gamma}{4}\\
          72K^3L\gamma^2 + \frac{14}{p} K^2\gamma + 1 - \frac{p}{2} + \frac{240}{p}K^3\gamma^2L^2 + \frac{624}{q}K^3\gamma^2L^2 \leq 1 - \frac{\mu K \gamma}{4}\\
           72K^3L\gamma^2+  \frac{14}{p} K^2\gamma+ \frac{240}{p}K^3\gamma^2L^2+ 1 - \frac{q}{2} + \frac{624}{q}K^3\gamma^2L^2 \leq 1 - \frac{\mu K \gamma}{4}
    \end{cases}
\end{equation}
Now we define the following Lyapunov function:
\begin{equation}
    \mathcal{L}_t = \mathbb{E}\|\overline{x}_\mathrm{g} ^t - x^\star\|^2 + \Gamma_t + \gamma \mathcal{Y}_t + \gamma \mathcal{Z}_t.
\end{equation}
If we choose our step size to satisfy the following inequality:
\begin{align}
    \gamma \leq \min\Big(\frac{\min(p,q)\mu}{K(14L^2+240L^3)},\frac{1}{18KL}, \frac{4}{\mu K},\frac{\min(p, q)p}{2(45KL + 108KL^2 + K\mu/4)},\frac{\min(p,q)^2}{2(86K^2+864K^2L+K\mu/4)}\Big),
    \label{eq:14}
\end{align}
then we have $\rho(A) \leq \|A\|_1 \leq 1 - \frac{\mu K\gamma}{4} < 1$. Unrolling the stochastic noise part of the recursion in~\eqref{eq:recur-thm3}, we have $\sum_{t=0}^{T-1}A^tb \leq (I - A)^{-1}b$. Therefore:
{\begin{align}
\mathcal{L}_{T+1} \leq (1 - \frac{\mu K\gamma}{4})^T\mathcal{L}_{1} + \|(I - A)^{-1}b\|_1.
\end{align}}
Lower bounding $\mathcal{L}_{T+1}$ with $\mathbb{E}\|x_\mathrm{g} ^{T+1} - x^\star\|^2$ completes the proof.
\end{proof}

\subsection{\textbf{Proof of Corollary \ref{cor1}}}
\begin{proof}
    For the non-convex case, we build on lemma \ref{unroll_lem}. Starting from the final result from Theorem \ref{thm1}, we can choose:
    \begin{equation}
        \begin{cases}
            r_0 &= \frac{\mathcal{E}_1}{D}\\
            b &= \frac{L\sigma^2}{2DnK}\\
            e &= \frac{D_5 L^2\sigma^2}{p^4q^2K}.
        \end{cases}
    \end{equation}
    Then, by choosing $K\gamma \leq \frac{1}{u} = \frac{1}{L}$, we have:
    \begin{equation}
        \begin{aligned}
            \frac{1}{T} \sum_{t=1}^T \mathbb{E}\|\nabla f(\overline{x}_g^t)\|^2 &\leq
            \frac{r_0}{T}\frac{1}{K\gamma} + bK\gamma + e K^2\gamma^2 \\
       &\leq 2\sqrt{\frac{br_0}{T}} + 2e^{\frac{1}{3}}(\frac{r_0}{T})^{\frac{2}{3}} + \frac{ur_0}{T}\\
       &= \mathcal{O}\bigg(\sqrt{\frac{\mathcal{H}_1\sigma^2L}{nTK}} + (\frac{\mathcal{H}_1L\sigma}{\sqrt{K}Tp^2q})^{\frac{2}{3}} + \frac{\mathcal{H}_1L}{Tp^2q}\bigg).
        \end{aligned}
    \end{equation}
\end{proof}

\subsection{\textbf{Proof of Corollary \ref{cor2}}}
\begin{proof}
For the weakly convex case, we still build on Lemma \ref{unroll_lem}. Starting from the final result from Theorem \ref{thm2}, by choosing:
\begin{equation}
    \begin{cases}
        r_0 &= 2\mathbb{E}\|\overline{x}_g^1 - x^\star\|^2\\
        b &= \frac{2\sigma^2}{Kn}\\
        e &= \frac{9L\sigma^2}{K} + \frac{62592L\sigma^2}{Kp^4q^4},
    \end{cases}
\end{equation}
and choosing $K\gamma \leq \frac{1}{u} = \frac{1}{L}$, we have:
\begin{equation}
    \begin{aligned}
        \frac{1}{T}\sum_{t=1}^{T}\mathcal{E}_t 
       &\leq \frac{2\mathbb{E}\|\overline{x}_g^1 - x^\star\|^2}{\gamma KT} + \frac{2\gamma  \sigma^2}{n} + 9K\gamma^2 L \sigma^2 + \frac{62592\gamma^2 KL \sigma^2}{p^4 q^4}\\
       &\leq \frac{r_0}{T}\frac{1}{K\gamma} + bK\gamma + e K^2\gamma^2\\
       &\leq 2\sqrt{\frac{br_0}{T}} + 2e^{\frac{1}{3}}(\frac{r_0}{T})^{\frac{2}{3}} + \frac{ur_0}{T}\\
       &= \mathcal{O}\bigg(\frac{L\mathbb{E}\|\overline{x}_g^1 - x^\star\|^2}{T}\bigg) + \mathcal{O}\bigg(\Big(\frac{\sqrt{L}\mathbb{E}\|\overline{x}_g^1 - x^\star\|^2\sigma}{p^2q^2\sqrt{K}T}\Big)^{\frac{2}{3}} + \sqrt{\frac{\mathbb{E}\|\overline{x}_g^1 - x^\star\|^2\sigma^2}{nKT}}\bigg).
    \end{aligned}
\end{equation}
\end{proof}

\subsection{\textbf{Proof of Corollary \ref{cor3}}}
\begin{proof}
    Starting from the final results of Theorem~\ref{thm3}, we find see that:
{\begin{align}
    \mathbb{E}\|\overline{x}_\mathrm{g} ^{T+1} - x^\star\|^2 &\leq (1 - \frac{\mu K\gamma}{4})^T\mathcal{L}_{1} + \|(I - A)^{-1}b\|_1\notag\\
    &\leq(1 - \frac{\mu K\gamma}{4})^T\mathcal{L}_{1} + \mathcal{O}(\frac{K^5L^5\sigma^2\gamma^5}{pq})\notag\\
    &\leq \exp( - \frac{\mu K\gamma}{2}T)\mathcal{L}_1 + \mathcal{O}\left(\frac{K^5L^5\sigma^2\gamma^5}{pq}\right).
\end{align}}
Defining $\overline{\gamma}$ as
\begin{equation}
    \overline{\gamma} = \min\Big(\frac{\min(p,q)\mu}{K(14L^2+240L^3)},\frac{1}{18KL}, \frac{4}{\mu K},\frac{\min(p, q)p}{2(45KL + 108KL^2 + K\mu/4)},\frac{\min(p,q)^2}{2(86K^2+864K^2L+K\mu/4)}\Big),
\end{equation}
we can choose the step size according to:
\begin{equation}
    \gamma = \min \left(\overline{\gamma}, \; \frac{\ln(\max(1, \mu K(\mathbb{E}\|\overline{x}_\mathrm{g} ^{1} - x^\star\|^2 + \Bar{\gamma} 2\sigma^2)T/\sigma^2))}{\mu K T} \right).
\end{equation} 
Using this step size, we see that:
{\begin{align}
        \mathbb{E}\|\overline{x}_\mathrm{g} ^{T+1} - x^\star\|^2 \leq& \mathcal{O}\bigg(\exp( - \frac{\mu K\overline{\gamma}}{2}T)\mathbb{E}\|\overline{x}_\mathrm{g} ^{1} - x^\star\|^2\bigg)+ \mathcal{O}\bigg(\exp( - \frac{\mu K\overline{\gamma}}{2}T)\overline{\gamma}\sigma^2\bigg) + \mathcal{O}\left(\frac{\sigma^2}{\mu KT}\right) + \Tilde{\mathcal{O}}\left(\frac{L^5\sigma^2}{\mu^5T^5pq}\right).
\end{align}}
\noindent Finally, we plug in $\overline{\gamma}$ to complete the proof.
\end{proof}
% \section*{Appendix C\\Additional Numerical results}

\newpage
\section{Additional Experiments}
\label{appen:additional_experiments}

\subsection{Experiments on weakly convex problems}

\textcolor{black}{In Fig.~\ref{fig:cvx}, we show an experiment for weakly convex loss functions (Theorem~\ref{thm2}). This follows the synthetic data generation process from Fig.~\ref{fig:scvx}, but we change the data dimension $d$ from 200 to 1000, while keeping the number of samples fixed. This changes the problem from strongly-convex to convex, as the Gram matrix becomes positive semi-definite.}

\begin{figure}[h]
\captionsetup{justification=centering}
    \centering
    \includegraphics[width=0.85\linewidth]{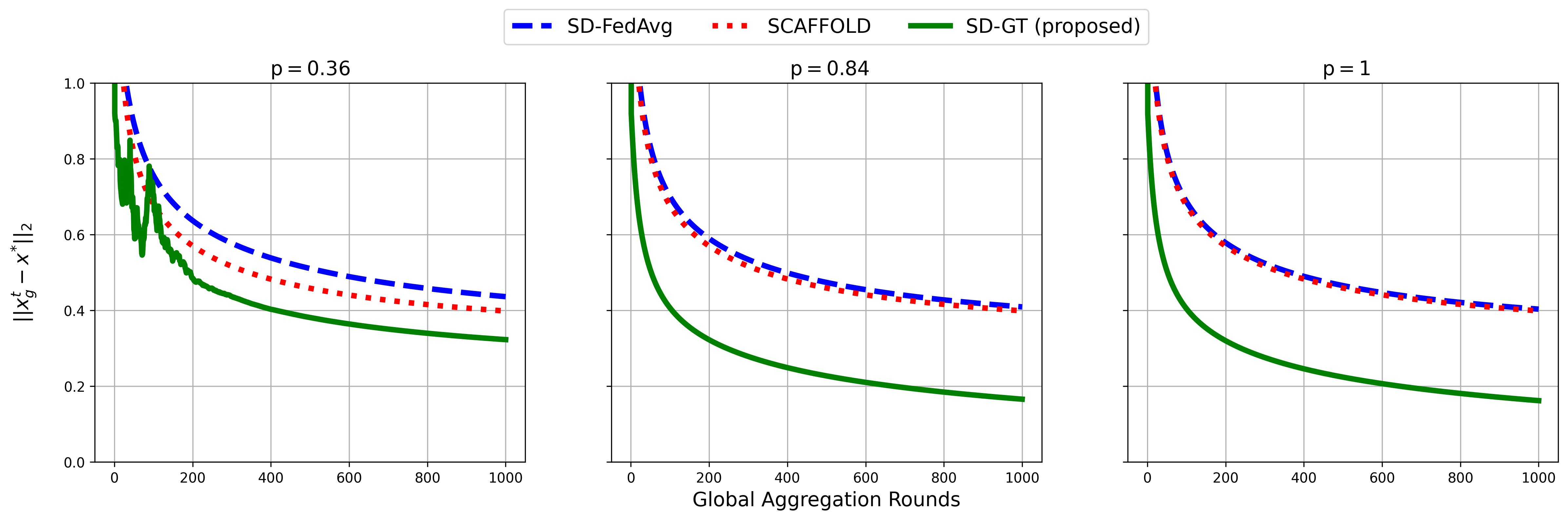}
    \caption{\textcolor{black}{Experiments on a weakly convex loss function, for different subnet sampling rates $p$ (obtained from $\beta_s = 0.2, 0.6, 1$).}}
    \label{fig:cvx}
\end{figure}

\textcolor{black}{As shown in Fig.~\ref{fig:cvx}, the algorithm continues to converge towards the optimal solution in the weekly convex case, although at a sublinear rate. This is consistent with theoretical guarantee for weakly convex problems in Theorem~\ref{thm2}, versus the linear rate that can be achieved in strongly convex problems (Theorem~\ref{thm3} and Fig.~\ref{fig:scvx}) in the manuscript). Despite this slower convergence compared to strongly convex tasks, our method still consistently outperforms SD-FedAvg and SCAFFOLD, highlighting its benefit across a range of loss geometries.}

\subsection{Experiments on strongly-convex with noises ($\sigma^2 > 0$)}
\textcolor{black}{In Fig.~\ref{fig:s-cvx_w_noise}, we evaluate the training performance for the strongly convex synthetic dataset, in the case where we have stochastic noises on the gradients. Based on the convergence analysis in Theorem~\ref{thm3} and Corollary~\ref{cor3}, we know that the linear rate will only be achievable for strongly convex problems with deterministic gradients ($\sigma^2 = 0$); on the other hand, in the presence of stochastic noises, we expect the convergence will be eventually be dominated by sublinear rates as we approach a steady state. This aligns with the observation in Fig.~\ref{fig:s-cvx_w_noise}, where each amount of stochasticity prevents linear convergence, and the final error obtained is increasing in the value of $\sigma^2$.}
% \newpage

\begin{figure}[h]
\captionsetup{justification=centering}
    \centering
    \includegraphics[width=0.85\linewidth]{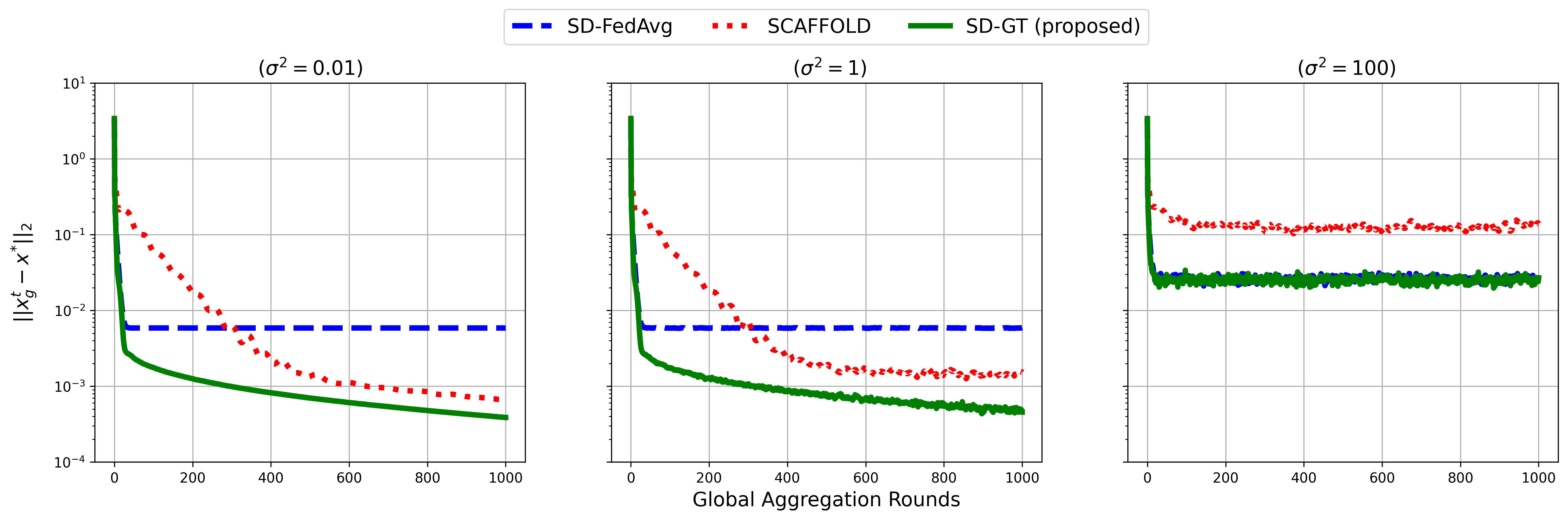}
    \caption{\textcolor{black}{Experiments on the strongly convex problem with gradient noises $\sigma^2 = 0.01, 1, 100$.}}
    \label{fig:s-cvx_w_noise}
\end{figure}

\subsection{Comparison with control algorithm from \cite{chen2024taming}}
\textcolor{black}{
In Fig.~\ref{fig:ctrl-comp}, we compare the adaptive control algorithm in this paper (Algorithm~\ref{alg:2}) with the control algorithm from our initial conference version~\cite{chen2024taming}. The experimental setup is the same as in Sec.~\ref{sec:V-E}. Unlike the previous control method in~\cite{chen2024taming}, Algorithm~\ref{alg:2} updates hyperparameters dynamically during each communication round $t$, resulting more precise control on the energy efficiency. As shown in Fig.~\ref{fig:ctrl-comp}, when the the cost of D2D and DS communication differs by a lot ($\delta = 1\times 10^{-2}$), the new proposed method is able to outperform the prior method. When D2D is energy efficient, there is more room to adapt the number of local training rounds $K$ in-between global aggregations to reduce drift without driving up the energy cost.
}

\begin{figure}[h]
\captionsetup{justification=centering}
    \centering
    \includegraphics[width=1.0\linewidth]{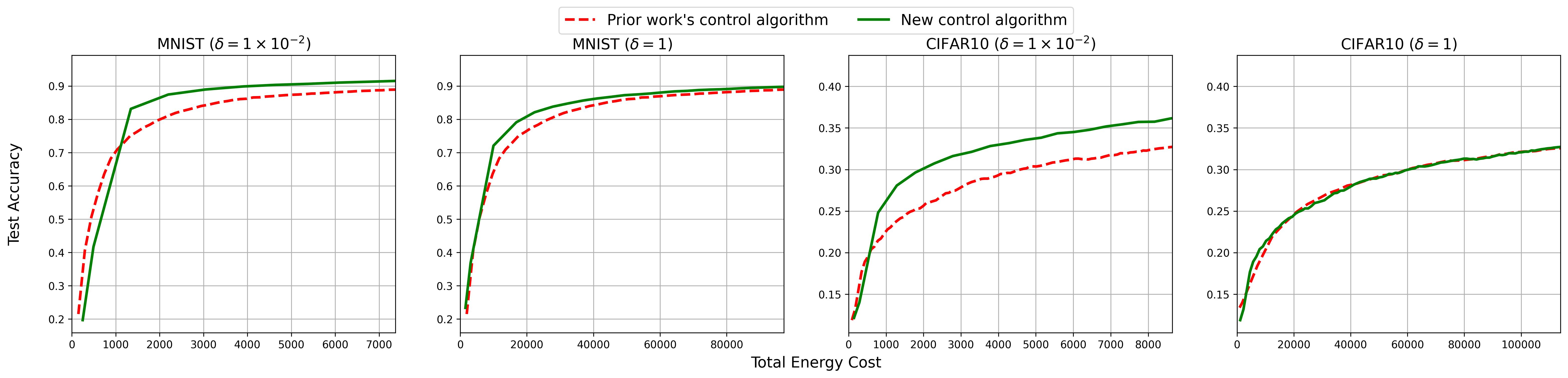}
    \caption{\textcolor{black}{Comparison of Algorithm~\ref{alg:2} with the control algorithm from prior work~\cite{chen2024taming}.}}
    \label{fig:ctrl-comp}
\end{figure}

{\color{black}
\subsection{Impact of Number of Subnets under CIFAR-100}

As shown in Fig.~\ref{fig:CIFAR100_exp2}, when the network consists of a large number of small subnets (i.e., the left plots), {\tt SD-GT} training relies more heavily on global aggregations and between-subnet gradient tracking terms, $y_i^t$. Conversely, when the network contains a small number of large subnets (i.e., the right plots), {\tt SD-GT} depends more on D2D communications and the impact of within-subnet gradient tracking, $z_i^t$. An important observation, now evaluated under the CIFAR-100 dataset, is that the performance of SD-FedAvg is highly sensitive to the specific subnet groupings. Furthermore, {\tt SD-GT} consistently outperforms both SCAFFOLD and SD-FedAvg across all scenarios. \textit{These results underscore the critical role of correcting client drift with both within-subnet ($z_i^t$) and between-subnet ($y_i^t$) gradient tracking, as alternative approaches like SD-FedAvg, which rely solely on D2D communication and local updates, are inadequate when data heterogeneity within and between sububnets varies.}
}

\begin{figure}[h]
% \centerline{\includegraphics[width=\textwidth]{CIFAR10/setting1_CIFAR10-09-07-22-16-19.jpg}}
\centerline{\includegraphics[width=\textwidth]{CIFAR100/setting1_CIFAR100-09-07-22-16-51.jpg}}
\caption{Impact of changing the number of subnets $S$ with fixed total number of clients $n = 40$ ($K = 3$). 
Under the same settings as in Figure~\ref{fig:CIFAR10_exp2} but evaluated under CIFAR100, {\tt SD-GT} is able to outperform both baselines using the combination of between-subnet and within-subnet gradient tracking on top of D2D communications.\vspace{-0.15in}}
\label{fig:CIFAR100_exp2}
\end{figure}

{\color{black}
\subsection{Impact of Device-Server Communication Frequency on MNIST}
In Figure \ref{fig:MNIST_exp1}, we compare the training performance of the algorithms on MNIST as the number of D2D rounds and local updates between global aggregations, denoted by $K$, is varied. As in the CIFAR10 experiments, the central server samples (40\%) of the clients from each subnet, and we consider $K = 3$ to $K = 15$ to study the impact of increasing the frequency of in-subnet consensus operations while reducing the frequency of global aggregations.

We observe that SCAFFOLD converges more slowly than {\tt SD-GT} across all values of $K$. Although both methods rely on gradient correction mechanisms, SCAFFOLD does not utilize D2D communications between global aggregations, whereas {\tt SD-GT} explicitly exploits low-cost in-subnet D2D exchanges to refine local updates. This advantage becomes more pronounced as $K$ increases, where frequent in-subnet communication allows {\tt SD-GT} to better align client updates before global synchronization.
\begin{figure}[h]
\centerline{\includegraphics[width=\textwidth]{MNIST/setting3_MNIST-09-07-22-05-22.jpg}}
% \centerline{\includegraphics[width=\textwidth]{CIFAR10/setting3_CIFAR10-09-07-22-05-14.jpg}}
\caption{Comparison between algorithms on MNIST datasets when changing the number of local client updates and D2D consensus rounds $K$ between global aggregations. Each experiment is conducted with 30 clients and 3 subnets. As $K$ increases, {\tt SD-GT} is able to take advantage of multiple in-subnet model update and consensus iterations while correcting for client drift to achieve better convergence speed.\vspace{-0.15in}}
\label{fig:MNIST_exp1}
\end{figure}

}

\vfill

\newpage

\section{Discussion and Experiments on Time-Varying Topologies}
\label{appen:discuss_topology}
\textcolor{black}{In this section, we briefly discuss extensions to time-varying topologies, e.g., due to device mobility or variable D2D link bandwidth. We consider two cases of dynamics, differing based on the behavior of devices across different subnets:
\begin{enumerate}
    \item \textbf{Topology changes with fixed subnet devices and strongly-connected networks:} This includes scenarios where devices move or experience fluctuating bandwidth, resulting in link reconfigurations but without altering subnet membership. As long as the subnet graphs remain strongly connected at each time step, our convergence analysis can be extended to handle this case by introducing a time-dependent connectivity parameter. Specifically, setting $\rho_s = \min_t \{\rho_s^t\}$, where $\rho_s^t$ is the connectivity constant at global time $t$, preserves the descent inequality in Lemma~\ref{lem3} by updating the subnet connectivity parameter $q$ in Eq.~\eqref{eq:isgt}.
    We conducted additional experiments to validate the impact of these topology changes on model training convergence in {\tt SD-GT}. We begin by setting the mixing matrices $W_s^t$ at the initial time to correspond to random geometric graph topologies with radius selected in $r \in [1.5,5.5]$. Then, we consider three cases of evolution. \textit{(1) Static topology:} Each subnet has fixed $W_s$, $\rho_s$ throughout the whole training phase. \textit{(2) Fully dynamic topology:} Each subnet generates a new graph topology and $W_s^t$ for each global time $t$. \textit{(3) Single subnet shift:} At each global time $t$, one subnet chosen at random changes its topology and $W_s^t$, and the others stay the same from the previous $t$. The results are shown in Fig.~\ref{fig:time_varying}. We can see that our algorithm is still able to converge in all settings. As expected, the static topology performs the best, since its convergence speed will not be dragged down by the worst connectivity $\min_t \{\rho_s^t\}$ experienced for each subnet throughout the whole training process.
    \item \textbf{Topology changing with devices moving across subnets:} If devices migrate between subnets (e.g., moving between base stations), the system may no longer maintain doubly stochastic mixing matrices or track each node's subnet reliably. This results in a violation of Assumption~\ref{asmp2} on strongly connected topologies, and requires a new framework for robust optimization under frequent link interruptions and device mobility. We leave this as a promising direction for future work.
\end{enumerate}}

\begin{figure}[h]
\captionsetup{justification=centering}
    \centering
    \includegraphics[width=0.6\linewidth]{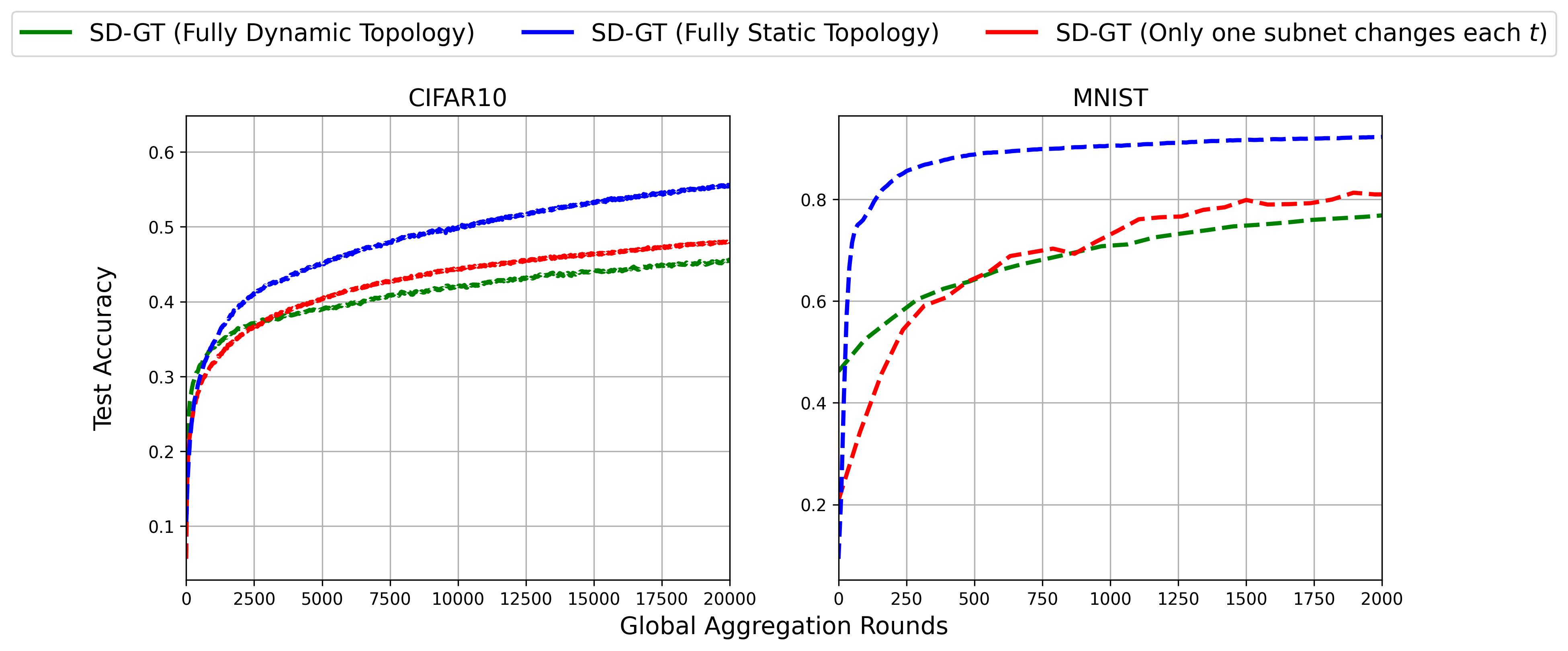}
    \caption{\textcolor{black}{Experimental results on CIFAR10 and MNIST considering the impact of subnet topology variation over global communication rounds $t$.}}
    \label{fig:time_varying}
\end{figure}

\vfill

{\color{black}
% \section{Analysis of {\tt SD-GT} Algorithm~\ref{alg:2}}
\section{Convergence Analysis of {\tt SD-GT} Algorithm~\ref{alg:2}}
\label{appen:ctrl_convergence}
In this section, we extend our convergence analysis to the {\tt DG-GT} adaptive control algorithm, for the non-convex loss case. Given time varying local update iterations $K_t$ and $\beta_s^t$, we can first derive the time-varying version of Lemma \ref{lem1}: 
\begin{equation}
        \begin{aligned}
            \frac{1}{n}\sum_{i=1}^n\mathbb{E}\|x_i^{t, k} - \overline{x}_g^{t}\|^2=& \frac{1}{n}\sum_{i=1}^n\mathbb{E}\left\|\sum_{j \in N^{in}_i}w_{ij}\bigg(x_j^{t, k-1} - \gamma (\nabla f_j(x^{t,k-1}, \xi_j^{t, k-1}) + y_j^t + z_j^t)]\bigg) - \overline{x}_g^{t}\right\|^2\\
            \leq& (1 + \frac{1}{(K_t)-1} + 4K_t\gamma^2L^2)\frac{1}{n}\sum_{i=1}^n\mathbb{E}\|x_i^{t,k-1} - \overline{x}_g^t\|^2+ 8K_t\gamma^2\mathcal{Y}_t+ 8K_t\gamma^2 \mathcal{Z}_t + 2K_t\gamma^2\mathbb{E}\|\nabla f(\overline{x}_g^t)\|^2 \\ & \qquad \qquad \qquad \qquad + \gamma^2 \sigma^2.
        \end{aligned}
    \end{equation}
Using the same unfolding process as Lemma~\ref{lem1}, we get:
\begin{align}
    \textstyle\Delta_t
        \leq&     \textstyle3(1 - p^t)K_t\Gamma_t
        + 24(K_t)^3\gamma^2\mathcal{Y}_t   \textstyle+ 24(K_t)^3\gamma^2\mathcal{Z}_t + 6(K_t)^3\gamma^2\mathbb{E}\|\nabla f(\overline{x}_\mathrm{g} ^t)\|^2 + 3(K_t)^2\gamma^2 \sigma^2.
\end{align}

Similarly, for Lemmas~\ref{lem2},~\ref{lem3},~\ref{lem4}, we can unfold the following iterations:
\begin{itemize}
    \item (Lemma~\ref{lem2})
    \begin{equation}
        \begin{aligned}
            n\mathcal{Y}_t =& \mathbb{E}\|Y^t + \nabla f(\overline{x}_g^t)(J_c - J)\|_F^2\\
            =&\mathbb{E}\bigg\| \bigg(Y^{t-1} + \nabla f(\overline{x}_g^{t-1})(J_c - J)\bigg)B+ \bigg(\frac{1}{K^{t-1}}\sum_{k=1}^{K^{t-1}}\nabla F(x^{t-1,k}, \xi^{t-1,k}) - \nabla F(\overline{x}_g^{t-1})\bigg)(J - J_c)B'\\
            &+ \bigg(\nabla F(\overline{x}_g^t) - \nabla F(\overline{x}_g^{t-1})\bigg)(J_c - J)\bigg\|_F^2\\
            \leq&\mathbb{E}\bigg\| \bigg(Y^{t-1} + \nabla f(\overline{x}_g^{t-1})(J_c - J)\bigg)B+ \bigg(\frac{1}{K^{t-1}}\sum_{k=1}^{K^{t-1}}\nabla F(x^{t-1,k}) - \nabla F(\overline{x}_g^{t-1})\bigg)(J - J_c)B'\\
            &+ \bigg(\nabla F(\overline{x}_g^t) - \nabla F(\overline{x}_g^{t-1})\bigg)(J_c - J)\bigg\|_F^2 + n\frac{\sigma^2}{K^{t-1}}\\
            \leq& (1 - \frac{p^t}{2})n\mathcal{Y}_{t-1} + \frac{6}{p^t}\bigg(\frac{nL^2}{K^{t-1}}\Delta_{t-1} + 2\gamma^2nL^4K^{t-1}\Delta_{t-1} + 2\gamma^2L^2(K^{t-1})^2n\mathbb{E}\|\nabla f(\overline{x}_g^{t-1})\|^2 \bigg)+\frac{6\gamma^2L^2K^{t-1}n}{p^t}\sigma^2 \\ & \qquad\qquad\qquad\qquad +  n\frac{\sigma^2}{K^{t-1}}.
        \end{aligned}
    \end{equation}
    \item (Lemma~\ref{lem3})
        \begin{equation}
        \begin{aligned}
            \mathcal{Z}_t = \frac{1}{n}& \mathbb{E}\|Z^t + \nabla F(\overline{x}_g^t)(I - J_c)\|^2_F\\
            \leq&(1-\frac{q}{2})\mathcal{Z}_{t-1} + \frac{6}{q}\bigg(\frac{4L^2}{K^{t-1}}\Delta_{t-1} + \|\nabla F(\overline{x}_g^t) - \nabla F(\overline{x}_g^{t-1})\|_F^2\bigg) + \frac{\sigma^2}{K^{t-1}}\\
            \leq&(1-\frac{q}{2})\mathcal{Z}_{t-1} + \frac{6}{q}\bigg(\frac{4L^2}{K^{t-1}}\Delta_{t-1} + 2K^{t-1}\gamma^2L^4\Delta_{t-1} + 2(K^{t-1})^2L^2\gamma^2\mathbb{E}\|\nabla f(\overline{x}_g^t)\|^2 + L^2K^{t-1}\gamma^2\sigma^2\bigg)+ \frac{\sigma^2}{K^{t-1}}.\\
            \overset{\gamma < \frac{1}{\sqrt{6}K^{t-1}L}}{\leq} &(1-\frac{q}{2})\mathcal{Z}_{t-1} + \frac{26L^2}{q}\Delta_{t-1} + \frac{12}{q}(K^{t-1})^2L^2\gamma^2\mathbb{E}\|\nabla f(\overline{x}_g^t)\|^2 + \frac{2\sigma^2}{qK^{t-1}}.\\
        \end{aligned}
    \end{equation}
    \item (Lemma~\ref{lem4})
    \begin{equation}
    \begin{aligned}
        \Gamma_t &= \frac{1}{n}\sum_{i=1}^n E\|x_i^{t-1,K+1} - \overline{x}_g^t\|^2\\
        &\leq \frac{1}{n}\sum_{i=1}^nE\left\|x_i^{t-1,K+1} - \overline{x}_g^{t-1} + \gamma\sum_{k=1}^{K^{t-1}}\frac{1}{n}\sum_{i=1}^n \nabla f_i(x_i^{t-1,k}, \xi_i^{t-1,k})\right\|^2\\
        &= \frac{1}{n}\sum_{i=1}^nE\left\|x_i^{t-1,K+1} - \overline{x}_g^{t-1} + \gamma\sum_{k=1}^{K^{t-1}}\frac{1}{n}\sum_{i=1}^n \nabla f_i(x_i^{t-1,k}, \xi_i^{t-1,k}) - \gamma \sum_{k=1}^{K^{t-1}}\frac{1}{n}\sum_{i=1}^n \nabla f_i(\overline{x}_g^{t-1}) + \gamma \sum_{k=1}^{K^{t-1}}\frac{1}{n}\sum_{i=1}^n \nabla f_i(\overline{x}_g^{t-1})\right\|^2\\
    &\leq (1 - \frac{p^t}{2})\Gamma_{t-1} + \frac{12}{p^t}\gamma^2K^{t-1}L^2\Delta_{t-1} + \frac{12}{p^t}\gamma^2(K^{t-1})^2\mathcal{Y}_{t-1} + \frac{12}{p^t}\gamma^2(K^{t-1})^2\mathcal{Z}_{t-1} + \frac{12}{p^t}\gamma^2(K^{t-1})^2\|\nabla f(\overline{x}_g^{t-1})\|^2 \\ & \qquad\qquad\qquad\qquad+ K^{t-1}\gamma^2\sigma^2.
    \end{aligned}
\end{equation}
\end{itemize}

Combining these results using the same process as in Theorem~\ref{thm1}, and letting $K^{\min} = \min_t K_t$, $K^{\max} = \max_t K_t$, and $p^{\min} = \min_t p^t$, we get:

\begin{equation}
    \frac{1}{T} \sum_{t=1}^T \mathbb{E}\|\nabla f(\overline{x}_g^t)\|^2\leq \frac{1}{DTK^{\min}\gamma}(\mathbb{E}f(\overline{x}_\mathrm{g} ^1) - \mathbb{E}f(x^\star)) +\frac{D_5L^2}{D(p^{\min})^4q^2  K^{\min}}((K^{\max})^2\gamma^2)\sigma^2 + \frac{L}{2DnK^{\min}}(K^{\max}\gamma)\sigma^2.
\end{equation}

Finally, by choosing a constant step size
$$\gamma = \min\left\{\left(\frac{2\mathcal{E}_1 K^{\max}n}{L\sigma^2 T}\right)^{1/2}, \left(\frac{2\mathcal{E}_1 (p^{\min})^4q^2 K^{\max}}{L\sigma^2 T}\right)^{1/3}, \frac{(p^{\min})^2q^2}{945 K^{\max}L}\right\},$$ {\tt SD-GT} obtains the following rate:
\begin{align}
    &\frac{1}{T}\sum_{t=1}^T\mathbb{E}\|\nabla f(\overline{x}_\mathrm{g} ^t)\|^2 
    = \mathcal{O}\bigg(\sqrt{\frac{\mathcal{E}_1\sigma^2L}{nTK^{\min}}} + (\frac{\mathcal{E}_1L\sigma}{\sqrt{K^{\min}}T(p^{\min})^2q})^{\frac{2}{3}} + \frac{\mathcal{E}_1L}{T(p^{\min})^2q}\bigg).
\end{align}

Thus we have the following theorem:
\begin{theorem}
\label{thm:SDGT_w_control_rate}
    Under Assumptions \ref{asmp1}, \ref{asmp2}, and \ref{assmp3}, let $\beta_{s} = \frac{m_{s} - h_{s}}{m_{s}}$ be the ratio of unsampled clients from each subnet. Define $p^t = \min(1 - \beta_{1}^2 , \ldots,1 - \beta_{S}^2)\in (0,1]$, $p^{\min} = \min_t p^t$, $q = \min(\rho_{1}, \ldots, \rho_{S})\in (0,1]$, and the function value optimality gap for time $t$ as $\mathcal{E}_t = \mathbb{E}f(\overline{x}_\mathrm{g} ^t) - f(x^\star)$. Then, for a constant step size satisfying $$\gamma = \min\left\{\left(\frac{2\mathcal{E}_1 K^{\max}n}{L\sigma^2 T}\right)^{1/2}, \left(\frac{2\mathcal{E}_1 (p^{\min})^4q^2 K^{\max}}{L\sigma^2 T}\right)^{1/3}, \frac{(p^{\min})^2q^2}{945 K^{\max}L}\right\},$$ the {\tt SD-GT} algorithm with dynamic control obtains the following rate:
\begin{align}
    &\frac{1}{T}\sum_{t=1}^T\mathbb{E}\|\nabla f(\overline{x}_\mathrm{g} ^t)\|^2 
    = \mathcal{O}\bigg(\sqrt{\frac{\mathcal{E}_1\sigma^2L}{nTK^{\min}}} + (\frac{\mathcal{E}_1L\sigma}{\sqrt{K^{\min}}T(p^{\min})^2q})^{\frac{2}{3}} + \frac{\mathcal{E}_1L}{T(p^{\min})^2q}\bigg).
\end{align}
\end{theorem}

% \subsection{Bounds on Control Parameter $K_t$}
% \label{appendix:bounded_ctrl_K}
To further understand the optimizer's effect on this time-varying rate of {\tt SD-GT}, we can obtain upper and lower bounds on the number of D2D rounds $K_t$. 
\begin{theorem}
    The dynamic D2D rounds $K_t$ derived from Algorithm~\ref{alg:2} for each global iteration $t$ can be bounded by:
    \begin{align}
    \min\Bigg( 
        &\left(\frac{\tfrac{1}{2} \sqrt{\lambda_2 \hat{H}_t} + \tfrac{2}{3}\big(\lambda_2 (\max_s m_s^2)\hat{H}_t\big)^{\tfrac{2}{3}}}{\lambda_3 \sum_{s=1}^S E^{D2D}_{s}}\right)^{\tfrac{2}{3}}, \;
        \left(\frac{\tfrac{1}{2} \sqrt{\lambda_2 \hat{H}_t} + \tfrac{2}{3}\big(\lambda_2 (\max_s m_s^2)\hat{H}_t\big)^{\tfrac{2}{3}}}{\lambda_3 \sum_{s=1}^S E^{D2D}_{s}}\right)^{\tfrac{3}{5}}
        \Bigg) \\
        &\leq K_t \leq \max\Bigg(
        \left(\frac{\tfrac{1}{2} \sqrt{\lambda_2 \hat{H}_t} + \tfrac{2}{3}\left(\lambda_2\hat{H}_t\right)^{\tfrac{2}{3}}}{\lambda_3 \sum_{s=1}^S E^{D2D}_{s}}\right)^{\tfrac{2}{3}}, \;
        \left(\frac{\tfrac{1}{2} \sqrt{\lambda_2 \hat{H}_t} + \tfrac{2}{3}\left(\lambda_2\hat{H}_t\right)^{\tfrac{2}{3}}}{\lambda_3 \sum_{s=1}^S E^{D2D}_{s}}\right)^{\tfrac{3}{5}}
        \Bigg)
\end{align}
\end{theorem}

To derive this, we first take the derivative on the objective function w.r.t $K_t$:
\begin{align}
     &\frac{\partial}{\partial K}\sqrt{\frac{\lambda_2\hat{H}_t}{K}} + (\frac{\lambda_2\hat{H}_t}{Kp^2})^{\frac{2}{3}} 
    + \lambda_3 K\sum_{s=1}^S E^{D2D}_{s} = 0\\
    \Rightarrow & \frac{1}{2} \sqrt{\lambda_2 \hat{H}_t} \frac{1}{K^{\frac{3}{2}}} + \frac{2}{3}(\frac{\lambda_2\hat{H}_t}{p^2})^{\frac{2}{3}} \frac{1}{K^{\frac{5}{3}}} = \lambda_3 \sum_{s=1}^S E^{D2D}_{s}
\end{align}
Then, we can see that:
\begin{align}
    K^{\max} &\leq \max\Bigg(
        \left(\frac{\tfrac{1}{2} \sqrt{\lambda_2 \hat{H}_t} + \tfrac{2}{3}\left(\tfrac{\lambda_2\hat{H}_t}{p^2}\right)^{\tfrac{2}{3}}}{\lambda_3 \sum_{s=1}^S E^{D2D}_{s}}\right)^{\tfrac{2}{3}}, \;
        \left(\frac{\tfrac{1}{2} \sqrt{\lambda_2 \hat{H}_t} + \tfrac{2}{3}\left(\tfrac{\lambda_2\hat{H}_t}{p^2}\right)^{\tfrac{2}{3}}}{\lambda_3 \sum_{s=1}^S E^{D2D}_{s}}\right)^{\tfrac{3}{5}}
        \Bigg) \\
    &\leq \max\Bigg(
        \left(\frac{\tfrac{1}{2} \sqrt{\lambda_2 \hat{H}_t} + \tfrac{2}{3}\left(\lambda_2\hat{H}_t\right)^{\tfrac{2}{3}}}{\lambda_3 \sum_{s=1}^S E^{D2D}_{s}}\right)^{\tfrac{2}{3}}, \;
        \left(\frac{\tfrac{1}{2} \sqrt{\lambda_2 \hat{H}_t} + \tfrac{2}{3}\left(\lambda_2\hat{H}_t\right)^{\tfrac{2}{3}}}{\lambda_3 \sum_{s=1}^S E^{D2D}_{s}}\right)^{\tfrac{3}{5}}
        \Bigg)
\end{align}

and 
\begin{align}
    K^{\min} &\geq \min\Bigg( 
        \left(\frac{\tfrac{1}{2} \sqrt{\lambda_2 \hat{H}_t} + \tfrac{2}{3}\left(\tfrac{\lambda_2\hat{H}_t}{p^2}\right)^{\tfrac{2}{3}}}{\lambda_3 \sum_{s=1}^S E^{D2D}_{s}}\right)^{\tfrac{2}{3}}, \;
        \left(\frac{\tfrac{1}{2} \sqrt{\lambda_2 \hat{H}_t} + \tfrac{2}{3}\left(\tfrac{\lambda_2\hat{H}_t}{p^2}\right)^{\tfrac{2}{3}}}{\lambda_3 \sum_{s=1}^S E^{D2D}_{s}}\right)^{\tfrac{3}{5}}
        \Bigg) \\
    &\geq \min\Bigg( 
        \left(\frac{\tfrac{1}{2} \sqrt{\lambda_2 \hat{H}_t} + \tfrac{2}{3}\big(\lambda_2 (\max_s m_s^2)\hat{H}_t\big)^{\tfrac{2}{3}}}{\lambda_3 \sum_{s=1}^S E^{D2D}_{s}}\right)^{\tfrac{2}{3}}, \;
        \left(\frac{\tfrac{1}{2} \sqrt{\lambda_2 \hat{H}_t} + \tfrac{2}{3}\big(\lambda_2 (\max_s m_s^2)\hat{H}_t\big)^{\tfrac{2}{3}}}{\lambda_3 \sum_{s=1}^S E^{D2D}_{s}}\right)^{\tfrac{3}{5}}
        \Bigg)
\end{align}

We observe that both the upper and lower bounds for the control parameter $K_t$ have the same dependence on the D2D communication energy term. In particular, higher D2D costs drive the algorithm to select smaller values of $K_t$, prioritizing communication savings over performance optimization, while lower costs allow larger $K_t$ to improve accuracy. This aligns with the intuition behind the design of our control algorithm.

}

\end{document}